\title{Certified-Everlasting Quantum NIZK Proofs}
\date{}
\author{Nikhil Pappu \\
\href{mailto:nikpappu@pdx.edu}{\color{magenta}{\small{nikpappu@pdx.edu}}}}\affil{Portland State University}
\definecolor{darkyellow}{RGB}{184, 135, 11}
\newtheorem{theorem}{Theorem}[section]
\theoremstyle{definition}
\newtheorem{definition}[theorem]{Definition}
\theoremstyle{theorem}
\theoremstyle{theorem}
\newtheorem{lemma}[theorem]{Lemma}
\theoremstyle{theorem}
\newtheorem{corollary}[theorem]{Corollary}
\theoremstyle{theorem}
\newtheorem{remark}[theorem]{Remark}
\newcommand{\cL}{\mathcal{L}}
\newcommand{\NP}{\mathsf{NP}}
\newcommand{\QMA}{\mathsf{QMA}}
\newcommand{\secp}{\lambda}
\newcommand{\crs}{\mathsf{crs}}
\newcommand{\wit}{\omega}
\newcommand{\trho}{\widetilde{\rho}}
\newcommand{\qP}{\rho_{\Pro}}
\newcommand{\qV}{\rho_{\Ver}}
\newcommand{\tqV}{\widetilde{\rho}_{\Ver}}
\newcommand{\RL}{R_\cL}
\newcommand{\cert}{\mathsf{cert}}
\newcommand{\negl}{\mathsf{negl}(\secp)}
\newcommand{\nnegl}{\mathsf{non}\text{-}\mathsf{negl}(\secp)}
\newcommand{\poly}{\mathsf{poly}}
\newcommand{\la}{\gets}
\newcommand{\ra}{\to}
\newcommand{\tx}{\widetilde{x}}
\newcommand{\tpsi}{\widetilde{\psi}}
\newcommand{\tphi}{\widetilde{\phi}}
\newcommand{\tsigma}{\widetilde{\sigma}}
\newcommand{\tpi}{\widetilde{\pi}}
\newcommand{\seteq}{\coloneq}
\newcommand{\bit}{\{0,1\}}
\newcommand{\Exp}{\mathsf{Exp}}
\newcommand{\sPi}{\Pi}
\newcommand{\tPi}{\widetilde{\Pi}}
\newcommand{\com}{\mathsf{com}}
\newcommand{\open}{\mathsf{open}}
\newcommand{\nizk}{\mathsf{nizk}}
\newcommand{\sfs}{\mathsf{s}}
\newcommand{\D}{\mathsf{D}}
\newcommand{\td}{\mathsf{td}}
\newcommand{\GenBits}{\mathsf{GenBits}}
\newcommand{\Verify}{\mathsf{Ver}}
\newcommand{\cC}{\mathcal{COM}(\secp)}
\newcommand{\Open}{\mathsf{Open}}
\newcommand{\HBG}{\mathsf{HBG}}
\newcommand{\bg}{\mathsf{bg}}
\newcommand{\hb}{\mathsf{hb}}
\newcommand{\op}{\mathsf{op}}
\newcommand{\cezk}{\mathsf{cezk}}
\newcommand{\zk}{\mathsf{zk}}
\newcommand{\qreg}[1]{\ensuremath{\textcolor{gray}{\mathrm{#1}}}}
\newcommand{\inn}{\mathsf{in}}
\newcommand{\out}{\mathsf{out}}
\newcommand{\ct}{\mathsf{ct}}
\newcommand{\msg}{\mathsf{m}}
\newcommand{\TD}{\mathsf{TD}}
\newcommand{\pad}{\mathsf{pad}}
\newcommand{\tpad}{\mathsf{\widetilde{\pad}}}
\newcommand{\Sone}{\mathsf{S}_1}
\newcommand{\Stwo}{\mathsf{S}_2}
\newcommand{\sS}{\mathsf{S}}
\newcommand{\sz}{\mathsf{z}}
\newcommand{\aux}{\mathsf{aux}}
\newcommand{\tsz}{\widetilde{\mathsf{z}}}
\newcommand{\Test}{\mathsf{Test}}
\newcommand{\chk}{\mathsf{Check}}
\newcommand{\Rout}{\mathsf{R}_\mathsf{out}}
\newcommand{\sft}{\mathsf{t}}
\newcommand{\F}{\mathsf{F}}
\newcommand{\key}{\mathsf{k}}
\newcommand{\Adv}{\mathsf{Adv}}
\newcommand{\qA}{\mathcal{A}}
\newcommand{\qD}{\mathcal{D}}
\newcommand{\qR}{\mathcal{R}}
\newcommand{\qB}{\mathcal{B}}
\newcommand{\sD}{\mathsf{D}}
\newcommand{\rhoA}{\rho_{\qreg{A}}}
\newcommand{\rhoB}{\rho_{\qreg{B}}}
\newcommand{\rhoAB}{\rho_{\qreg{A}\otimes\qreg{B}}}
\newcommand{\cZ}{\mathcal{Z}}
\newcommand{\cX}{\mathcal{X}}
\newcommand{\cY}{\mathcal{Y}}
\newcommand{\cO}{\mathcal{O}}
\newcommand{\Hyb}{\mathsf{Hyb}}
\newcommand{\ttheta}{\widetilde{\theta}}
\newcommand{\Setup}{\mathsf{Setup}}
\newcommand{\Pro}{\mathsf{P}}
\newcommand{\tPro}{\widetilde{\mathsf{P}}}
\newcommand{\Ver}{\mathsf{V}}
\newcommand{\tVer}{\widetilde{\mathsf{V}}}
\newcommand{\Del}{\mathsf{Del}}
\newcommand{\DelVrfy}{\mathsf{Cert}}
\newcommand{\Sim}{\mathsf{Sim}}
\newcommand{\prfk}{\mathsf{prfk}}
\newcommand{\cH}{\mathcal{H}}
\newcommand{\sig}{\mathsf{sig}}
\newcommand{\tsig}{\widetilde{\mathsf{sig}}}
\newcommand{\Tr}{\mathsf{Tr}}
\newcommand{\norm}[1]{\left\lVert#1\right\rVert}
\newcommand{\ketbra}[2]{\mathinner{|{#1}\rangle\,\langle{#2}|}}
\begin{document}
  \maketitle

\begin{abstract}
We study non-interactive zero-knowledge proofs (NIZKs) for $\NP$
satisfying:
1) statistical soundness, 2) computational
zero-knowledge and 3) certified-everlasting zero-knowledge (CE-ZK).
The CE-ZK property allows a verifier of a quantum proof to revoke
the proof in a way that can be checked (certified) by the prover. Conditioned on
successful certification, the verifier's state can be
efficiently simulated with only the statement, in a \emph{statistically
indistinguishable} way. Our contributions regarding these
certified-everlasting NIZKs (CE-NIZKs) are as follows:

\begin{itemize}
\item We identify a barrier to obtaining CE-NIZKs in the CRS model via
generalizations of known interactive zero-knowledge proofs that satisfy CE-ZK.

\item We circumvent this by constructing CE-NIZK 
from black-box use of NIZK for $\NP$ satisfying certain
properties, along with OWFs. As a result, we obtain
CE-NIZKs for $\NP$ in the CRS model, based on polynomial hardness of the learning with errors
(LWE) assumption.

\item In addition, we observe that the aforementioned barrier does not
apply to the shared EPR model. We leverage this fact to construct a CE-NIZK for
$\NP$ in this model based on any statistical binding hidden-bits
generator, which can be based on LWE. The only quantum computation
in this protocol involves single-qubit measurements of the shared EPR pairs.
\end{itemize}
\end{abstract}

\newpage

{\hypersetup{linkbordercolor=black}
\hypersetup{linkcolor=black}
\tableofcontents
}

\newpage


\section{Introduction}\label{sec:intro}

\subsection{Background}

A non-interactive zero-knowledge proof (NIZK) \cite{STOC:BluFelMic88} allows a
\emph{prover} to demonstrate the validity of a statement to a
\emph{verifier} with a \emph{single} message, without revealing
any other information. As an example, if Alice knows a solution to a
Sudoku puzzle, she can convince Bob that the puzzle is indeed
solvable, without leaking the solution itself. Crucially, if the
puzzle does not actually have a solution, then a cheating Alice should not be
able to fool Bob, a guarantee formally known as \emph{soundness}.  On
the other hand, the inability of a cheating Bob to learn any additional
information is referred to as \emph{zero-knowledge}. This latter
property is formalized by the existence of an efficient algorithm
called a \emph{simulator}, that can recreate Bob's view using the statement
alone. Intuitively, it captures that whatever Bob learns
from a proof, he can compute efficiently without it.

NIZKs have been a cornerstone of cryptography with widespread
theoretical and practical impact. For example,
they play an important role in constructions of CCA-secure PKE 
\cite{STOC:NaoYun90}, digital signatures \cite{C:BelGol89},
ZAPs \cite{FOCS:DwoNao00},
round-efficient MPC \cite{EC:AJLTVW12},
and functional encryption \cite{FOCS:GGHRSW13} to name a few. On the
practical front, they have been instrumental in designing anonymous credentials
\cite{belenkiy2009randomizable}, group signatures \cite{ACNS:BCCGG16},
cryptocurrencies \cite{sasson2014zerocash} and verifiable computation
\cite{C:GenGenPar10}.

\paragraph{Setup Models.} It is well-known that NIZKs suffer a
drawback compared to their multi-round (interactive) counterparts.
Specifically, they are impossible to obtain in the plain model
\cite{JC:GolOre94}. Fortunately, this can be circumvented in a fairly
realistic
setup model called the common reference string (CRS) model. Here, a
trusted party samples a random (possibly non-uniform) public string
before the protocol. Other commonly studied setup models include the designated verifier
\cite{EC:QuaRotWic19} and random oracle models \cite{CCS:BelRog93}.
In the quantum world, recent works \cite{AC:MorYam22,EC:ABKK23} have
considered two-party protocols where the parties share different
registers of an entangled quantum state. These are sometimes known to
provide stronger guarantees than the CRS model \cite{C:BarKhuSri23}.
Recently, more complex yet localized quantum setups have also
garnered attention in two-party protocols \cite{TCC:AnaGulLin24,C:MorNehYam24}.

\paragraph{Unbounded Adversaries.} Due to their widespread use cases,
NIZKs satisfying various security properties have been explored. Of
these, perhaps the most important characterization is in regards to
whether security holds against an unbounded adversary, or only a
computationally-bounded one. For instance, \emph{soundness} typically
only requires that an \emph{efficient} malicious prover cannot
convince the verifier of a \emph{false} statement. Likewise, the
standard notion of \emph{zero-knowledge} requires that an
\emph{efficient}
verifier learns nothing
from the transcript. However, one can consider upgrades of these in the form of \emph{statistical soundness} and
\emph{statistical zero-knowledge}, which provide security against an
unbounded prover and an unbounded distinguisher (equivalently, the
verifier in the NIZK case) respectively. Hence, many NIZKs in the
literature satisfy these stronger properties
\cite{STOC:BluFelMic88,FLS90,JC:GolOre94,C:GroOstSah06,STOC:SahWat14,TCC:BitPan15,STOC:CCHLRR19,C:PeiShi19,EC:WatWeeWu25}.

\paragraph{Statistical Soundness or Zero-Knowledge?.} Although
statistical variants of soundness and zero-knowledge are achievable
from standard assumptions, it is unlikely to obtain both
simultaneously, except for a
small subclass of $\NP$ languages \cite{PS05}. This is the case
even for interactive-protocols \cite{STOC:Fortnow87}, with evidence
that quantum protocols do not help either \cite{MW18}. Consequently, one
has to choose which of these two properties to give up in favor of the
other, depending on the application. Typically, statistical zero-knowledge
is preferred as soundness only needs to hold during the online phase, while zero-knowledge can be compromised anytime in the
future to leak sensitive information. Still, this is not always
ideal. Consider a large corporation (prover) that might have various means to
cut down the time to break certain computational assumptions, with the
verifier being an average user. In this case, a computational bounded
assumption on the verifier could be more reasonable. A statistically
sound NIZK can also remove the need to refresh the CRS
periodically in such a setting. However, this introduces
other issues, such as the verifier being able to sell the data
to a powerful rival corporation.

\paragraph{A Quantum Compromise.} Such a
tradeoff seems inevitable in the classical setting, but previous works
have proposed a novel compromise using quantum resources, in the
context of interactive zero-knowledge \cite{C:HMNY22,C:BarKhu23}. In
particular, the prover and verifier engage in a statistically-sound
quantum protocol, after which the verifier is asked to return a
\emph{certificate of deletion}. Then, an everlasting
guarantee akin to statistical zero-knowledge holds, given that the
certificate is verified to be valid. Note that this only makes sense
in a quantum world due to the no-cloning theorem, as the verifier can
retain a classical proof indefinitely.

Observe also that the requirement to
delete the proof does not interfere with verification. This is because
verification can be performed first without disturbing the state (by
gentle measurement), and the proof can be deleted having served its
purpose. Another interesting aspect is that the verifier can
immediately verify and revoke the proof, which can then be checked.
As a result, the parties need not maintain the
state for long unlike other primitives with
certified-deletion. This can be quite useful due to the difficulty of
maintaining quantum states.

The aforementioned works then showed
that such proofs can be obtained for all languages in $\QMA$ (the quantum
generalization of $\NP$) assuming only OWFs, or even weaker
quantum-assumptions such as the existence of pseudo-random states
\cite{C:JiLiuSon18}.
Naturally, one can
consider the non-interactive analogue where the prover sends a single
message to the verifier, and the verifier sends back a deletion
certificate. While one might argue this is no better than a two-round
protocol, the deployed applications wouldn't be slowed down by the
second message, which could be batched and sent at a later point.
Hence, at the expense of using quantum resources and relying on the
ability to penalize a non-confirming verifier, there exists the
possibility of obtaining statistically-sound NIZKs with an everlasting
zero-knowledge guarantee. Despite the hope, and
the construction of many advanced primitives with
certified-everlasting forms of security (See Section
\ref{sec:rel-work}), CE-NIZKs are yet to be realized. We explore this
possibility in detail.

\subsection{Contributions}

\begin{enumerate}
\item \emph{Definitions and Impossibility.} We present
definitions of certified-everlasting NIZK proofs (CE-NIZKs) in the CRS
and shared-EPR models to begin with. Then, we examine natural
approaches to constructing CE-NIZKs in the CRS model. Specifically, we
identify a class of protocols we call \emph{deletion-resistant}
CE-NIZKs. At a high level, these admit a deletion algorithm that
splits the proof state into two registers, one that passes
verification of deletion, and another that passes proof verification.
We then rule out the existence of
such protocols by showing that they imply NIZKs that are both
statistically-sound and statistically zero-knowledge. We also argue
that non-interactive counterparts of known interactive
protocols fall into this category.

\item \emph{Feasibility in the CRS Model.} As a result of this barrier, we 
provide a unique approach to constructing CE-NIZKs in the CRS model,
where we employ two-levels of NIZK proofs that are generated in
quantum-superposition. In order to prove security, we employ the
two-slot technique \cite{STOC:NaoYun90} and the OR proof strategy
\cite{FLS90}, and reduce to a certified-deletion
theorem of BB84 states \cite{C:BarKhu23}. This allows us to construct a
CE-NIZK for $\NP$ based on black-box use of a quantum-secure NIZK
proof for $\NP $\footnote{We
require two special notions from the NIZK: quantum-secure adaptive
zero-knowledge and post-quantum non-adaptive zero-knowledge with statistical CRS
indistinguishability which are satisfied by the 
LWE-based NIZKs of \cite{STOC:Waters24,EC:WatWeeWu25}.}
(Definition \ref{def:nizk}), along with just post-quantum OWFs.
We observe that
quantum-secure NIZKs (different from post-quantum ones) are implied by
LWE due to the construction of \cite{EC:WatWeeWu25}. Hence, we
obtain our result based on LWE, matching
the state-of-the-art assumption for post-quantum NIZK.

\item \emph{Bypassing the Impossibility with Shared Entanglement.} We
then explore a natural quantum setup model where the prover and
verifier share entanglement in the form of halves of EPR pairs.
Surprisingly, we find that the aforementioned impossibility no longer
applies to this setting. As a consequence, we show that a
generalization of the hidden-bits compiler of \cite{EC:QuaRotWic19}
suffices to obtain CE-NIZKs for $\NP$ in this setting. In particular, we obtain
CE-NIZKs based on any statistically binding hidden-bits generator
\cite{EC:QuaRotWic19}, which can be based on LWE
\cite{EC:WatWeeWu25}. Although the assumption is the same as our CRS
protocol, the protocol is highly efficient in regards to the quantum
operations involved. For comparison, our CRS protocol generates and
verifies proofs in superposition of a highly entangled state. On the
other hand, this protocol only requires single-qubit measurements of
the EPR halves in one of two basis, and uses classical computation and
communication otherwise.

\end{enumerate}

\subsection{Related Work}\label{sec:rel-work}

\paragraph{Certified-Everlasting Cryptography.}

The pioneering work of Broadbent and Islam \cite{TCC:BroIsl20} constructed
encryption with certified-deletion, where ciphertexts can be deleted
in a way that the plaintext becomes unrecoverable (by an unbounded
adversary) even if the secret-key is leaked later. This was followed
by a series of works on certified-everlasting security, which refers
to a bounded adversary that cannot break some 
guarantee in the future using unbounded computation, assuming it provides a valid
certificate during the computationally bounded stage. We
discuss these works briefly as follows.

The work of Hiroka, Morimae, Nishimaki and Yamakawa \cite{C:HMNY22}
constructed a certified-everlasting statistical-binding commitment
scheme and used it to obtain a certified-everlasting zero-knowledge proof
system for $\QMA$. Later, the work of Bartusek and Khurana
\cite{C:BarKhu23} introduced an
elegant compiler for obtaining primitives with certified-deletion.
Through it, they obtained PKE, ABE, FHE, commitments (and ZK via
\cite{C:HMNY22}), witness encryption, timed-release encryption, MPC
protocols and secret-sharing schemes with certified-everlasting
security. The work of Hiroka et al. \cite{EC:HKMNPY24} constructed public-key
functional encryption (and other variants), compute-and-compare
obfuscation and garbled circuits with such security guarantees.
Advanced secret-sharing schemes with certified-everlasting security
were studied in the work of Bartusek and Raizes \cite{C:BarRai24}.
The work of Champion,
Kitagawa, Nishimaki and Yamakawa \cite{CKNY25} constructed a certified-everlasting
variant of untelegraphable encryption, a relaxation of unclonable
encryption. A notion of differing inputs obfuscation (dIO) with
certified-everlasting security was introduced and constructed in the
work of Bartusek et al. \cite{EC:BGKMRR24}. Their construction
provides the desirable property of publicly-verifiable deletion,
which is studied in several works \cite{TCC:KitNisYam23,BKP23,BKMPW23,KS25}.

\paragraph{Quantum-Enabled NIZKs.} In recent years, several works have
studied NIZKs with interesting properties that are only feasible in a
quantum world. However, these are quite different from our notion of
certified-everlasting NIZK. In particular, many of these works construct
schemes that prevent copying attacks, i.e., they ensure that the
recipient of a proof cannot convince several other verifiers, without
itself knowing a witness. These ``unclonable'' NIZKs and variants were
studied in the concurrent works of Jawale and Khurana
\cite{AC:JawKhu24} and
Goyal, Malavolta and Raizes \cite{TCC:GoyMalRai24}. A related notion was
explored by the work of Abbaszadeh and Katz \cite{AC:AbbKat25} where the
recipient must generate a valid proof of deletion, and cannot convince
another verifier thereafter. While the unclonable variants were shown
to imply public-key quantum money, this certified-deletion variant was obtained
from standard assumptions.

The work of Çakan, Goyal and Raizes
\cite{EPRINT:CakGoyRai24}
studied NIZK with a stronger form of certified-deletion called
certified-deniability. Intuitively, this is a simulation-based notion
that guarantees that the state after deletion of the proof could be
computed without the proof. The work presents a feasibility for
this in the random oracle model (ROM), with evidence against its
existence in the plain model. Our simulation definition shares
some similarity to theirs, while additionally requiring statistical
indistinguishability. Crucially, their impossibility does not
contradict our feasibility in the CRS model, as our simulator has the
ability to simulate the CRS. In the certified-deniability context,
this ability would be restricted, similar to the work's 
restriction on programmability in the ROM.

Finally, the work of \cite{STOC:GLRRV25}
studies a notion of one-time security for NIZK and presents a
construction from iO and LWE. This notions allows an authority to
distribute quantum proving tokens to a prover. Then, the prover can
only convince a verifier of as many statements as the number of tokens
received. This notion is shown to imply public-key quantum money which
justifies the use of the strong assumption of iO. We emphasize that
all of these quantum NIZK notions are orthogonal to that of certified-everlasting
zero-knowledge which we study. For a detailed comparison of our
work with some of these works, see Section \ref{sec:to_other}.

\paragraph{Revocable Cryptography.} Revocable cryptography is closely
related to the notions discussed above in that it involves a
deletion (revocation) phase, after which certain
information/capability is lost. This was pioneered by the work of
Unruh \cite{EC:Unruh14} 
which introduced revocable time-released encryption. Revocable notions
for encryption/signatures were also studied in \cite{AMP25,MPY24}, and are
similar to the notions of certified-deletion in the sense that
secret data is being deleted. On the other hand, a long line of works
on secure software leasing \cite{EC:AnaLaP21,TCC:KitNisYam21,TCC:BJLPS21,CMP24} and secure key leasing \cite{AC:KitNis22,TCC:AnaPorVai23,EC:AKNYY23,AC:PWYZ24,TCC:AnaHuHua24,EC:CGJL25,EC:KitMorYam25,C:KitNisPap25,EPRINT:KitNisPap25b,EPRINT:KLYY25}
consider the following high-level premise. A user is provided a
quantum state that can be used to evaluate some software or
cryptographic functionality. Later, the user can provide a
deletion-certificate, after which it is guaranteed that they can no
longer evaluate the functionality. We remark that
almost all these schemes only consider a computationally-bounded
second-stage (post-deletion) adversary. This restriction is necessary in
several contexts like the leasing of decryption keys of a PKE scheme.


\section{Technical Overview}\label{sec:to}

\subsection{Defining Certified-Everlasting NIZKs}

We will begin by defining the syntax of a certified-everlasting
non-interactive zero-knowledge proof (CE-NIZK) (Definition
\ref{def:ce-nizk}). A CE-NIZK for a
language $\cL \in \NP$ is a tuple of four algorithms $(\Setup, \Pro,
\Ver, \DelVrfy)$ described as follows. The classical algorithm
$\Setup(1^\secp)$ outputs a common reference string $\crs$ as usual.
The quantum prover's algorithm $\Pro$ on input the CRS $\crs$, a statement $x \in
\cL$ and corresponding witness $\wit \in \RL(x)$, outputs a quantum
proof state $\sigma$, along with a (possibly classical) quantum
state $\qP$, which will later be used to validate returned
proofs. The quantum verifier's algorithm $\Ver$ on input $\crs$, $x$ and proof
$\sigma$ outputs a decision bit $b$ along with a quantum state $\qV$.
The quantum certification algorithm $\DelVrfy$
takes as input quantum states $\qP, \qV$ and outputs $\top$ (accept)
or $\bot$ (reject). The context is that the verifier would verify the
validity of the proof $\sigma$ to obtain $\qV$ (which would be close in trace-distance
due to completeness and gentle-measurement) and send it to the prover who checks
that the proof is revoked by running $\DelVrfy(\qP, \qV)$.
Note that when we refer to a CE-NIZK for $\NP$, we mean one that is a CE-NIZK
for an NP-complete language.

We require a CE-NIZK for $\NP$ to satisfy the notions of completeness,
statistical soundness and computational zero-knowledge which must also hold
for a standard NIZK proof for $\NP$. In addition, it must also
satisfy the notion of certified-everlasting zero-knowledge (CE-ZK).
Intuitively, the notion guarantees that the view of any QPT malicious
verifier $\Ver^*$ can be simulated in a statistically close manner
with just the statement $x$, conditioned on the fact that
certification is successful. More formally, for every QPT malicious verifier
$\Ver^*$, there must exist a QPT simulator $\Sim_{\Ver^*}$ such that the
following guarantee holds for every true statement $x$ and
corresponding witness $\wit$:

\[
\left[
\trho \approx_s \rho'
 \ :
\begin{array}{rl}
 &\crs \la \Setup(1^\secp)\\
 & (\sigma, \qP) \la \Pro(\crs, x, \wit)\\
 & \rho_{\qreg{A}\otimes\qreg{B}} \la \Ver^*(\crs, x, \sigma)\\
 & \textrm{If}\;\DelVrfy(\qP,\rho_{\qreg{A}}) = \top \;
 \textrm{then}\; \trho
 \seteq \rho_{\qreg{B}}.\\
 & \textrm{Else}\; \trho \seteq \bot.\\
 & \rho' \la \Sim_{\Ver^*}(x)
\end{array}
\right]
\]

Let us now explain this notion. First, a $\crs$ is sampled using
$\Setup$ and a real proof $\sigma$ and prover state $\qP$ are
computed. Then, the proof $\sigma$ is provided to the malicious
verifier, who outputs a bi-partite state $\rhoAB$. Then, the notion
sets a state $\trho$ to be $\rhoB$ (the residual state on register
$\qreg{B}$) only if the state $\rhoA$ on register $\qreg{A}$ passes
certification by $\DelVrfy(\qP, \rhoA)$. If not, $\trho$ is simply set to
$\bot$, thereby giving a distinguisher no additional information when
the certification is unsuccessful. Then, we require that the
simulator $\Sim_{\Ver^*}$ on input $x$ produces a simulated state
$\rho'$ that is close in trace distance to $\trho$ from the real
execution. Note that our protocols will come with a simulator that makes
black-box use of $\Ver^*$, but we define a non-black-box one here for
generality. 

\subsection{Non-Existence of Deletion-Resistant Protocols}\label{sec:del_res_imp}

Before we introduce our approach to realizing CE-NIZKs, we will
discuss some challenges in achieving them.
Specifically, we identify a natural class of CE-NIZKs that
are impossible to construct. The intuition is that these come with a
deletion algorithm that splits the proof state into two, the first which passes
prover certification, and a second which passes $\NP$ verification.
Consequently, this operation can be performed at
the prover's end before sending the second part to the verifier.
We observe
that these imply NIZKs which are both statistically sound and statistically
zero-knowledge which are unlikely to exist. We define this subclass of CE-NIZKs
formally as follows.

\begin{definition}[Deletion-Resistant CE-NIZKs] A non-interactive
protocol with the syntax of CE-NIZK (Definition \ref{def:ce-nizk}) is
said to be a deletion-resistant CE-NIZK if it satisfies
the CE-ZK property, along with the following properties. There exist
QPT algorithms $\Ver^*, \tVer$ such that the
following conditions hold for every $x \in \cL$, $\wit \in \RL(x)$ and
$\tx \notin \cL$:

\begin{flalign}
&\Pr\left[
\DelVrfy(\qP, \rhoA) \ra
\top
:
\begin{array}{rl}
 &\crs \la \Setup(1^\secp)\\
 & (\sigma, \rho_\Pro) \la \Pro(\crs, x, \omega)\\
 & \rhoAB \la \Ver^*(\sigma)
\end{array}
\right] \ge 1/\poly(\secp)\\
&\Pr\left[
\tVer \big(\crs, x, \rhoB) \ra 0 \; \Big\lvert \;
\DelVrfy(\qP, \rhoA) \ra
\top
 :
\begin{array}{rl}
 &\crs \la \Setup(1^\secp)\\
 & (\sigma, \rho_\Pro) \la \Pro(\crs, x, \omega)\\
 & \rhoAB \la \Ver^*(\sigma)
\end{array}
\right] \le \negl\\
&\Pr_{\crs \la \Setup(1^\secp)}\left[
    \exists \tsigma : \widetilde{\Ver}(\crs, \widetilde{x},
    \tsigma) \ra 1
\right] \le \negl
\end{flalign}

\end{definition}

\begin{theorem}\label{thm:del_res_imp}
If there exists a deletion-resistant CE-NIZK for a language $\cL$,
then there exists a quantum NIZK for $\cL$ satisfying
statistical soundness and statistical zero-knowledge.
\end{theorem}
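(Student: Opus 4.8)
The plan is to recycle the deletion algorithm $\Ver^*$ as the honest prover of a new protocol: the new prover generates a CE-NIZK proof, applies $\Ver^*$, keeps only the $\qreg{B}$-register, and sends it on; by the first deletion-resistance condition this register passes $\tVer$, and by the second condition $\tVer$ is already statistically sound. Concretely, define a quantum NIZK $(\Setup', \Pro', \Ver')$ by: $\Setup' := \Setup$; $\Pro'(\crs, x, \wit)$ runs $(\sigma, \qP) \la \Pro(\crs, x, \wit)$ and then $\rhoAB \la \Ver^*(\sigma)$, discards $\qP$ and the $\qreg{A}$-register, and outputs the $\qreg{B}$-register as the proof $\tpi$; and $\Ver'(\crs, x, \tpi)$ runs $\tVer(\crs, x, \tpi)$ and outputs its decision bit. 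This is a single-message protocol with a classical setup, a QPT prover, and a QPT verifier, as required.

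Completeness and statistical soundness are then read directly off the two deletion-resistance conditions. For $x \in \cL$ with $\wit \in \RL(x)$, the first condition gives $\Pr[\tVer(\crs, x, \rhoB)[1] \ra 1] \ge \Pr[\tVer(\crs, x, \rhoB)[1] \ra 1 \land \DelVrfy(\qP, \rhoA) \ra \top] \ge 1 - \negl$, which is exactly completeness of $\Ver'$ on honestly generated proofs. For $\tx \notin \cL$, the second condition says that with probability $\ge 1 - \negl$ over $\crs \la \Setup(1^\secp)$ there is no state $\tsigma$ accepted by $\tVer(\crs, \tx, \cdot)$; hence no malicious prover, even an unbounded quantum one, can make $\Ver'$ accept a false statement except with negligible probability, i.e.\ the protocol is statistically sound.

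For statistical zero-knowledge, the idea is to feed an augmented version of the deletion algorithm into the CE-ZK guarantee. Define a QPT malicious verifier $\Ver^{*}_{\mathsf{aug}}$ against the original CE-NIZK that, on input $(\crs, x, \sigma)$, runs $\rhoAB \la \Ver^*(\sigma)$ and outputs the bipartite state with the $\qreg{A}$-register of $\rhoAB$ on register $\qreg{A}$ and the pair $(\crs, \rhoB)$ on register $\qreg{B}$ — attaching a classical copy of $\crs$ so that the simulated output also carries a (simulated) CRS. Applying CE-ZK to $\Ver^{*}_{\mathsf{aug}}$ yields a QPT simulator $\Sim_{\mathsf{aug}}$ with $\trho \approx_s \Sim_{\mathsf{aug}}(x)$, where $\trho = (\crs, \rhoB)$ if $\DelVrfy(\qP, \rhoA) \ra \top$ and $\trho = \bot$ otherwise. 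But the first deletion-resistance condition forces $\DelVrfy(\qP, \rhoA) \ra \top$ with probability $\ge 1 - \negl$, so $\trho$ is within negligible trace distance of $(\crs, \rhoB)$, which is precisely the honest distribution of (CRS, proof) in the new protocol on a true statement $x$ with witness $\wit$. Chaining the two bounds gives $(\crs, \tpi) \approx_s \Sim_{\mathsf{aug}}(x)$, so $\Sim_{\mathsf{aug}}$ witnesses statistical zero-knowledge; if the target notion also quantifies over a malicious verifier $\qA$, one simply runs $\qA$ on $(\crs, x, \rhoB)$ inside $\Ver^{*}_{\mathsf{aug}}$ before outputting, and the same argument applies verbatim.

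I expect the only delicate points to be bookkeeping rather than genuine obstacles. First, one must pack into $\Ver^{*}_{\mathsf{aug}}$ everything the target zero-knowledge view is meant to contain (the CRS, and any malicious-verifier post-processing), so that the single output of the CE-ZK simulator already matches the entire view. Second, one must invoke the first deletion-resistance condition to ensure certification succeeds with overwhelming probability, since this is exactly what lets us discard the ``$\trho = \bot$'' branch and identify $\trho$ with the honest proof. Crucially, no strengthening beyond CE-ZK is needed: statistical closeness of states is distinguisher-independent, so the QPT-only CE-ZK guarantee already delivers statistical (not merely computational) zero-knowledge, and the soundness direction is unconditional because the second condition rules out accepting proofs for false statements outright.
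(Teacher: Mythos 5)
Your proposal is correct and follows essentially the same route as the paper: the same prover (run $\Pro$, apply $\Ver^*$, forward the $\qreg{B}$-register), completeness and statistical soundness read off the two deletion-resistance conditions, and statistical zero-knowledge obtained by invoking CE-ZK against $\Ver^*$ together with the fact that certification succeeds with overwhelming probability (your trace-distance step is exactly the paper's gentle-measurement/post-selection argument). Your extra bookkeeping of attaching the CRS via an augmented verifier is just a more explicit rendering of the same simulation claim.
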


\begin{proof}
Let $\sPi$ be a deletion-resistant CE-NIZK for a language $\cL$. This
means it has algorithms $(\Setup, \Pro, \Ver, \DelVrfy)$. Furthermore,
there exist QPT algorithms $\Ver^*, \tVer$ satisfying the conditions
of the theorem by assumption.  We now define the following prover
$\tPro$:

$\underline{\tPro(\crs, x, \wit)}$:
\begin{enumerate}
\item Compute $(\sigma, \qP) \la \Pro(\crs, x, \wit)$.
\item Compute $\rhoAB \la \Ver^*(\sigma)$.
\item If $\DelVrfy(\qP, \rhoA) = \top$, then output $\rhoB$.
Otherwise, repeat from Step 1.
\end{enumerate}

Observe now that the NIZK proof $\tPi \seteq (\Setup, \tPro, \tVer)$ is
statistically-sound due to the third condition of the theorem. Also,
completeness follows from the fact that $\tPro$ will obtain $\rhoAB$
such that $\DelVrfy(\qP, \rhoA) = \top$ with overwhelming probability
in polynomial time (due to the first condition of the theorem), and
the fact that $\tVer$ accepts $\rhoB$ from the second condition.  Finally,
observe that by viewing $\Ver^*$ as a malicious QPT verifier attacking
the CE-ZK property of $\Pi$, the state $\rhoB$ output by $\tPro$ can
be statistically simulated. This means that $\tPi$ satisfies
statistical zero-knowledge. \qed
\end{proof}

\begin{remark}
We remark that an analogous implication follows for protocols with
weaker certified-everlasting witness-privacy notions, such as
a certified-everlasting variant of non-interactive witness-indistinguishable (NIWI) proofs.
\end{remark}

\subsection{On Generalizations of Prior Work}\label{sec:gen}
We now recall previous works on CE-ZK proofs in the interactive
setting by Hiroka et al. \cite{C:HMNY22} and
Bartusek and Khurana \cite{C:BarKhu23}. The work of Hiroka et al. constructed
CE-ZK proofs for languages in $\QMA$ based on a new primitive they
introduced called commitments with
statistical binding and certified-everlasting hiding. The
certified-everlasting hiding property refers to the fact that after
deletion, even an unbounded receiver is unable to learn anything about
the contents of the commitment. While such commitments were
constructed in the random oracle model in their work, Bartusek and
Khurana constructed them from any statistical binding commitment
scheme in the plain model. Note that such statistically binding
commitments can be based on OWFs \cite{Nao91} and from potentially
weaker quantum-cryptographic assumptions \cite{C:MorYam22,C:AnaQiaYue22}.

We now argue that these protocols do not generalize easily for
constructing CE-NIZKs. At a high level, the CE-ZK proofs established
by these works rely on the following template:

\underline{Template for CE-ZK Proofs:}
\begin{itemize}
\item First, the prover sends some $n = \poly(\secp, |x|,
|w|)$ commitments $\{\ket{\com^i}\}_{i\in[n]}$ to the
verifier, generated using the certified-everlasting commitment scheme.
\item Then, the verifier sends a randomly chosen opening set $S \subset
[n]$ to the prover, along with deletion certificates
$\{\cert^i\}_{i\in[n]\setminus S}$ corresponding to the commitments
that are not to be opened.
\item Finally, the prover sends the openings of the
commitments $\{\open^i\}_{i\in S}$ for the set $S$. These openings are
then sufficient for the verifier to determine the validity of the
statement.
\end{itemize}

Note that while these works use classical certificates
$\{\cert^i\}_{i\in[n]\setminus S}$, the same ideas also apply if
these certificates were quantum. The main idea here is that since
the unopened commitments are deleted by the verifier, even an
unbounded distinguisher cannot tell apart the real and simulated
views. Naturally, one might expect that a
Fiat-Shamir based approach \cite{C:FiaSha86,STOC:CCHLRR19,C:PeiShi19} in the CRS model could be employed to turn these interactive
protocols into non-interactive ones.
However, we argue that such protocols would be deletion-resistant, and
hence run into Theorem \ref{thm:del_res_imp}. To see this, consider a CE-NIZK $\Pi =
(\Setup, \Pro, \Ver, \DelVrfy)$ with a similar template, where the
prover $\Pro$ determines the opening set $S$ by itself, by applying a
hash function to some classical part of the generated proof.
Since $\Ver$
would only need to check the commitments in $S$ as part of the
protocol, we can consider $\tVer \seteq \Ver$ and a malicious QPT
verifier $\Ver^*$ that splits the commitments in $S$ and the
commitments in $[n] \setminus S$ into two different registers.
Consequently, the state $\rhoAB$ produced by
$\Ver^*$ is such that $\rhoA$ would pass certification while $\rhoB$
would be accepted by $\Ver$. Notice that both verification and
certification could check the validity of $S$, but the classical
information that is hashed to produce $S$ can simply be cloned.
Moreover, designing $\Ver, \DelVrfy$ to
artificially deviate from the template doesn't help. For instance,
$\Ver$ clearly cannot check all the commitments, since the proof
system should satisfy computational zero-knowledge. Moreover, even if
$\DelVrfy$ requires all the commitments to be revoked, $\Ver^*$ can
easily clone the contents of the opened commitments.
We note that natural attempts at generalizing the hidden-bits
paradigm \cite{FLS90} to the CE-NIZK setting do not work, due to a
similar argument. Despite this barrier, we show next that CE-NIZKs are
indeed possible to construct in the CRS model from standard
assumptions.

\subsection{Feasibility in the CRS Model}

The high-level idea of our construction is to utilize an
unclonable quantum state that is entangled with NIZK proofs. The hope
is that this prevents the splitting attack discussed in the previous
section. We require several concrete ideas to realize this, which
we present as follows.
The main ingredient we require is a classical NIZK proof $\Pi_\nizk =
(\Setup_\nizk, \Pro_\nizk, \Ver_\nizk)$ for $\NP$. Our
construction utilizes this NIZK scheme at two-levels, with an ``inner''
proof that is provided in classically encrypted form, and ``outer''
proofs that are entangled with a BB84 (Wiesner) state. The inner proof
$\pi_\inn$ is the actual NIZK proof, generated as $\pi_\inn \la
\Pro_\nizk(\crs_\inn, x, \wit)$ for a statement-witness pair $(x,
\wit)$ and a CRS sampled as $\crs_\inn \la \Setup_\nizk(1^\secp)$. The
problem is that if $\pi_\inn$ were given out in the clear, then the
protocol cannot satisfy CE-ZK as $\Pi_\nizk$ is
statistically-sound. To remedy this, we sample $\ell$-many BB84 quantum
states of the form $\ket{y^i}^{\theta^i} \seteq
H^{\theta^i[1]}\ket{y^i[1]}
\otimes \ldots \otimes H^{\theta^i[\secp]}\ket{y^i[\secp]}$ where $y^i,
\theta^i \la \bit^\secp$, $y^i[j]$ denotes the $j$-th bit of
$y^i$, $H$ is the Hadamard transform, and $\ell$ is the size of
$\pi_\inn$. Consider now the $\ell\secp$-qubit BB84 state
$\ket{y}_{\qreg{R}}^\theta = \sum_\sz \alpha^\sz \ket{\sz}_{\qreg{R}}$ where $y \seteq y^1 \|
\ldots \| y^\ell$ and $\theta^1 \seteq \theta^1 \| \ldots \|
\theta^\ell$. Let us now define a function $f$ and a value $\pad$ such
that $\pad \seteq f(\theta, y) \seteq
\bigoplus_{j:\theta^1[j]=0}y^1[j] \| \ldots \|
\bigoplus_{j:\theta^\ell[j]=0}y^\ell[j]$. In other words, each of the
$\ell$-bits of $\pad$ are obtained by taking the XOR of the
computational basis positions of the corresponding $\secp$-size ``block'' of the
BB84 state $\ket{y}^\theta$. Then, instead of $\pi_\inn$ being
provided to the adversary directly, it is one-time pad encrypted as
$\pi_\inn \oplus \pad$. The purpose of this one-time
pad will be clear later, when we invoke a certified-deletion theorem
of BB84 states by \cite{C:BarKhu23} to ensure that $\pi_\inn$ is
information-theoretically removed from the verifier's view.
Note also that in actuality, we need to provide an OTP encryption
$\ct \seteq \pi_\inn \oplus \pad \oplus \key$ for $\key \la
\bit^\ell$ instead of just
$\pi_\inn \oplus \pad$ due to some technicalities in invoking the
certified-deletion theorem of \cite{C:BarKhu23}.

The important aspect here is that every term $\sz$ in the
superposition of the quantum state $\ket{y}^\theta_{\qreg{R}}$ can be used to compute the common
value $\pad$ using $\theta$ as $\pad = f(\theta, \sz)$. We leverage
this fact to produce the ``outer'' proofs $\pi_\out^\sz$ in
superposition as $\ket{\psi}_{\qreg{R}\otimes\qreg{P}} \seteq
\sum\limits_{\sz}\ket{\sz}_{\qreg{R}}\ket{\pi^\sz_\out}_{\qreg{P}}$.
In particular, each $\pi^\sz_\out$ proves the following $\NP$ statement $x^\sz$:

$$\underline{\text{Statement}\; x^\sz}: \;\;\exists \big(\theta,
\key\big) \; : \; \Ver_\nizk\Big(\crs_\inn, x, \ct
\oplus f(\theta, \sz) \oplus \key\Big) = 1$$

We observe that by proving the statement $x^\sz$, the proof
$\pi^\sz_\out$
also proves the validity of the statement $x$. We argue this using the
\emph{statistical adaptive} soundness property of $\Pi_\nizk$. First,
observe that completeness follows from the fact that $\ct \oplus
f(\theta, \sz) \oplus \key = \pi_\inn$ for each $\sz$ in the superposition
of $\ket{\psi}$, and $\Ver_\nizk(\crs_\inn, x, \pi_\inn) = 1$ follows
from the completeness of $\Pi_\nizk$. Now, consider the case when $x$
is false, which means there is no accepting proof $\tpi_\inn$ that
makes $\Ver_\nizk(\crs_\inn, x, \cdot)$ accept (whp over choice of
$\crs_\inn$). As a result, all of
the derived statements $x^\sz$ would also be false and hence would be
rejected whp by $\Ver_\nizk(\crs_\out, x^\sz, \cdot)$ where $\crs_\out
\la \Setup(1^\secp)$ is another part of the full CRS $\crs \seteq
\crs_\inn \| \crs_\out$. Observe that if $\Pi_\nizk$ were only
computationally sound (which we don't consider), then such an
argument doesn't make sense as there would exist accepting proofs for
false statements as well. Notice also that adaptive soundness
ensures that even if the statements $x^\sz$ are based on $\crs_\out$,
soundness still holds. Furthermore, computational zero-knowledge
follows due to the fact that $(\crs_\inn, \pi_\inn)$ can be replaced
with $(\crs_\Sim, \pi_\Sim)$ which are guaranteed by the
zero-knowledge of $\Pi_\nizk$. Let us now focus on the 
certified-everlasting zero-knowledge (CE-ZK) property.

The rough idea is to reduce to the aforementioned deletion theorem for
BB84 states by Bartusek and Khurana. Intuitively, the theorem allows
us to replace each bit of the plaintext $\pi_\inn$ inside the OTP encrypted
ciphertext $\ct$ with a bit of our choice in a statistically
indistinguishable way, conditioned on the fact that the BB84 state
$\ket{y}^\theta$ is verifiably returned. In order to enforce such a
revocation, we require the verifier send back the original state
$\ket{\psi}_{\qreg{R}\otimes\qreg{P}}$ and have the prover uncompute
the proofs on the $\qreg{P}$ register, followed by verifying if the
resulting state is equivalent to $\ket{y}^\theta$ or not. However,
there is an important detail involved in invoking their deletion
theorem. It only guarantees security if the state provided to the
adversary computationally hides the basis information $\theta$.
However, the problem with the real proofs $\pi^\sz_\out$ is that they
require $(\theta, \key)$ as the witness. Fortunately, this is exactly
what the zero-knowledge guarantee of $\Pi_\nizk$ for the outer proofs
provides us, allowing to simulate the proof state without access to
$(\theta, \key)$. However, the traditional notion of computational
zero-knowledge is insufficient, as the outer proofs are generated for
a superposition of statements wrt the same CRS. We observe that such a
quantum-secure zero-knowledge notion (first explored in
\cite{CiC:ACEMV24}) is satisfied by the LWE-based NIZKs of
\cite{STOC:Waters24,EC:WatWeeWu25}. Apart from this quantum-secure ZK
property, we need one more special property from the underlying NIZK.
In particular, we require that the marginal distribution of the CRS
output by the non-adaptive ZK simulator of $\Pi_\nizk$
as $(\crs_\Sim, \pi_\Sim) \la \Sim_\nizk(x)$, is statistically close to the real
CRS distribution.  Once again, we find that NIZKs obtained via the
hidden-bits paradigm such as the LWE ones mentioned above satisfy this
property. On the other hand, we find that LWE-based NIZK proofs obtained via
the Fiat-Shamir approach \cite{STOC:CCHLRR19,C:PeiShi19} do not
satisfy this property as their simulator samples a ``fake'' key for a
keyed correlation-intractable hash function, that is only
computationally indistinguishable from a real key.
It is also unclear if these Fiat-Shamir NIZKs satisfy quantum-secure
ZK. Now, we explain how CE-ZK is proved using these properties.

The main idea is to swap out the plaintext $\pi_\inn$ bit-by-bit
into a simulated proof $\pi_\Sim$ generated by $(\crs_\Sim, \pi_\Sim)
\la \Sim_\nizk(x)$, in a statistically indistinguishably way using the deletion
theorem. Note that we would first switch $\crs_\inn$ into $\crs_\Sim$
due to the aforementioned statistical CRS indistinguishability. The
problem with this argument is that the moment a bit of $\pi_\inn$ is
flipped to get $\pi_\inn'$, there is no guarantee that the
computational zero-knowledge of the proofs $\pi^\sz_\out$ (which are
generated using the witness $\wit_\out \seteq (\theta, \key)$) holds as
$\pi_\inn'$ need not be accepted by $\Ver_\nizk(\crs_\Sim, x, \cdot)$.
Consequently, we employ the two-slot technique \cite{STOC:NaoYun90},
which was also used by \cite{EC:BGKMRR24} in an iO-centric
certified-deletion context. In more detail, we utilize a $2\ell\secp$ qubit BB84 state
$\ket{y}^\theta$ which
encodes two pads $\pad^0, \pad^1$ and provide two ciphertexts
$\ct^0 \seteq \pi_\inn \oplus \pad^0 \oplus \key^0$ and $\ct^1 \seteq
0^\ell \oplus \pad^1 \oplus \key^1$. Moreover, the statements $x^\sz$
corresponding to the superposition terms of the BB84 state are
modified as follows:

\begin{align*}
\underline{\text{Statement}\; x^\sz}: \;\;\exists \big(\theta,
\key^0, \key^1\big) \; : \; \Ver_\nizk\Big(\crs_\inn, x, \ct^0&
\oplus f^0(\theta, \sz) \oplus \key^0\Big) = 1\\
&\text{OR}\\
\Ver_\nizk\Big(\crs_\inn, x, \ct^1&
\oplus f^1(\theta, \sz) \oplus \key^1\Big) = 1
\end{align*}
where $f^0$ is defined similarly to $f$ for the last $\ell$-qubits
(which encode $\pad^0$)
while $f^1$ corresponds to the first $\ell$-qubits (which encode
$\pad^1$). Due to the nature
of this OR proof, we are able to swap the $0^\ell$ plaintext
in $\ct^1$ into a simulated proof $\pi_\Sim$ one bit at a time. This
means we no
longer have the aforementioned issue as $\pi_\inn$ inside $\ct^0$ is
sufficient to satisfy the OR proof, and hence invoke quantum-secure
zero-knowledge of the outer proofs. Then, once $\pi_\Sim$ is placed in
$\ct^1$, we can replace $\pi_\inn$ bit-by-bit into $0^\ell$ since
$\pi_\Sim$ also satisfies $\Ver_\nizk(\crs_\Sim, x, \pi_\Sim) = 1$ by
zero-knowledge. Finally, observe that the resulting distribution can be
produced with the statement $x$ alone, ensuring that the CE-ZK
property is satisfied. Note that apart from the NIZK for $\NP$, we also
rely on a quantum-secure PRF (implied by OWFs) to simulate uniform randomness in
superposition, used to compute and uncompute the NIZK proofs. We also
face the following subtle issue in the reduction to the deletion
theorem of BB84 states. The quantum state returned by the adversary
may also have superposition terms $\widetilde{\sz}$ such that
$(\theta, \key^0, \key^1)$ is not a valid witness for statement
$x^{\widetilde{\sz}}$. Consequently, we cannot argue that the state
after uncomputation of real proofs is computationally close to the
state after uncomputation of simulated proofs, based on the
quantum-secure adaptive zero-knowledge property. To circumvent this
issue, we employ a simpler (information-theoretic) variant of the
key-testability technique of \cite{TCC:KitNisYam23,C:KitNisPap25}. At
a high level, this involves producing an information-theoretic MAC tag for
each term of the BB84 state, which can be verified after revocation.
Intuitively, this ensures that the adversary cannot produce a state
with different superposition terms (that have non-negligible
amplitude) without breaking unforgeability of the MAC. In summary, we
obtain the following theorem:

\begin{theorem}[CE-NIZK in the CRS Model]
There exist CE-NIZKs for $\NP$ in the CRS model
satisfying 1) statistical soundness, 2)
computational zero-knowledge, and 3) certified-everlasting zero-knowledge,
assuming the existence of OWFs and NIZKs for $\NP$ in the CRS model satisfying 1) 
statistical (adaptive) soundness, 2) computational (non-adaptive)
zero-knowledge with statistical CRS indistinguishability, and 3)
quantum-secure (adaptive) zero-knowledge.
\end{theorem}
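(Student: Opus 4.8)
The plan is to instantiate the abstract template sketched in the technical overview and prove each of the three properties as a sequence of hybrid arguments. First I would fix the concrete building blocks: a quantum-secure PRF family $\{\F_{\prfk}\}$ (from OWFs) to coherently generate and uncompute the outer-proof randomness in superposition, a Lamport-style signature (from OWFs) for the key-testability step of \cite{C:KitNisPap25}, and the NIZK $\Pi_\nizk = (\Setup_\nizk, \Pro_\nizk, \Ver_\nizk, \Sim_\nizk)$ with the two special properties demanded in Definition \ref{def:nizk} (statistical adaptive soundness, statistical CRS indistinguishability of the non-adaptive ZK simulator, and quantum-secure adaptive ZK). Then $\Setup$ outputs $\crs = \crs_\inn \| \crs_\out$ with both halves sampled by $\Setup_\nizk$; $\Pro(\crs, x, \wit)$ samples $\theta \la \bit^{2\ell\secp}$, $\key^0,\key^1 \la \bit^\ell$, prepares the $2\ell\secp$-qubit BB84 state $\ket{y}^\theta$ (recording $y$ in $\qP$ together with $\theta, \key^0, \key^1$ and the signing keys), sets $\ct^0 \seteq \pi_\inn \oplus \pad^0 \oplus \key^0$ where $\pi_\inn \la \Pro_\nizk(\crs_\inn, x, \wit)$ and $\ct^1 \seteq 0^\ell \oplus \pad^1 \oplus \key^1$, and coherently attaches an outer proof $\ket{\pi_\out^\sz}$ for the OR-statement $x^\sz$ on a fresh register (plus the Lamport signature on $\sz$), outputting the global state $\sigma$ on $\qreg{R}\otimes\qreg{P}$; $\Ver$ coherently recomputes $\pad^0$ from $\theta$ stored in... — no, $\theta$ is not available to $\Ver$, so $\Ver$ instead runs $\Ver_\nizk(\crs_\out, x^\sz, \cdot)$ coherently on the proof register, measures the decision bit, and keeps the post-measurement state as $\qV$; $\DelVrfy(\qP, \qV)$ uncomputes the outer proofs and signatures using the stored randomness, then projects onto $\ket{y}^\theta$ and accepts iff the projection succeeds.

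Completeness and statistical soundness are the short steps. For completeness, every superposition term $\sz$ satisfies $\ct^0 \oplus f^0(\theta,\sz) \oplus \key^0 = \pi_\inn$, so $\Ver_\nizk(\crs_\inn, x, \cdot)$ accepts and hence $x^\sz$ is true with a valid witness $(\theta,\key^0,\key^1)$; quantum-secure completeness of $\Pi_\nizk$ then makes $\Ver$ accept, and gentle measurement plus the honest uncomputation make $\DelVrfy$ accept. For statistical soundness, if $x \notin \cL$ then by statistical adaptive soundness of $\Pi_\nizk$ on $\crs_\inn$ there is (whp) no accepting inner proof, so every derived statement $x^\sz$ is false; a second application of statistical adaptive soundness on $\crs_\out$ — crucially adaptive, since $x^\sz$ depends on $\ct^0,\ct^1$ which the cheating prover may choose after seeing $\crs_\out$ — shows no proof register makes $\Ver_\nizk(\crs_\out, x^\sz, \cdot)$ accept, so $\Ver$ rejects whp. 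Computational zero-knowledge follows by a standard hybrid: replace $(\crs_\inn, \pi_\inn)$ with $(\crs_\Sim, \pi_\Sim) \la \Sim_\nizk(x)$ using (non-adaptive) computational ZK of $\Pi_\nizk$, and replace the coherently generated outer proofs by the simulated ones using quantum-secure adaptive ZK; the statement $x$ plus the simulated objects then suffice.

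The heart of the proof — and the main obstacle — is certified-everlasting zero-knowledge, which I would prove via a carefully ordered hybrid sequence culminating in a reduction to the BB84 certified-deletion theorem of \cite{C:BarKhu23}. Starting from the real CE-ZK experiment: (i) switch $\crs_\inn$ to $\crs_\Sim$ by statistical CRS indistinguishability — note this is the step that forces us to require that property rather than mere computational ZK; (ii) replace the coherent outer proofs by simulated ones via quantum-secure adaptive ZK, so the state handed to $\Ver^*$ no longer depends on $(\theta,\key^0,\key^1)$ — this is where key-testability enters, since after $\Ver^*$ returns its state the analysis of $\DelVrfy$ must control superposition terms $\widetilde{\sz}$ for which $(\theta,\key^0,\key^1)$ is not a valid witness of $x^{\widetilde{\sz}}$, and the Lamport signatures verified post-revocation pin down the superposition to terms where the ZK switch is sound; (iii) with the adversary's view now hiding $\theta$, invoke the deletion theorem on the $\ell$ qubits encoding $\pad^1$ to swap the $0^\ell$ plaintext inside $\ct^1$ bit-by-bit into $\pi_\Sim$, conditioned on successful revocation of those qubits; (iv) now that $\ct^1$ carries a genuine accepting proof $\pi_\Sim$ for $\crs_\Sim$, the OR-statement is satisfiable through the $\ct^1$ branch, so re-apply the deletion theorem on the $\pad^0$ qubits to swap $\pi_\inn$ inside $\ct^0$ bit-by-bit into $0^\ell$; (v) the resulting experiment depends only on $x$ (the CRS is simulated, $\ct^0$ is an encryption of $0^\ell$, $\ct^1$ encrypts $\pi_\Sim \la \Sim_\nizk(x)$, and the outer proofs are simulated), so the simulator $\Sim_{\Ver^*}$ can run it internally. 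The delicate points I expect to fight with are: threading the two-slot/OR argument so that at every hybrid \emph{some} branch of $x^\sz$ has a valid witness available to justify the outer-proof ZK switch (this is precisely why a single ciphertext fails and two are needed); verifying that the hypotheses of the \cite{C:BarKhu23} deletion theorem — in particular that the adversary's input computationally hides the basis $\theta$ — are met after step (ii) and remain met throughout steps (iii)–(iv); and making the key-testability reduction go through coherently, i.e. arguing that a QPT $\Ver^*$ whose returned state has non-negligible weight on a "bad" $\widetilde{\sz}$ yields a forgery. None of these requires heavy computation, but the bookkeeping across the nested hybrids (bit index of the plaintext being swapped, which of the two slots is "live", and the post-selection on $\DelVrfy = \top$ commuting appropriately with each switch) is where the real work lies.
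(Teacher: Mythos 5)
Your construction and your treatment of completeness, statistical soundness (two applications of statistical adaptive soundness, with adaptivity needed because $x^\sz$ depends on $\ct^0,\ct^1$), and computational zero-knowledge all match the paper's. The genuine gap is in where you deploy the quantum-secure ZK switch inside the CE-ZK argument. Your step (ii) replaces the coherently generated outer proofs $\pi^\sz_\out$ by simulated ones as a hybrid of the \emph{main} CE-ZK experiment, justified by quantum-secure adaptive zero-knowledge, which is a purely computational property. But CE-ZK demands $\trho \approx_s \rho'$: the post-selected residual state is handed to an \emph{unbounded} distinguisher, so every link in the main hybrid chain must be statistical. With step (ii) in the chain, your overall conclusion is only computational indistinguishability between the real experiment and your simulator, which does not prove CE-ZK; and Theorem \ref{thm:cert-del} does not repair that particular link, since it converts computational hiding of the basis $\theta$ into statistical closeness under a plaintext-bit flip, not under a change of how the outer proofs were generated.

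The repair — and the paper's actual route — is to never simulate the outer proofs in the main hybrids. The main chain changes only (a) $\crs_\inn \to \crs_\Sim$, a statistical link by statistical CRS indistinguishability, and (b) the plaintext bits inside $\ct^1$ and then $\ct^0$, one bit at a time, each a statistical link via Theorem \ref{thm:cert-del}. The quantum-secure adaptive ZK switch, together with the key-testability post-selection handling adversarially modified superposition terms $\widetilde{\sz}$, is used only \emph{inside} the proof that the reduction's distribution $\cZ^i$ is semantically secure with respect to $\theta^i$, i.e., to verify the hypothesis of the deletion theorem, where computational indistinguishability is exactly what is required (the OR/two-slot structure you describe is what keeps a valid witness for $x^\sz$ available throughout, first via the $\ct^0$ branch and later via the $\ct^1$ branch). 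Correspondingly, the final simulator keeps \emph{real} outer proofs: it samples $(\theta,\key^0,\key^1)$ itself, and these constitute the outer witness $\wit_\out$; only $\pi_\inn$ required $\wit$, and by the last hybrid $\ct^0$ encrypts $0^\ell$ while $\ct^1$ encrypts $\pi_\Sim$. Two smaller points: for computational zero-knowledge you likewise do not need to touch the outer proofs at all (only the inner switch to $(\crs_\Sim,\pi_\Sim)$ is needed); and having $\DelVrfy$ project onto the full state $\ket{y}^\theta$, rather than measuring $\qreg{R}$ in the Hadamard basis and checking only the positions with $\theta_j=1$, complicates the reduction to Theorem \ref{thm:cert-del}, because the reduction cannot carry out the computational-basis part of that projection on the block owned by the deletion experiment.
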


Since such NIZKs for $\NP$ are known from LWE due to
\cite{EC:WatWeeWu25}, we have:

\begin{corollary}[CE-NIZK from LWE]
There exist CE-NIZKs for $\NP$ in the CRS model based on the
polynomial hardness of the LWE problem.
\end{corollary}

\subsection{Quantum-Efficient Protocol in the Shared EPR Model}

Previously, we showed that generalizations of the Fiat-Shamir and
hidden-bits paradigms to the CE-NIZK setting do not work, as they run
into Theorem \ref{thm:del_res_imp}. We observe that this impossibility does not apply to the shared EPR
model where the parties are initialized with halves of
polynomially-many EPR pairs. Such a model has recently garnered
attention in the context of non-interactive MPC \cite{EC:ABKK23,C:BarKhuSri23} and NIZK for
QMA \cite{C:BarKhuSri23,AC:MorYam22}. Recall that Theorem
\ref{thm:del_res_imp} relies on the following idea: a prover could
split the quantum proof the verifier is supposed to receive into two
parts, one which passes certification while the other suffices for
verification. Hence, the prover need only send the part which is
required for verification. This then implies a statistically secure
NIZK. In the EPR model, the prover cannot perform
this ``split'' as it does not control the entire quantum proof state, since
the verifier's EPR halves do not originate from the prover.

Our observation has some similarities to the work of
\cite{C:BarKhuSri23}, which separated the CRS and shared EPR models in
the context of non-interactive randomized oblivious transfer.  In our
case, the separation between the models only applies to the
aforementioned deletion-resistant protocols, as evident from our CRS
feasibility result. However, we demonstrate that this difference is
useful as well, by constructing a protocol in the shared EPR model that is
much more efficient in its use of quantum resources. Specifically, our
protocol only involves performing single-qubit computational or Hadamard basis
measurements on the EPR halves, and classical computation
otherwise. We proceed to describe our approach.

We make use of the well-known hidden-bits paradigm \cite{FLS90} that we adapt
to the shared EPR model for the context of obtaining
certified-everlasting ZK. We note that Fiat-Shamir variants of
previous interactive protocols would also bypass the impossibility in
the EPR model. However, we find the hidden-bits approach to be cleaner
and also helps to avoid the statistically far simulated CRS
issue with \cite{STOC:CCHLRR19,C:PeiShi19} mentioned earlier. We also
emphasize that we assume a common reference string in addition to the
shared EPRs. This is because it is not clear how to simulate a
structured reference string using EPRs unlike a common random string.
Moreover, statistically sound NIZKs from LWE
\cite{STOC:CCHLRR19,C:PeiShi19,STOC:Waters24,EC:WatWeeWu25} are all in the
common reference string model. While we could consider a more complex
entangled state that could simulated arbitrary reference strings, we
stick with EPR setups for their simplicity and practicality. We
believe that the NIZK proof of Bitansky and Paneth \cite{TCC:BitPan15}
based on iO and OWFs would also work for our purpose, allowing to use
EPRs alone as it only requires a uniform CRS.

The hidden bits paradigm for constructing NIZK utilizes two
components: (1) a hidden bits generator (HBG) and (2) a statistical
sound and statistically zero-knowledge NIZK in the hidden-bits model.
In the (idealized) hidden-bits model, there is a uniformly random string that is
sampled and made available to the prover, but is hidden from the
verifier. Then, the prover has the ability to specify some indices
of its choice to a trusted third party, which then reveals the corresponding
hidden bits to the verifier (and nothing about the other bits). Since
NIZKs in the hidden-bits model are known unconditionally \cite{FLS90},
most works focus on designing an HBG. Intuitively, the purpose of the
HBG is to realize the idealized hidden bits in the CRS model. In
particular, it allows the prover $\Pro$ to sample a succinct
commitment $\com$, which along with the CRS, fixes a hidden bit
string. The HBG then allows the prover to provably open certain
positions of this fixed hidden bit string to the verifier.
Crucially, the \emph{statistical-binding} security of the HBG ensures
that the prover is bound to this string, and cannot reveal anything
else to the verifier. Moreover, \emph{computational hiding}
ensures that all unopened hidden bits are pseudorandom in the
verifier's view.

Our idea is that instead of determining the actual hidden bits string via
the commitment $\com$ and the CRS $\crs$, these are utilized
to fix a common basis string $\theta$ for both parties. Then, the
prover would reveal parts of $\theta$ to the verifier, and both
parties would obtain the actual hidden bits by measuring their halves
of the EPR pairs in the $\theta$ basis. Specifically, the hidden bits
are set to be the XOR of outcomes of computational
basis positions corresponding to different ``blocks'' of the many EPR
pairs. Note that the nature of EPR
pairs ensures that both parties would derive the same measurement
outcomes on measuring in the same basis. Also, statistical soundness is
ensured due to the observation that we can consider an equivalent
verifier that measures all its EPRs in the computational basis before the
prover is even initialized. Consequently, the prover can only
manipulate the hidden bits via the basis string $\theta$ which is
restricted by the HBG commitment $\com$. Note that the inability of
this hypothetical verifier to pass certification is not a problem, as
it is only used as a proof tool to argue soundness. While
computational zero-knowledge follows easily by arguments similar to
prior work \cite{EC:QuaRotWic19}, certified-everlasting zero-knowledge is argued
based on the following fact. The verifier is asked to delete all the
unopened hidden bits by measuring the corresponding EPRs in the Hadamard basis.
Consequently, even if computational hiding of the HBG eventually leaks
$\theta$ to an unbounded distinguisher, the
certified-deletion theorem of \cite{C:BarKhu23} can be applied to
ensure that these hidden bits are information-theoretically hidden.
Consequently, we have:

\begin{theorem}
CE-NIZKs for $\NP$ exist in the shared EPR Model (Definition
\ref{def:ce_nizk_epr}), assuming statistical binding hidden bits
generators (Definition \ref{def:hbg}) exist.
\end{theorem}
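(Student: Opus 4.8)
The plan is to instantiate the hidden-bits paradigm of \cite{FLS90, EC:QuaRotWic19} in the shared EPR model, using the EPR pairs to realize the actual hidden bits while using the HBG commitment (together with the CRS) to fix only a common \emph{basis} string $\theta$. Concretely, let the prover and verifier share $m = \poly(\secp, |x|)$ EPR pairs, partitioned into $n$ blocks of $\secp$ pairs each, where $n$ is the number of hidden bits required by the unconditional hidden-bits-model NIZK $\Pi_{\hb}$ for $\cL$. On input $(\crs, x, \wit)$, the prover runs the HBG to sample a commitment $\com$ which, via $\crs$, pseudorandomly determines a string $\theta \in \bit^n$ (one basis bit per block); it then ``computes'' what the hidden bit of block $i$ would be --- namely $b^i \seteq \bigoplus_{j : \theta^i[j] = 0} r^i[j]$ where $r^i$ is the string of computational-basis outcomes the prover gets by measuring its halves of block $i$ in the $\theta^i$ basis (Hadamard-basis positions are discarded). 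The prover runs $\Pi_{\hb}$ on the string $(b^1, \dots, b^n)$ to get a hidden-bits-model proof $\pi_{\hb}$ and an index set $I \subseteq [n]$ to be opened, uses the HBG opening algorithm to produce openings of $\theta^i$ for $i \in I$, and sends $(\com, \pi_{\hb}, I, \{\theta^i\}_{i \in I}, \{\text{HBG openings}\})$ classically to the verifier. The verifier checks the HBG openings, measures its halves of blocks in $I$ in the opened bases $\theta^i$ to recover the $b^i$, runs the $\Pi_{\hb}$ verifier, and outputs its decision bit together with the residual state (its unopened EPR halves); $\DelVrfy$ instructs the verifier to measure each unopened block in the Hadamard basis and return the outcomes, which the prover checks for consistency against its own Hadamard-basis outcomes on the matching positions, exactly as in the BB84 certified-deletion test of \cite{C:BarKhu23}.

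The key steps, in order, would be: (1) \emph{Completeness} --- EPR pairs measured in the same basis by both parties give identical outcomes, so the verifier recovers exactly the $b^i$ the prover committed to, and completeness reduces to completeness of the HBG and of $\Pi_{\hb}$; gentle measurement handles the small disturbance from verification before deletion. (2) \emph{Statistical soundness} --- here I would invoke the deferred-measurement / commuting-measurement trick sketched in the overview: replace the honest verifier by an equivalent one that measures \emph{all} its EPR halves in the computational basis before the prover is even run. This fixes all the $r^i$ up front, so the only freedom a cheating prover has over the string $(b^1, \dots, b^n)$ is through its choice of $\theta$, which the statistical binding of the HBG pins down as a function of $\com$ and $\crs$. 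Then for $\tx \notin \cL$, statistical soundness of $\Pi_{\hb}$ in the (now genuinely random, since the $r^i$ are uniform) hidden-bits model rules out an accepting proof, except with negligible probability over $\crs$ and the EPR measurements. Crucially the hypothetical all-computational verifier never passes $\DelVrfy$, but that is irrelevant since it is only a proof device. (3) \emph{Computational zero-knowledge} --- this follows the \cite{EC:QuaRotWic19} template: the simulator uses the HBG to equivocate the unopened positions, relying on computational hiding of the HBG so that unopened basis bits (hence unopened hidden bits) look pseudorandom, then invokes the zero-knowledge simulator of $\Pi_{\hb}$. (4) \emph{Certified-everlasting zero-knowledge} --- this is the heart of the theorem and the main obstacle. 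Conditioned on $\DelVrfy$ accepting, the verifier has measured every unopened block in the Hadamard basis; I would reduce to the certified-deletion theorem for BB84 states of \cite{C:BarKhu23} to argue that, \emph{even given} $\theta$ in full (which an unbounded distinguisher eventually extracts from $\com$ by breaking HBG hiding), the unopened hidden bits $b^i$ ($i \notin I$) are information-theoretically uniform and independent in the post-deletion view. Once the unopened bits are statistically uniform given everything the distinguisher holds, the statistical zero-knowledge of $\Pi_{\hb}$ in the hidden-bits model lets the CE-ZK simulator generate the whole transcript from $x$ alone.

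The main obstacle I expect is step (4), for two reasons. First, the certified-deletion theorem of \cite{C:BarKhu23} is stated for states that \emph{computationally} hide the basis $\theta$, whereas here the state handed to the verifier (the EPR halves) carries no information about $\theta$ at all until $\com$ is revealed; one must set up the reduction so that the HBG commitment $\com$ plays the role of the computationally-hiding ``description'' of the basis, and argue that the distinguisher's eventual unbounded recovery of $\theta$ is exactly the regime the deletion theorem tolerates --- this requires care about the order of quantifiers (the reduction is computationally bounded when it produces $\com$, the distinguisher is unbounded only afterward). Second, one must handle the fact that the hidden bit of a block is an \emph{XOR} over the computational-basis positions rather than a single qubit: I would argue that a block of $\secp$ BB84 qubits whose computational positions are XORed behaves, for the purposes of the deletion theorem, like a single BB84 qubit (the Hadamard-basis deletion test on the block certifies deletion of that parity bit), so the per-block reduction goes through with a union bound over the at most $n$ unopened blocks. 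The remaining steps (1)--(3) I expect to be essentially routine adaptations of \cite{EC:QuaRotWic19} and standard EPR-measurement arguments.
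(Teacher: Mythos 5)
Your overall route is the same as the paper's: the HBG commitment fixes a per-qubit basis string, the hidden bits are parities of computational-basis outcomes of the shared EPR pairs, opened blocks are measured in the revealed bases, unopened blocks are revoked by Hadamard measurement, soundness is argued via an equivalent verifier that measures everything in the computational basis before the prover runs, and CE-ZK is reduced to the BB84 certified-deletion theorem of \cite{C:BarKhu23} (which is already stated for parity-encoded bits, so the second concern in your step (4) requires no extra work). However, there is a genuine gap: your hidden bits are the raw parities $b^i = \bigoplus_{j:\theta^i[j]=0} y^i_j$, whereas the paper's CRS additionally contains a uniformly random string $s$ and the hidden bits are $r_i \seteq t_i \oplus s_i$. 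This pad is not cosmetic. For soundness, uniformity of the hidden-bit string must hold for \emph{every} commitment a cheating prover might submit, including ones that $\Open$ binds to degenerate basis strings such as $\ttheta^i = 1^k$, for which your raw parity is identically $0$; moreover the prover can measure its own EPR halves first, learn $y$, and only then choose $\com$ adaptively, so the argument needs the succinct-commitment bound on $\cC$, amplified soundness of the hidden-bits NIZK, and a union bound over commitments, with $r = t \oplus s$ uniform for each fixed $\com$ precisely because $s$ is uniform. Your sketch appeals only to binding and to ``the $r^i$ are uniform,'' which is neither true for all $\com$ nor sufficient against this adaptivity.

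More decisively, the simulators need $s$ as the programmable component. The hidden-bits-model simulator outputs $(I, r_I, \pi_\hb)$ with $r_I$ of its own choosing; in your protocol the opened bits are parities of genuinely random EPR outcomes under bases produced by $\GenBits_\bg$, neither of which an efficient simulator can steer, and a statistically binding HBG cannot be ``equivocated'' as your step (3) suggests. The paper's simulators (for computational ZK and for CE-ZK alike) reconcile the two by measuring the prover-side halves honestly to obtain $t_I$, programming $s_I \seteq t_I \oplus r_I$, and replacing $s_{\overline{I}}$ by fresh uniform bits --- the latter switch being exactly where HBG hiding and the deletion theorem enter for CE-ZK. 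Without $s$, your steps (3) and (4) have no mechanism to make the verifier's recovered bits equal the simulated $r_I$, so the simulation fails even though the deletion argument itself is fine. Adding the random offset to the CRS and importing the union-bound soundness argument repairs the proof; the remainder of your plan matches the paper. (Minor point: your phrase ``one basis bit per block'' contradicts your own formula $\theta^i[j]$ --- the bases must be per qubit, since an all-computational block would make the Hadamard deletion certificate vacuous.)
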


From the fact that statistical binding HBGs are known from LWE
\cite{EC:WatWeeWu25}, and due to the structure of our protocol, we have the
following corollary:

\begin{corollary}
Assuming the polynomial hardness of LWE, there exist CE-NIZKs for
$\NP$ in the shared EPR Model, where the prover and verifier only
perform single-qubit measurements apart from arbitrary classical
computation.
\end{corollary}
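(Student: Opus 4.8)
\emph{Construction.} The plan is to adapt the hidden-bits NIZK paradigm of \cite{FLS90, EC:QuaRotWic19} to the shared EPR model, using the EPR pairs to realize the hidden bits and a statistically binding hidden-bits generator (HBG) only to pin down the common measurement basis. Concretely, $\Setup(1^\secp)$ samples the HBG reference string (the FLS hidden-bits-model NIZK needs no reference string of its own), and the shared state consists of $m\secp$ EPR pairs arranged in $m = \poly(\secp, |x|)$ blocks of $\secp$ qubits, the prover holding one half of each pair and the verifier the other. The prover $\Pro(\crs, x, \wit)$ runs the HBG to obtain a commitment $\com$ together with a hidden string which, after a fixed surjective encoding that forces at least one computational coordinate per block, it reads as a basis string $\theta = \theta^1 \| \cdots \| \theta^m \in \bit^{m\secp}$; it measures block $i$ of its halves in basis $\theta^i$ to get outcomes $v^i \in \bit^\secp$, sets the $i$-th hidden bit $r_i \seteq \bigoplus_{j : \theta^i[j] = 0} v^i[j]$, runs the hidden-bits-model prover on $(x, \wit, r)$ to obtain an index set $I \subseteq [m]$ and a proof $\tpi$, and uses the HBG's opening algorithm to produce openings $\{\op_i\}_{i \in I}$ of the basis bits of the blocks in $I$. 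It outputs the proof $\sigma \seteq (\com, I, \{\op_i\}_{i \in I}, \tpi)$ and keeps $\qP \seteq \big(\{(\theta^i, v^i)\}_{i \notin I}\big)$. The verifier $\Ver(\crs, x, \sigma)$ checks the HBG openings, measures block $i$ of its own halves in basis $\theta^i$ for each $i \in I$ to recover $v^i$ (and hence $r_i$) --- EPR correlations guarantee it obtains the same value the prover did --- runs the hidden-bits-model verifier on $(x, I, \{r_i\}_{i \in I}, \tpi)$, and outputs the resulting bit along with $\qV$ equal to its untouched halves on the blocks in $[m] \setminus I$. Finally $\DelVrfy(\qP, \rhoA)$ measures $\rhoA$ in the Hadamard basis to obtain outcomes $\{d^i\}_{i \notin I}$ and accepts iff $d^i[j] = v^i[j]$ for every $i \notin I$ and every $j$ with $\theta^i[j] = 1$.

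\emph{Completeness, computational zero-knowledge, and statistical soundness.} Completeness is immediate: matching bases on an EPR pair give matching outcomes, so the verifier reconstructs the same $r|_I$ and hidden-bits-model completeness closes the proof check, while an honest verifier's Hadamard measurement of $H^{\theta^i[j]}\ket{v^i[j]}$ returns $v^i[j]$ deterministically whenever $\theta^i[j] = 1$, so $\DelVrfy$ accepts. For computational zero-knowledge I would follow \cite{EC:QuaRotWic19}: computational hiding of the HBG makes the unopened bases, hence the unopened $r_i$, pseudorandom in the verifier's view, after which the statistical zero-knowledge of the hidden-bits-model NIZK lets me replace $(I, r|_I, \tpi)$ by the output of its simulator on $x$, additionally resampling the opened blocks' halves to be consistent with the simulator-chosen $r|_I$. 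For statistical soundness I would first replace the verifier by the \emph{canonical} verifier of the overview, which measures all of its halves in the computational basis before the prover is run; since a hidden bit depends only on the outcomes at its block's computational coordinates, and such a measurement of the verifier's halves commutes with everything the prover does, this leaves the acceptance probability unchanged. In this experiment the verifier's outcomes $\su \in \bit^{m\secp}$ are uniform and outside the prover's control, so a cheating prover can bias the hidden string only through $\theta = \beta_\crs(\com)$, which is determined by the short commitment $\com$ via statistical binding of the HBG; as the encoding guarantees a computational coordinate per block, for every fixed $\com$ the induced string $r$ is uniform over $\bit^m$ as $\su$ ranges uniformly. Invoking the $2^{-\Omega(m)}$ statistical soundness of the hidden-bits-model NIZK for a uniform string and union-bounding over the $2^{|\com|}$ possible commitments (with $m$ chosen large enough) then yields soundness error $\negl$.

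\emph{Certified-everlasting zero-knowledge, and the main obstacle.} This last property is the step I expect to be hardest, precisely because the HBG is only computationally hiding whereas CE-ZK must fool an unbounded distinguisher once deletion is certified. The key structural observation is that, after the prover's measurement, the verifier's halves on the unopened blocks are exactly BB84 states $\ket{v^i}^{\theta^i}$ whose bases $\theta^i$ are computationally hidden (they are the unopened part of the HBG) and whose XOR-of-computational-coordinates values are precisely the unopened hidden bits $r_i$ --- which, apart from their appearance inside $\tpi$, are the only part of the verifier's view that depends on them. I would therefore invoke the certified-deletion theorem for BB84 states of \cite{C:BarKhu23}: conditioned on $\DelVrfy(\qP, \rhoA) = \top$, i.e. on a valid Hadamard-measurement certificate for the unopened blocks, the residual state $\rhoB$ is statistically close to the state obtained when the values $\{r_i\}_{i \notin I}$ feeding into $\tpi$ are replaced by fresh uniform bits, even though the QPT verifier was handed $\tpi$ together with the corresponding BB84 states. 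Having thereby decoupled $\tpi$ from the unopened halves, I would invoke the statistical zero-knowledge of the hidden-bits-model NIZK to replace $(I, r|_I, \tpi)$ by the output of its simulator on $x$, resampling the opened blocks' halves to be consistent with the simulator's $r|_I$ and the unopened blocks' halves as fresh BB84 states; the resulting distribution depends on $x$ alone, which is exactly $\Sim_{\Ver^*}(x)$. The delicate parts will be the bookkeeping of which blocks count as revealed (bases public, excluded from the deletion theorem) versus deleted, and verifying that the hypotheses of \cite{C:BarKhu23} --- in particular computational hiding of the bases throughout the verifier's bounded-time phase, supplied here by the HBG --- are met; once they are, the everlasting guarantee follows because the conclusion of the deletion theorem is statistical. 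Plugging in the LWE-based statistically binding HBG of \cite{EC:WatWeeWu25} and noting that the protocol only ever performs single-qubit computational- or Hadamard-basis measurements of the EPR halves then yields the stated corollary.
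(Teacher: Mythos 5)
Your proposal is correct and takes essentially the same route as the paper's shared-EPR construction: a statistically binding HBG fixes a per-block basis string, measuring the shared EPR halves in that basis yields the hidden bits fed to an FLS hidden-bits-model NIZK, statistical soundness is argued via a hypothetical verifier that measures all its halves in the computational basis before the prover runs (plus statistical binding, succinct commitments, a union bound over commitments and amplified hidden-bits soundness), and CE-ZK combines the BB84 certified-deletion theorem of \cite{C:BarKhu23} with the statistical zero-knowledge of the hidden-bits proof, instantiated from LWE via \cite{EC:WatWeeWu25}. The only substantive deviation is that the paper also places a uniform string $s$ in the CRS and defines the hidden bits as $r_i \seteq t_i \oplus s_i$, which lets the zero-knowledge and CE-ZK simulators and the soundness reduction simply program a classical string (and makes all-Hadamard blocks harmless), whereas you drop $s$, force a computational coordinate per block via an encoding, and have the simulator resample the verifier's EPR halves; this variant is workable but leaves you to justify applying the deletion theorem with your slightly non-uniform basis distribution, and note that the paper's shared-EPR definition has the verifier return a classical Hadamard-measurement certificate via $\Del$ rather than handing a quantum register to $\DelVrfy$.
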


\subsection{Relation with Other Quantum-Enabled NIZKs}\label{sec:to_other}

The works of \cite{AC:JawKhu24} and \cite{TCC:GoyMalRai24} introduced
unclonable NIZK which is primitive that at a high level, guarantees
the following property. Consider an adversary that is provided with a
quantum proof of some $\NP$ statement $x$. Then, if it can produce two
separate quantum proofs that both pass verification, then one can
efficiently extract out a witness $w$ for the statement $x$ from the
adversary.  Intuitively, this ensures that if an adversary only has
the proof and not the witness, it cannot produce more than one valid
proof. Clearly, this is only feasible in the quantum setting due to
the no-cloning principle.
In \cite{AC:JawKhu24}, it was also informally mentioned that such
proofs are a
prerequisite for obtaining NIZKs with some form of certified deletion,
while also mentioning certified everlasting security. The intuition is
that if the adversary could clone proof states, then it could delete one of
them successfully while retaining the other. Moreover, these works show that
unclonable NIZK implies public-key quantum money (PKQM). On the other hand,
we establish CE-NIZK in this paper from LWE, which is not known to
imply PKQM. The reasoning is that CE-NIZKs need not be unclonable in
the sense of \cite{AC:JawKhu24}, which actually doesn't require the adversary
to generate two copies of the state. To see this, observe the
following attack on our CE-NIZK in the CRS model that clearly breaks
unclonable security. The adversary on obtaining the proof state
$\ket{\psi}_{\qreg{R}\otimes \qreg{P}} = \sum\limits_{\sft}\ket{\sft}$,
simply applies CNOTs to get the state
$\sum\limits_{\sft}\ket{\sft}\ket{\sft}$ and gives the two entangled
registers to two verifiers. Since verification is done entirely in the
computational basis, both verifiers would accept the proof. We also
mention that by utilizing coset states along with iO (which implies
PKQM assuming OWFs) instead of BB84
states, one might be able to obtain both unclonability and certified-everlasting
security. 

In the related work of \cite{AC:AbbKat25}, a NIZK with certified deletion
(NIZK-CD) was obtained from standard assumptions, specifically from any
NIZK (with certain properties) and a statistically binding commitment
scheme.  The security notion intuitively ensures that if an adversary
is able to pass deletion and simultaneously produce a valid proof,
then one can efficiently extract a witness from the adversary.
Our construction also seems to at-least intuitively
ensure that an adversary that deletes cannot produce a valid proof.
Consequently, it
might be desirable to formally explore the possibility of a NIZK
satisfying both everlasting security and the inability to prove, with a
single deletion certificate. We emphasize that it is unclear
whether CE-NIZK implies NIZK-CD. This is because it might be possible
for an adversary to delete the proof of a CE-NIZK scheme, while
somehow retain information in the form of a different proof that
satisfies statistical zero-knowledge. While this doesn't
contradict certified-everlasting security, it still allows the
adversary to prove.

\subsection{Open Questions}

\begin{enumerate}
\item \emph{Publicly-Verifiable Deletion:} While our construction only
satisfies privately-verifiable deletion, obtaining the publicly-verifiable
variant is more desirable. We believe this should be possible by
employing coset states \cite{C:CLLZ21} instead of BB84 states, along
with iO (by employing techniques from \cite{EC:BGKMRR24}).
Note that coset states with iO and OWFs are sufficient to obtain
public-key quantum money (PKQM). However, it is not clear that CE-NIZK with
publicly-verifiable deletion implies PKQM because an efficient PKQM
verifier may be unable to exploit the fact that an adversarial state contains
some information about the witness. Still, it is not obvious how to
obtain publicly-verifiable deletion without iO, and we leave it as an
interesting open question.

\item \emph{Classical Certificates}: Our construction in the CRS model
requires quantum revocation, while it is ideal if the verifier could
only send back a classical certificate. One approach
towards achieving this would be to utilize a hybrid quantum fully
homomorphic encryption (QFHE) scheme that is known from LWE
\cite{Mah20,C:Brakerski18}, as used in the works of
\cite{STOC:Shmueli22,C:Shmueli22}. In particular, the verifier could
be provided with a QFHE ciphertext that encrypts $(\theta, \key^0,
\key^1)$ so that it could perform the uncomputation of the
superposition proofs using QFHE evaluation. This way, it would be able
to obtain a certificate that is a QFHE encryption of Hadamard basis
measurements, which could be decrypted and checked
by the prover. The problem here is that we cannot simply replace the
QFHE plaintext with a dummy, followed by using a similar
certified-deletion argument as \cite{C:BarKhu23}. This is because
deletion verification
requires the QFHE secret-key, making it unclear how to reduce to the
semantic security of QFHE. Despite this, there might be new techniques that would allow to prove the
security of such a construction, and we leave this for future work.

\item \emph{Certified-Everlasting NIWIs in the Plain Model}: It is
also interesting to consider other non-interactive proof systems such
as NIWIs, which are possible to obtain even in the plain model
\cite{FOCS:DwoNao00,C:BarOngVad03}. In the post-quantum setting,
these can be based on post-quantum iO \cite{TCC:BitPan15}. In
particular, one can try to obtain a NIWI with statistical soundness along with
computational witness indistinguishability (WI) and
certified-everlasting WI in the plain model.
Our construction seems to be general enough to
work for this case, except that the outer proofs
generated in superposition can no longer be simulated by the computational
zero-knowledge property. Recall that this is an important part of our proof, as it
is needed to invoke the certified deletion theorem. It might still be
possible just to rely on the witness indistinguishability property.
However, we think this approach would require a stronger deletion
theorem, and possibly techniques from leakage-resilient cryptography
\cite{KR19,TCC:CGLR24}.

\item \emph{CE-NIZKs for QMA}: In this
work, we focus on obtaining CE-NIZK for $\NP$, due to the challenges
in obtaining NIZK for $\QMA$ in the CRS model (See
\cite{C:Shmueli21}). On the other hand, interactive CE-ZK is known for
all of $\QMA$ \cite{C:HMNY22}. In this regard, it would be interesting
to explore CE-NIZK for $\QMA$ in other setup models. For inspiration, the
works of \cite{C:Shmueli21} and \cite{AC:MorYam22} show NIZKs for
$\QMA$ in the designated-verifier and shared-EPR models respectively, by
utilizing variants of the Fiat-Shamir paradigm. On the other hand, one
could consider CRS model constructions that make heuristic use of iO \cite{bartusek2025new}.
\end{enumerate}


\section{Preliminaries}\label{sec:prelim}

\subsection{Notation}

We denote a (possibly quantum) algorithm $\Pro$ with input $1^\secp$
and outputs $\msg, \ct$ as $\Pro(1^\secp) \ra (\msg, \ct)$ or $(\msg,
\ct) \la \Pro(1^\secp)$. Here, $\secp$ denotes a security parameter
and $1^\secp \seteq 1 \| \ldots (\secp \text{ times}) \| 1$, where
$\|$ denotes concatenation and $\seteq$ denotes defining (also used
for assigning/substituting). We use $\Pro(1^\secp)[1]$ and
$\Pro(1^\secp)[2]$ to denote the first and second indices of the
output tuple respectively. For a set $S$, by $s \la S$, we mean $s$ to
be sampled uniformly from $S$. For a distribution $D$, by $s \la D$,
we mean sampling $s$ according to $D$.
The functions $\poly(\secp)$ and $\negl$
denote a polynomial and a negligible function in $\secp$
respectively. We use the notation $\qA^{\ket{\Pro(\crs,
\cdot, \cdot)}}$ to denote that the quantum algorithm $\qA$ gets
quantum access to a quantum algorithm $\Pro(\crs, \cdot, \cdot)$ (with
classical inputs and outputs) on
the second and third inputs. For a language $\cL \in \NP$, we consider
the relation $\RL$ consisting of statement-witness pairs. Hence,
$\RL(x)$ denotes the set of witnesses of a statement $x$. We denote
quantum registers as $\qreg{A}, \qreg{B}, \qreg{R}$ etc.
We use the notation $y_i$ to denote the $i$-th bit of a string $y$, and
$\qreg{R}_i$ to denote the $i$-th qubit of $\qreg{R}$. Sometimes, we
also use $y[i]$ to denote the $i$-th bit. Additionally,
we use the notation $y_I$ to denote $\{y_i\}_{i\in I}$ for some set
$I$. Note that $y^i$ doesn't mean the $i$-th bit of $y$. However, the
notation $y^I$ denotes $\{y^i\}_{i \in I}$. Note also the subscripts
of $\Sim_1$ and $\pi_\nizk$ do not correspond to indexing, as the
first refers to an algorithm and the subscript of the latter is in a
serif font. Sometimes, we use $\overline{I}$ to denote the complement
of a set $I$, when the universe is clear from context. We also employ
the notation $s_I \seteq t_I \oplus r_I$ to denote $s_i \seteq t_i
\oplus r_i$ for each $i \in I$.
For a classical algorithm $\mathsf{C}$, we use
$\mathsf{C}(s; r)$ to denote running it on input $s$ and random tape
$r$. For two distributions $D_0,
D_1$, by the notation $D_0 \approx_c D_1$, we mean the two are
computationally indistinguishable by quantum polynomial time
(QPT) adversaries with non-uniform quantum advice. Often, we
simply use $d_0 \approx_c d_1$ for samples $d^0 \la D_0, d^1 \la D_1$
drawn from these distributions to denote the same. We use $D_0 \equiv
D_1$ to denote the distributions are identical and $D_0 \approx_s D_1$
to denote they are statistically close. 

\subsection{Quantum Information}

A pure state $\ket{\psi}$ is a vector in a complex Hilbert space
$\mathcal{H}$ with unit norm $\norm{\ket{\psi}} = 1$. An
operator $P$ which satisfies $P^\dagger = P$ is said to be a Hermitian
operator. Let $S(\mathcal{H})$ be the set of Hermitian operators on
$\mathcal{H}$. A density matrix $\rho \in S(\mathcal{H})$ is a
positive semi-definite operator with unit trace $\Tr(\rho) = 1$. A
density matrix represents a mixture over pure states, and the density
matrix of a pure state $\ket{\psi}$ is $\ketbra{\psi}{\psi}$. A
Hilbert space can be split into registers $\cH \seteq \cH_{\qreg{R_1}}
\otimes \cH_{\qreg{R_2}}$. A unitary matrix $U$ is one that satisfies
$UU^\dagger = U^\dagger U = \mathbb{I}$. Applying $U$ to a state
$\ket{\psi}$ transforms it into the state $U\ket{\psi}$. For a mixed
state $\rho$, applying $U$ transforms it into $U \rho U^\dagger$. A
projector $\mathsf{\Pi}$ is a Hermitian operator additionally
satisfying $\mathsf{\Pi}^2 = \mathsf{\Pi}$. The trace distance between
two mixed states $\rho, \sigma$ is given by
$\TD\big(\rho, \sigma) \seteq \frac12 \Tr(\sqrt{(\rho -
 \sigma)^{\dagger}(\rho - \sigma)}\big)$. It provides an
upper bound on the distinguishing advantage of $\rho, \sigma$ by any
quantum algorithm.
For quantum states
$\rho$ and $\sigma$, $\rho \approx_s \sigma$ denotes that the states are
negligibly close in trace distance.
A quantum polynomial-time (QPT) algorithm refers to
a polynomial-size quantum circuit composed of some universal gate set,
along with some non-uniform advice state $\ket{\psi}$. 
We denote the
Hadamard transform/operator by $H$, which is defined as
$
H = \frac{1}{\sqrt{2}} \begin{pmatrix}
1 & 1 \\
1 & -1
\end{pmatrix}
$.
For $y, \theta
\in \bit^n$ for some polynomial $n$ in $\secp$, we consider BB84
(Wiesner) states of the following form, where $y_i, \theta_i$ denote
the $i$-th bits of $y, \theta$ respectively:

$$\ket{y}^\theta \seteq H^{\theta_1}\ket{y_1} \otimes \ldots
\otimes H^{\theta_n}\ket{y_n}$$

We denote pure quantum states as $\ket{\psi}, \ket{\sz}$ etc
while the quantum states $\rho, \sigma$ will either refer to pure or
mixed states based on the context. We use the notation
$\ket{\psi}_{\qreg{R}}$ to denote a state on register $\qreg{R}$ and $\rhoAB$ to
denote a bi-partite state on register $\qreg{A}\otimes\qreg{B}$. We
denote the state on $\qreg{B}$ obtained by tracing out $\qreg{A}$ as
$\rhoB \seteq \Tr_{\qreg{A}}(\rhoAB)$.
For simplicity, we often use $\rhoB$ to denote the state
on register $\qreg{B}$ at different points when it is clear from
context. For e.g., we do not use a different $\rhoB'$ to refer to
the state obtained after $\qreg{A}$ is
collapsed to some specific value.
An EPR pair is
the following state on a pair of registers $\qreg{P}, \qreg{V}$:
$\frac{1}{\sqrt2}\big(\ket{0}_{\qreg{P}}\ket{0}_{\qreg{V}} +
\ket{1}_{\qreg{P}}\ket{1}_{\qreg{V}} 
\big)$, where $\qreg{P}, \qreg{V}$ are often referred to as EPR
halves.

\subsection{Cryptography Basics}

\begin{definition}[Learning with Errors (LWE) \cite{STOC:Regev05}]
Consider integer functions $n,m,q \in \mathbb{Z}$ of the security
parameter $\secp$. Let $\chi(\secp)$ be an error
distribution over $\mathbb{Z}$. The LWE assumption is said to hold if
for every QPT adversary $\qA$, there exists a negligible function
$n(\secp)$ such that:

\[
\Bigg\lvert
\Pr\left[
\qA\big(A, y\big) \ra 1
\ \bigg\lvert
\begin{array}{rl}
&A \la \mathbb{Z}^{n\times m}_q\\
&s \la \mathbb{Z}^n_q\\
&e \la \chi^m\\
&y \seteq s^\intercal A + e
\end{array}
\right]
-
\Pr\left[
\qA\big(A, u\big) \ra 1
\ \bigg\lvert
\begin{array}{rl}
&A \la \mathbb{Z}^{n\times m}_q\\
& u \la \mathbb{Z}^m_q\\
\;
\end{array}
\right]
\Bigg\rvert
\le n(\secp)
\]
\end{definition}

\begin{definition}[Quantum-Secure Pseudo-Random Function]\label{def:qprf}
Let $\{\F_\prfk : \bit^{\ell_\inn} \ra
\bit^{\ell_\out}\}_{\prfk\in \bit^\secp}$
be an efficiently computable family of functions, where $\ell_\inn,
\ell_\out$ are some polynomials in $\secp$. Then,
$\{\F_\prfk\}_\prfk$ is said to be a quantum-secure PRF if for every QPT adversary $\qA$,
there exists a negligible function $n(\secp)$ such that the following
holds, where $\mathcal{R}$ is
the set of all functions with input $\ell_\inn$ and output
$\ell_\out$:

\[
\bigg\lvert
\Pr\left[
    1 \la \qA^{\ket{f(\cdot)}}(1^\secp) \;\big\lvert\; f \la \mathcal{R}
\right] 
-
\Pr\left[
1 \la \qA^{\ket{\F_\prfk(\cdot)}}(1^\secp) \; \big\lvert \;
    \prfk \la \bit^{\secp}
\right]
\bigg\rvert
\le n(\secp)
\]
\end{definition}

\begin{theorem}[\cite{FOCS:Zhandry12}] Quantum-secure PRFs exist, assuming
post-quantum secure one-way functions exist.
\end{theorem}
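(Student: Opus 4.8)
The plan is to follow the standard two-step template: first bootstrap a quantum-secure pseudorandom generator (PRG) out of the post-quantum one-way function, and then stretch it into a PRF via the Goldreich--Goldwasser--Micali (GGM) tree construction, with essentially all of the real work going into showing that GGM stays secure when the distinguisher may query the function in superposition.

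First I would observe that every reduction in the chain ``OWF $\Rightarrow$ hardcore predicate (Goldreich--Levin) $\Rightarrow$ length-doubling PRG'' (via HILL-style extraction, or simply iterating a hardcore bit) is \emph{query-free}: the adversary is invoked on a single non-interactive challenge and is never given oracle access. Hence a QPT distinguisher breaking the PRG with non-negligible advantage, fed through exactly the same (classical, efficient, rewinding-free) reduction run on a quantum black box, yields a QPT inverter for the OWF, contradicting its post-quantum security. So post-quantum OWFs give a length-doubling PRG $G=(G_0,G_1)$, $G_b:\bit^\secp\to\bit^\secp$, secure against QPT distinguishers. Fixing $\ell_\out=\secp$ (general output length follows by truncation or one more layer), the candidate is $\F_\prfk(x_1\cdots x_n)=G_{x_n}(G_{x_{n-1}}(\cdots G_{x_1}(\prfk)\cdots))$.

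The heart of the proof is that $\{\F_\prfk\}$ is indistinguishable from a truly random function against adversaries making $q=\poly(\secp)$ \emph{quantum} queries. Classically one uses hybrids $H_0,\dots,H_n$, where $H_i$ replaces the depth-$i$ labels of the GGM tree by a uniformly random $R_i:\bit^i\to\bit^\secp$ and runs GGM onward; $H_0=\F_\prfk$, $H_n$ is a random function, and $H_i\approx H_{i+1}$ because a classical $q$-query adversary touches at most $q$ of the $2^i$ depth-$i$ nodes, so one hybridizes over those $\le q$ nodes with one PRG challenge per step. The obstacle --- and the reason this is nontrivial --- is that a superposition query touches \emph{all} $2^i$ nodes, killing the counting argument. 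Zhandry's fix is the small-range distribution lemma: against any $q$-query quantum algorithm, a uniformly random $R_i:\bit^i\to\bit^\secp$ is within trace distance $O(q^3/\ell)$ of $P_i\circ r_i$ for independent uniform $r_i:\bit^i\to[\ell]$ and $P_i:[\ell]\to\bit^\secp$; taking $\ell=\secp\cdot q^3$ keeps $\ell$ polynomial while the error is negligible. After this substitution only the $\ell$ seeds $P_i(1),\dots,P_i(\ell)$ feed the remaining subtrees, so one can now hybridize over $j\in[\ell]$, in step $j$ replacing the pair $G(P_i(j))=(G_0(P_i(j)),G_1(P_i(j)))$ by a uniformly random element of $\bit^{2\secp}$; each step reduces to PRG security via an \emph{efficient} reduction that plants its single PRG challenge at index $j$, samples all remaining seeds itself, and answers the distinguisher's quantum queries (it can, since it holds every seed value and $G$ is efficiently computable). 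Pushing all $\ell$ swaps through turns the depth-$(i+1)$ labels into a uniform-random function in small-range form, and re-applying the small-range lemma (now with range $2\ell$) identifies this with $H_{i+1}$; summing over the $n$ levels, $|\Pr[H_0{=}1]-\Pr[H_n{=}1]|\le n\cdot(\ell\cdot\mathrm{Adv}_{\mathrm{PRG}}+O(q^3/\ell))=\negl$.

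I expect the delicate point to be exactly the bookkeeping at the \emph{transition} between levels: after the $\ell$-fold PRG hybrid the depth-$(i+1)$ labels are generated as the children of $P_i(r_i(\cdot))$, i.e.\ as a fresh uniform function of range $2\ell$ precomposed with the ``append-a-bit'' map $x_{1..i+1}\mapsto(r_i(x_{1..i}),x_{i+1})$, and one must argue this composite is again $O(q^3/\ell)$-close to a genuinely uniform $R_{i+1}:\bit^{i+1}\to\bit^\secp$, so the induction closes --- this follows from the small-range lemma invoked with the right range size but requires checking that the induced collision structure matches that of a uniform function up to $O(q^2/\ell)$. Everything else is routine: the PRG-security reductions are the textbook ones, and the only quantum-specific ingredients are (i) that the OWF/PRG reductions are query-free and hence lift verbatim, and (ii) the small-range distribution lemma of \cite{FOCS:Zhandry12}, proved by a moment/symmetrization argument on the magnitudes of the adversary's queries, which is the genuinely new tool.
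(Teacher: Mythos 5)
The paper does not prove this statement at all: it is imported verbatim from Zhandry \cite{FOCS:Zhandry12}, so there is no in-paper argument to compare against. Your sketch is essentially the cited proof itself --- post-quantum OWF $\Rightarrow$ quantum-secure PRG via reductions that never give the adversary oracle access (with the caveat that ``iterating a hardcore bit'' only works for one-way \emph{permutations}; for general OWFs you do need the HILL route you mention), followed by GGM made quantum-query-secure via the small-range distribution lemma. Your ``delicate point'' at the level transition is also resolved exactly as you suspect: view the post-swap oracle as a small-range function from $\bit^i$ into \emph{pairs} of children in $\bit^{2\secp}$ followed by selection on $x_{i+1}$, and likewise decompose the truly random $R_{i+1}$; then one more invocation of the lemma (on range $\bit^{2\secp}$, still with $\ell$ samples) closes the induction without any new collision analysis.

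One quantitative step in your write-up is wrong as stated and would make the conclusion fail: with $\ell=\secp\cdot q^3$ the small-range error $O(q^3/\ell)=O(1/\secp)$ is inverse-polynomial, not negligible, so the final bound $n\cdot\big(\ell\cdot\mathrm{Adv}_{\mathrm{PRG}}+O(q^3/\ell)\big)$ is \emph{not} $\negl$ for any single polynomial choice of $\ell$. The standard repair (and Zhandry's) is to fix the target advantage first: assuming a distinguisher with advantage $\ge 1/p(\secp)$ infinitely often, choose $\ell=\Theta(n\,q^3\,p(\secp))$, so the accumulated small-range error is at most, say, $1/(3p(\secp))$ while $n\,\ell\cdot\mathrm{Adv}_{\mathrm{PRG}}$ remains negligible because $\ell$ is still polynomial; this contradicts the assumed advantage, and since $p$ was an arbitrary polynomial the PRF advantage is negligible. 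Equivalently, state the bound non-asymptotically for every polynomial $\ell$ and then quantify over $\ell$ last. With that bookkeeping fixed, your proposal is a faithful reconstruction of the cited proof.
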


\begin{lemma}[One-Way to Hiding (O2H) \cite{C:AmbHamUnr19}]\label{lma:o2h}
Consider any functions $G, H: \cX \ra \cY$, an arbitrary value $z$,
and some set $S \subseteq \cX$ such that for every $x \notin S$, $G(x) =
H(x)$. The values $G, H, S, z$ may be arbitrarily distributed.
Consider now an oracle algorithm $\qA$ that makes at most $q$ queries
to its oracle. Let $\qB$ be an oracle algorithm such that $\qB^H$ executes as
follows on input $z$:

$\underline{\qB^H(z):}$

\begin{enumerate}
\item Choose $i \la [q]$.
\item Execute $\qA^H(z)$ until it makes its $i$-th query.
\item Measure the $i$-th query and output the outcome.
\end{enumerate}

Then, the following holds:

$$\bigg\lvert \Pr\Big[\qA^H(z) = 1\Big] - \Pr\Big[\qA^G(z) =
1\Big]\bigg\rvert \le 2q \cdot \sqrt{\Pr\Big[\qB^H(z) \in S\Big]}$$

\begin{remark}
It is easy to see from the proof of \cite{C:AmbHamUnr19} that the lemma holds even
when the auxiliary input $z$ is some quantum state $\rho_z$. A similar
observation was made in \cite{AC:HhaXagYam19}, where it was mentioned
that $z$ could even be exponentially large, and hence allows to describe
quantum states as well.
\end{remark}

\end{lemma}

\subsection{Quantum-Secure Non-Interactive
Zero-Knowledge}

\begin{definition}\label{def:nizk}
A triple of PPT algorithms $(\Setup, \Pro, \Ver)$ is
a quantum-secure non-interactive zero-knowledge proof (NIZK) in the common
reference string (CRS) model for a language $\cL$, if
it satisfies the following properties:

\begin{itemize}
\item \emph{Completeness:} The following holds for every $x \in \cL$ and
$\wit \in \RL(x)$:

\[
\Pr\left[
\Ver\big(\crs, x, \pi\big) \ra 0
 \ :
\begin{array}{rl}
 &\crs \gets \Setup(1^\secp)\\
 & \pi \gets \Pro(\crs, x, \omega)\\
\end{array}
\right] \le \negl
\]

\item \emph{Statistical (Non-Adaptive) Soundness:} For every $\tx \notin \cL$, the following holds:

\[
\Pr_{\crs \la \Setup(1^\secp)}\left[
    \exists \tpi : \Ver(\crs, \tx, \tpi) \ra 1
\right] \le \negl.
\]

\item \emph{Statistical Adaptive Soundness:} 

\[
\Pr_{\crs \la \Setup(1^\secp)}\left[
\exists (\tx, \tpi) : \tx \notin \cL \land \Ver(\crs, \tx, \tpi) \ra 1
\right] \le \negl.
\]

\item \emph{Computational Zero-Knowledge with
Statistical CRS Indistinguishability:} There exists
a QPT simulator
$\Sim$ such that for all $x \in \cL$ and $\wit \in \RL(x)$, the
following holds wrt non-uniform QPT distinguishers:

\[
\left[
\begin{array}{ll}
(x, \crs, \pi) \approx_c (x, \crs', \pi')\\
\;\;\;\;\;\;\;\;\;\;\;\;\;\;\;\;\land \\
\;\;\;\;(x, \crs) \approx_s (x, \crs')
\end{array}
 \ :
\begin{array}{rl}
 &\crs \la \Setup(1^\secp)\\
 & \pi \la \Pro(\crs, x, \wit)\\
 & (\crs', \pi') \la \Sim(x)\\
\end{array}
\right]
\]

\begin{remark}
Statistical CRS
indistinguishability cannot hold for the CRS output by $\Sone$ of an
adaptive ZK simulator $(\Sone, \Stwo)$ of a
statistically-sound NIZK. This was
mentioned in \cite{AC:Groth06}, and can be seen from the proof of
\cite{TCC:Pass13}.
\end{remark}

\item \emph{Quantum-Secure (Adaptive) Zero-Knowledge:} There exists a
QPT two-part simulator $(\Sone, \Stwo)$ such that the following
holds for every QPT adversary $\qA$ that only queries \emph{valid}
statement-witness pairs in superposition, i.e., terms of the form
$\ket{x, \wit}$ where $x \in \cL \land \wit \in \RL(x)$:

\[
\left[
\begin{array}{ll}
\qA^{\ket{\Stwo(\td, \cdot)}}(\crs')\\
\;\;\;\;\;\;\;\approx_c\\
\qA^{\ket{\Pro(\crs, \cdot, \cdot)}}(\crs)\\
\end{array}
 \ :
\begin{array}{rl}
 &\crs \la \Setup(1^\secp)\\
 & (\crs', \td) \la \Sone(1^\secp)\\
\end{array}
\right]
\]

\end{itemize}
\end{definition}


\begin{theorem}[\cite{EC:QuaRotWic19,EC:WatWeeWu25,CiC:ACEMV24}]
Assuming the polynomial hardness of the LWE
problem, there exist NIZKs for $\NP$ in the CRS model with
the following properties:
\begin{itemize}
\item Statistical (Adaptive) Soundness
\item Quantum-Secure (Adaptive) Zero-Knowledge
\item Computational (Non-Adaptive) ZK with Statistical CRS Indistinguishability
\end{itemize}
\end{theorem}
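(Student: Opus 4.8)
The plan is to instantiate the hidden-bits paradigm of \cite{FLS90} in the common reference string model using the LWE-based statistically binding hidden-bits generator (HBG) of \cite{EC:WatWeeWu25} via the compiler of Quach, Rothblum and Wichs \cite{EC:QuaRotWic19}, and then to verify that this single construction satisfies all three listed properties, invoking \cite{CiC:ACEMV24} for the quantum-secure zero-knowledge part. Concretely, recall that the hidden-bits model NIZK of \cite{FLS90} is an unconditional object for an $\NP$-complete language: on a uniformly random hidden string $r$ it outputs a subset $I$ of positions to reveal together with a proof $\nizk$, it is statistically sound over the choice of $r$ even against an unbounded prover that adaptively chooses $I$, and the verifier's view $(I, \{r_i\}_{i \in I}, \nizk)$ is statistically simulatable from the statement alone. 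The HBG of \cite{EC:WatWeeWu25}, in the CRS model, lets the prover sample a succinct commitment $\com$ that together with $\crs$ fixes a near-uniform hidden string, provably open any chosen positions to the verifier, is statistically binding up to negligible slack, and computationally hides the unopened positions. The composed scheme sets $\crs = (\crs_{\HBG}, \crs_{\hb})$; $\Pro$ samples $(\com, r)$ from the HBG, runs the hidden-bits prover on $r$ to obtain $(I, \nizk)$, opens the positions in $I$, and sends $\com$, $I$, the openings, and $\nizk$; $\Ver$ checks the openings against $(\crs_{\HBG}, \com)$ and runs the hidden-bits verifier.

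Next I would verify the three properties. For \textbf{statistical adaptive soundness}, condition on the overwhelmingly likely event that $\crs_{\HBG}$ is binding; then each commitment $\com$ (of which there are only $2^{\poly(\secp)}$) is bound to a unique near-uniform string $r(\crs_{\HBG}, \com)$, and the statistical soundness of the hidden-bits NIZK — with error made small enough that a union bound over all commitments and all statements $\tx \notin \cL$ of length $|\tx|$ survives, exactly as in \cite{EC:QuaRotWic19} — rules out any accepting transcript, even for a statement chosen after $\crs$. For \textbf{computational zero-knowledge with statistical CRS indistinguishability}, the simulator samples $\crs_{\HBG}$ in the HBG's equivocation mode, whose marginal is statistically close to the honest $\crs_{\HBG}$ while additionally permitting a fake $\com$ whose opened bits can be set to arbitrary targets with the unopened ones computationally hidden; it then runs the hidden-bits NIZK simulator on $x$ to get $(I, \{r_i\}_{i \in I}, \nizk)$, produces $\com$ and openings consistent with the $r_i$ on $I$, and outputs $\crs$ together with this transcript. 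Indistinguishability of the full pair is computational, reducing to HBG hiding composed with the statistical guarantee of the hidden-bits simulator; indistinguishability of the $\crs$ marginal alone is \emph{statistical}, since the HBG equivocation-mode CRS marginal is statistically close to honest and $\crs_{\hb}$ is sampled identically in both worlds — this is the property the LWE-based hidden-bits NIZKs enjoy but Fiat--Shamir ones do not.

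Finally, for \textbf{quantum-secure adaptive zero-knowledge}, I would invoke \cite{CiC:ACEMV24}. The crucial observation is that the simulator above is \emph{history-free}: after a one-time trapdoored setup $\Sone$ that outputs $(\crs', \td)$, each proof is simulated by a fixed efficient procedure $\Stwo(\td, \cdot)$ run with fresh independent randomness, and the honest prover $\Pro(\crs, \cdot, \cdot)$ is likewise a fixed efficient procedure; hence both the real and simulated oracles can be realized as unitaries on (statement, witness, randomness, output) registers, and a standard hybrid over a purified oracle shows that an adversary making $\poly(\secp)$ superposition queries on \emph{valid} pairs $\ket{x, \wit}$ distinguishes them with advantage only polynomially larger than the single-shot advantage, which is negligible by the post-quantum hiding of the HBG. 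I expect this last step to be the main obstacle: ordinary zero-knowledge simulators are stateful or rely on rewinding, which breaks immediately under superposition access, so the argument hinges on showing that the \cite{EC:QuaRotWic19, EC:WatWeeWu25} simulator really can be made history-free (a single trapdoor, then independent per-proof simulation) and that the resulting oracle-replacement reduction to HBG hiding is local enough that superposition queries do not amplify the advantage. This is precisely the content of \cite{CiC:ACEMV24}, so I would cite it for that step rather than reprove it, leaving only the routine bookkeeping of the soundness union bound and the CRS-marginal closeness.
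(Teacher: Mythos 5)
Your overall route is the paper's: instantiate the \cite{EC:QuaRotWic19} hidden-bits compiler with the statistically binding HBG of \cite{EC:WatWeeWu25}, get statistical adaptive soundness from binding plus a union bound over the succinct commitment set, and delegate quantum-secure adaptive zero-knowledge to the observation of \cite{CiC:ACEMV24} that the HBM-to-CRS transformation is compatible with superposition queries (the only computational step is switching a classical, query-independent CRS distribution before appealing to the statistical security of the hidden-bits NIZK). Those two parts are fine. However, your argument for computational zero-knowledge with \emph{statistical CRS indistinguishability} has a genuine flaw: you posit that the simulator samples $\crs_{\HBG}$ in an ``equivocation mode'' that is statistically close to the honest HBG CRS yet permits a fake $\com$ whose opened bits can be set to arbitrary targets. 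No such mode can exist for a statistically binding HBG: binding is an event over the CRS that holds with overwhelming probability under the honest distribution, so any CRS distribution statistically close to it is also binding with overwhelming probability, and the claimed equivocation capability would then directly violate the binding property (which you simultaneously need for statistical soundness). So the simulator you describe is impossible, and your stated reason for CRS-closeness (``$\crs_{\hb}$ is sampled identically in both worlds'') is exactly backwards.

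The correct mechanism---and the one the paper points out---is that in the \cite{EC:QuaRotWic19} compiler the CRS is $(\crs_\bg, s)$ with $s$ a truly random offset string, and the hidden bits are derived as the HBG string XOR $s$. The simulator keeps the HBG entirely honest (honest $\crs_\bg$, honest $\com$ and openings of the generated string $r_\bg$) and programs the \emph{offset}: it runs the hidden-bits simulator to get $(I, r_I, \pi)$, extends $r_I$ to a string $r_\hb$, and sets $s \seteq r_\bg \oplus r_\hb$. Statistical CRS indistinguishability then follows because $r_\hb$ is statistically close to uniform (by the statistical zero-knowledge of the FLS hidden-bits NIZK), hence so is $s$; computational indistinguishability of the full transcript follows from HBG hiding of the unopened positions. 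In short, the programming lives in $s$, not in $\crs_\bg$; with that correction your write-up aligns with the paper's argument.
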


In more detail, the work of \cite{EC:QuaRotWic19} showed that any
statistical-binding hidden bits generator (HBG) (Definition
\ref{def:hbg}) implies a statistically-sound NIZK. The works of
\cite{STOC:Waters24,EC:WatWeeWu25} constructed such HBGs from LWE with
sub-exponential and polynomial modulus respectively. In the work of
\cite{CiC:ACEMV24}, it was argued that the NIZK of \cite{TCC:BitPan15}
from iO and OWFs satisfies the stronger notion of quantum-secure
(adaptive) zero-knowledge. We observe that their reasoning also
applies to the NIZK of \cite{EC:WatWeeWu25} and the compiler of
\cite{EC:QuaRotWic19} due to the following reason:

The high level intuition is that these constructions simulate the idealized
hidden bits model (HBM) \cite{FLS90} in the CRS model. Note that the
idealized HBM provides statistical zero-knowledge (along with
statistical soundness), and hence also quantum-secure zero-knowledge.
The observation made by \cite{CiC:ACEMV24} is that the HBM to CRS
model transformation also plays well with superposition queries. This
is because the proof of the zero-knowledge property only requires
switching the CRS distribution computationally, before appealing to
the statistical security of NIZK in the HBM. Since the CRS is classical and
independent of the statement-witness queries, the quantum-security implication follows. For
more details, see Appendix D of \cite{CiC:ACEMV24}.

Finally, we observe that in the HBG to NIZK compiler of
\cite{EC:QuaRotWic19}, the CRS in the construction is computed as
$(\crs_\bg, s)$ where the former is the CRS of the HBG and $s$ is a
truly random string. In the simulation $\Sim(x)$, $\crs_\bg$ is
computed the same way, but $s$ is computed as $s \seteq r_\bg \oplus
r_\hb$ where $r_\bg$ is a pseudo-random string generated by the HBG,
while $r_\hb$ is a close to uniform hidden-bit string output by the hidden
bits simulator $\Sim_\hb(x)$. Hence, statistical CRS
indistinguishability follows by the statistical zero-knowledge guarantee
of the hidden bits NIZK proof. This can also be seen from the
proof of computational zero-knowledge in our shared EPR construction
in Section \ref{sec:crqs_const},
which is similar to the one in \cite{EC:QuaRotWic19}.

\subsection{Certified Deletion of BB84 States}\label{sec:bb84}

\begin{theorem}[Certified Deletion of BB84
States \cite{C:BarKhu23}]\label{thm:cert-del}
Consider a distribution $\cZ(\cdot, \cdot, \cdot)$
with three arguments: 1) a $\secp$-bit string $\theta$, 2) a bit $b$,
and 3) a $\secp$-qubit register $\qreg{D}$. For all $\theta \in
\bit^\secp$, $b' \in \bit$ and quantum states
$\ket{\psi}_{\qreg{D}\otimes\qreg{C}}$ on a $\secp$-qubit register
$\qreg{D}$ and an arbitrary size register $\qreg{C}$, let the
following hold for all QPT distinguishers $\qD:$

$$\bigg\lvert \Pr\Big[\qD\left(1^\secp, \cZ(\theta, b', \qreg{D}),
    \qreg{C}\right) = 1\Big] -
\Pr\Big[\qD\left(1^\secp, \cZ(0^\secp, b', \qreg{D}),
    \qreg{C}\right) = 1\Big]
\bigg\rvert \le \negl$$

In other words, $\cZ$ is semantically-secure wrt $\theta$. Consider
now the following experiment wrt an adversary $\qA$:

$\underline{\Exp_{\cZ, \qA}(b)}:$

\begin{enumerate}
\item The experiment samples $y, \theta \la \bit^\secp$ and
initializes $\qA$ with input $1^\secp$ along with the following
input:
$$\cZ\bigg(\theta, b \oplus \bigoplus_{i:\theta_i = 0}y_i,
\ket{y}_\theta\bigg)$$
\item $\qA$ sends a string $\cert \in \bit^\secp$ and a
quantum state $\rhoB$ on register $\qreg{B}$.
\item If $\forall i \in [\secp]$ such that $\theta_i = 1$, $\cert_i =
y_i$, then the experiment outputs $\rhoB$. Else, it outputs $\bot$.
\end{enumerate}

Then, the following guarantee holds for all QPT adversaries $\qA:$
$$\TD\bigg(\Exp_{\cZ, \qA}(0), \Exp_{\cZ, \qA}(1)\bigg) = \negl$$
\end{theorem}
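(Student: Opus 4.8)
The plan is to reduce the statement to a purely information-theoretic fact about BB84 states, using the semantic security of $\cZ$ only to ``forget'' the basis string $\theta$ inside the part of the view that is processed by $\cZ$, and then to exploit the complementarity between the computational basis --- which controls the one-time-pad bit $\bigoplus_{i:\theta_i=0}y_i$ masking $b$ --- and the Hadamard basis, which the adversary is effectively forced into in order to return a valid certificate.

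First I would rewrite both experiments in purified form. Instead of sampling $y$ and preparing $\ket{y}^\theta$, the experiment prepares $\secp$ EPR pairs, keeps one half of each (registers $\qreg{Y}_1,\dots,\qreg{Y}_\secp$), applies $H^{\theta_i}$ to the other halves, and sends those (after applying $\cZ$) to $\qA$; then each $y_i$ is the outcome of measuring $\qreg{Y}_i$ in basis $\theta_i$, the certificate check is a Hadamard-basis measurement of $\{\qreg{Y}_i\}_{i:\theta_i=1}$ compared with $\cert$, and the masked bit $b\oplus\bigoplus_{i:\theta_i=0}y_i$ is produced by a single non-destructive parity measurement $\bigotimes_{i:\theta_i=0}Z_i$ on $\{\qreg{Y}_i\}_{i:\theta_i=0}$. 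This change is syntactic and preserves both experiments exactly. Next, a reparametrization $y\mapsto y\oplus e_{i_0}$ for a fixed coordinate $i_0$ with $\theta_{i_0}=0$ (which exists for all but a negligible fraction of $\theta$, the rest being handled directly) shows that, conditioned on $\theta$, the experiment $\Exp_{\cZ,\qA}(1)$ is \emph{identical} to $\Exp_{\cZ,\qA}(0)$ except that a Pauli $X$ is applied to the adversary-bound qubit at coordinate $i_0$ before $\cZ$. So it suffices to show this local $X$ has negligible effect on the output conditioned on a valid certificate.

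This is where semantic security enters: since it gives $\cZ(\theta,\cdot,\cdot)\approx_c\cZ(0^\secp,\cdot,\cdot)\approx_c\cZ(\theta\oplus e_{i_0},\cdot,\cdot)$, one can pretend coordinate $i_0$ is a Hadamard coordinate, and $X$ is a mere phase on a Hadamard-basis qubit ($XH\ket{c}=(-1)^cH\ket{c}$). Making this rigorous requires a lockstep hybrid in which the second index of $\cZ$ is changed to $\theta\oplus e_{i_0}$, the qubit at coordinate $i_0$ is switched from the computational to the Hadamard basis (legitimate because its marginal is maximally mixed either way, once the purifying register $\qreg{Y}_{i_0}$, the parity measurement, and the certificate check are adjusted consistently), and finally the remaining closeness is read off from the complementary-basis structure of BB84 states: an adversary returning Hadamard-basis values matching $\{\qreg{Y}_i\}_{i:\theta_i=1}$ on the certificate coordinates cannot simultaneously keep information sensitive to a computational-basis Pauli on the other coordinates. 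This last step is a quantitative uncertainty-relation / monogamy-of-entanglement argument for BB84 states, essentially the one underlying the certified-deletion encryption of \cite{TCC:BroIsl20}; alternatively, after purification one can invoke an off-the-shelf entropic uncertainty relation.

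The step I expect to be the main obstacle is reconciling the \emph{computational} hypothesis on $\cZ$ with the desired \emph{statistical} conclusion. Trace distance between the two outputs is witnessed by an arbitrary, possibly inefficient, measurement on the residual register $\qreg{B}$, whereas semantic security only constrains QPT distinguishers; in particular one cannot simply replace $\cZ(\theta,\cdot,\cdot)$ by $\cZ(0^\secp,\cdot,\cdot)$ ``up to trace distance.'' The way to thread this is to keep every semantic-security invocation confined to a change whose effect is observed only through an \emph{efficient} subroutine of the experiment --- namely $\qA$ together with the efficient certificate check and parity readout --- so that the unbounded distinguisher on $\qreg{B}$ is confronted only at the end, where the comparison has become parameter-free and information-theoretic. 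Two secondary points also need care: the semantic-security hypothesis is quantified over worst-case $\theta$, $b'$, and auxiliary state, so it can be invoked inside the experiment where these are sampled and correlated (a bound holding pointwise averages to an overall bound); and the negligible fraction of ``bad'' $\theta$ (those without a usable coordinate $i_0$, e.g.\ $\theta=1^\secp$, or too unbalanced for the uncertainty bound to be meaningful) must be bounded separately and absorbed into the final $\negl$.
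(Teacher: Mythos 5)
First, a calibration point: this theorem is not proven in the paper at all --- it is imported verbatim from \cite{C:BarKhu23} as a preliminary --- so your sketch has to be measured against the original Bartusek--Khurana argument. Your purification and the reparametrization $y \mapsto y \oplus e_{i_0}$ (reducing the $b=0$ versus $b=1$ comparison to the sensitivity of the post-selected output to a single Pauli $X$ on one computational-basis coordinate, with the $\theta = 1^\secp$ case absorbed into $\negl$) are correct and in the right spirit. The genuine gap is the pivotal middle step. The move you call ``switch the qubit at coordinate $i_0$ from the computational to the Hadamard basis, adjusting the purifying register, the parity measurement and the certificate check consistently'' is not syntactic and is not covered by semantic security: it changes which parity is fed into $\cZ$ as the masked bit and what the post-selection event is, i.e., it changes the joint distribution of the adversary's input and the challenger's records, and ``the marginal of that qubit is maximally mixed either way'' does not address this correlation --- its harmlessness is essentially the theorem itself. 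Likewise your closing claim, that an adversary whose certificate matches the Hadamard outcomes at the $\theta_i = 1$ coordinates cannot retain sensitivity to a computational-basis Pauli elsewhere, is false unconditionally (an adversary who knew $\theta$ could Hadamard-measure exactly the checked coordinates and keep qubit $i_0$ intact); it holds only in a hybrid where $\theta$ is decoupled from the adversary's view. But a chain of the form real $\approx_c$ hybrid $\approx_s$ hybrid $\approx_c$ real cannot yield a trace-distance conclusion, since computational indistinguishability gives nothing against the unbounded distinguisher that defines $\TD\big(\Exp_{\cZ,\qA}(0), \Exp_{\cZ,\qA}(1)\big)$. You flag exactly this tension, and your proposed remedy (confine every computational step to efficiently observed quantities) is the right instinct, but the sketch never implements it: as written, the effect of swapping $\cZ$'s basis argument and re-encoding coordinate $i_0$ has to be tracked all the way to the final post-selected residual state.

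The actual proof closes this gap by reversing the roles of the two ingredients: one first proves an \emph{unconditional} operator inequality bounding the trace distance between the two post-selected outputs by (roughly the square root of) the probability of an \emph{efficiently checkable} event in a single experiment --- in essence, that the adversary's behaviour is also consistent with the challenger's Hadamard-basis outcome at the unchecked coordinate $i_0$ in the ``wrong'' branch --- and only then invokes semantic security of $\cZ$, through a reduction that can check this event because it knows $(\theta, y)$, to argue that this probability is negligibly different from its value in the experiment where coordinate $i_0$ carries no computational-basis information, where it is information-theoretically small. Probabilities of efficiently checkable events transfer across computational indistinguishability; trace distances of residual quantum states do not, and that is precisely the mechanism your hybrid chain is missing. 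Relatedly, the entropic-uncertainty argument of \cite{TCC:BroIsl20} is not ``off the shelf'' here: it is tailored to their concrete scheme, and extending certified deletion to an arbitrary semantically-secure $\cZ$ is exactly what required the different argument of \cite{C:BarKhu23}.
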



\section{Certified-Everlasting NIZK Proofs }\label{sec:def_ce-niwi}

\subsection{Common Reference String Model}

\begin{definition}[CE-NIZK in the CRS Model]\label{def:ce-nizk}
A certified-everlasting non-interactive zero-knowledge
proof (CE-NIZK) in the CRS model for a language $\cL$ is
a tuple of four algorithms $(\Setup, \Pro, \Ver, \allowbreak\DelVrfy)$ with
the following syntax:

\begin{itemize}
\item $\Setup(1^\secp) \to \crs:$ The classical setup algorithm takes
as input a security parameter, and outputs a common reference string (CRS) $\crs$.

\item $\Pro(\crs, x, \wit) \to (\sigma, \qP):$ The quantum prover's
algorithm takes as input a CRS $\crs$, a statement $x \in \cL$, and
a witness $\wit \in \RL(x)$. It outputs a quantum proof state $\sigma$ and a
residual state $\qP$.

\item $\Ver(\crs, x, \sigma) \to (b, \qV):$ The quantum verifier's
algorithm takes as input the CRS $\crs$, a statement $x \in \cL$, and
a quantum proof $\sigma$. It outputs a bit $b$ along with a residual state $\rho_{\Ver}$.

\item $\DelVrfy(\qP, \qV) \to \top / \bot:$ The
certification algorithm takes as input a prover's state
$\qP$ and a verifier's state $\qV$. It outputs $\top$ (accept) or $\bot$ (reject).
\end{itemize}

A CE-NIZK must satisfy the following properties:

\begin{itemize}
\item \emph{Completeness:} The following holds for every $x \in \cL$ and
$\wit \in \RL(x)$:

\[
\Pr\left[
\Ver\big(\crs,x,\sigma\big)[1] \ra 0
 \ :
\begin{array}{rl}
 &\crs \gets \Setup(1^\secp)\\
 & (\sigma, \qP) \gets \Pro(\crs, x, \omega)\\
\end{array}
\right] \le \negl
\]

\item \emph{Deletion Correctness:} The following holds for every $x \in \cL$
and $\wit \in \RL(x)$:

\[
\Pr\left[
\DelVrfy\big(\qP,\qV\big) \ra \bot
 \ :
\begin{array}{rl}
 &\crs \la \Setup(1^\secp)\\
 & (\sigma, \qP) \la \Pro(\crs, x, \wit)\\
 & (b, \qV) \la \Ver(\crs, x, \sigma)
\end{array}
\right] \le \negl
\]

\item \emph{Statistical Soundness:} For every $\tx \notin \cL$, the following holds:

\[
\Pr_{\crs \la \Setup(1^\secp)}\left[
    \exists \tsigma : \Ver(\crs, \tx, \tsigma)[1] \ra 1
\right] \le \negl.
\]

%

%
%
%
%

\item \emph{Computational Zero-Knowledge:} There exists a simulator
$\Sim$ such that for all $x \in \cL$ and $\wit \in \RL(x)$, the
following holds:

\[
\left[
(x, \crs, \sigma) \approx_c (x, \crs', \sigma')
 \ :
\begin{array}{rl}
 &\crs \la \Setup(1^\secp)\\
 & (\sigma, \qP) \la \Pro(\crs, x, \wit)\\
 & (\crs', \sigma') \la \Sim(x)\\
\end{array}
\right]
\]

\item \emph{Certified-Everlasting Zero-Knowledge (CE-ZK):}
For every malicious QPT verifier $\Ver^*$, there exists a QPT simulator
$\Sim_{\Ver^*}$ such that the following holds for every $x \in \cL$
and $\wit \in \RL(x)$:

\[
\left[
\trho \approx_s  \rho'
 \ :
\begin{array}{rl}
 &\crs \la \Setup(1^\secp)\\
 & (\sigma, \qP) \la \Pro(\crs, x, \wit)\\
 & \rho_{\qreg{A}\otimes\qreg{B}} \la \Ver^*(\crs, x, \sigma)\\
 & \textrm{If}\;\DelVrfy(\qP,\rho_{\qreg{A}}) = \top \;
 \textrm{then}\; \trho
 \seteq \rho_{\qreg{B}}.\\
 & \textrm{Else}\; \trho \seteq \bot.\\
 & \rho' \la \Sim_{\Ver^*}(x)
\end{array}
\right]
\]

Here, $\rho_{\qreg{A}\otimes\qreg{B}}$ is a possibly entangled state
on two registers $\qreg{A}$ and $\qreg{B}$. We denote the residual
state on $\qreg{A}$ (likewise $\qreg{B}$) at any point as
$\rho_{\qreg{A}}$ (likewise $\rho_{\qreg{B}}$).

\begin{remark}
The CE-ZK definition describes non-black-box simulation (for
generality in Theorem \ref{thm:del_res_imp}), but our feasibility
uses only black-box simulation.
\end{remark}

%
%
%
%
%
%
%
%
%
\end{itemize}

\end{definition}


\subsection{Shared EPR Model}

\begin{definition}[CE-NIZK in the Shared EPR Model] \label{def:ce_nizk_epr}
Let $\qreg{(P_1, V_1)} \ldots \allowbreak\qreg{(P_\ell, V_\ell)}$ be registers
denoting halves of $\ell$-many EPR pairs. Let the prover's register be
defined as $\qreg{P} \seteq \qreg{P_1} \otimes \ldots \otimes \qreg{P_\ell}$ and the
verifier's register as $\qreg{V} \seteq \qreg{V_1} \otimes \ldots
\otimes \qreg{V_\ell}$. A CE-NIZK in the shared EPR model for a
language $\cL \in \NP$ is a tuple of five algorithms $(\Setup, \Pro,
\Ver, \Del, \DelVrfy)$ with the following syntax:

\begin{itemize}
\item $\Setup(1^\secp) \to \crs:$ The classical setup algorithm takes
as input a security parameter, and outputs a common reference string (CRS) $\crs$.

\item $\Pro(\crs, \qreg{P}, x, \wit) \to (\pi, \qP):$ The quantum prover's
algorithm takes as input a CRS $\crs$, the prover register
$\qreg{P}$, a statement $x \in \cL$, and
a witness $\wit \in \RL(x)$. It outputs a classical proof $\pi$ and a
residual state $\qP$ on a register $\qreg{P'} \otimes \qreg{P}$ where
$\qreg{P}$ is the EPR register and $\qreg{P'}$ is a new register.

\item $\Ver(\crs, \qreg{V}, x, \pi) \to (b, \qV):$ The quantum verifier's
algorithm takes as input a CRS $\crs$, the verifier register
$\qreg{V}$, a statement $x \in \cL$, and
a classical proof $\pi$. It outputs a bit $b$ along with a residual
state $\rho_{\Ver}$ on register $\qreg{V'}\otimes\qreg{V}$ where
$\qreg{V}$ is the EPR register and $\qreg{V'}$ is a new register.

\item $\Del(\qV) \to (\cert, \tqV):$ The quantum
deletion algorithm takes as input a state $\rho_\Ver$ on register
$\qreg{V'}\otimes\qreg{V}$, and outputs a classical certificate
$\cert$, and state $\tqV$.

\item $\DelVrfy(\cert, \qP) \to \top / \bot:$ The
certification algorithm takes as input a classical certificate
$\cert$ and a quantum state $\rho_\Pro$ on $\qreg{P'} \otimes
\qreg{P}$. It outputs $\top$ or $\bot$.
\end{itemize}

A CE-NIZK in the shared EPR model must satisfy the following properties:

\begin{itemize}
\item \emph{Completeness:} The following holds for every $x \in \cL$ and
$\wit \in \RL(x)$:

\[
\Pr\left[
\Ver\big(\crs,\qreg{V},x,\pi\big)[1] \ra 0
 \ :
\begin{array}{rl}
 &\crs \gets \Setup(1^\secp)\\
 & (\pi, \qP) \gets \Pro(\crs, \qreg{P}, x, \omega)\\
\end{array}
\right] \le \negl
\]

\item \emph{Deletion Correctness:} The following holds for every $x \in \cL$
and $\wit \in \RL(x)$:

\[
\Pr\left[
\DelVrfy\big(\cert,\qP\big) \ra \bot
 \ :
\begin{array}{rl}
 &\crs \la \Setup(1^\secp)\\
 & (\pi, \qP) \la \Pro(\crs, \qreg{P},x, \wit)\\
 & (b, \qV) \la \Ver(\crs, \qreg{V}, x, \pi)\\
 & (\cert, \tqV) \la \Del(\qV)
\end{array}
\right] \le \negl
\]

\item \emph{Statistical Soundness:} For every $\tx \notin \cL$ and
every unbounded $\Pro^*$, we have:

\[
\Pr\left[
\Ver\big(\crs,\qreg{V},\tx,\pi^*)[1] \ra 1
 \ :
\begin{array}{rl}
 &\crs \la \Setup(1^\secp)\\
 & (\pi^*, \qP^*) \la \Pro^*(\crs,\qreg{P},\tx)\\
\end{array}
\right] \le \negl
\]

\item \emph{Computational Zero-Knowledge:} There exists a simulator
$\Sim$ such that for all $x \in \cL$ and $\wit \in \RL(x)$, the
following holds:

\[
\left[
(x, \crs, \qreg{V}, \pi) \approx_c (x, \crs', \qreg{V'}, \pi')
 \ :
\begin{array}{rl}
 &\crs \la \Setup(1^\secp)\\
 & (\pi, \qP) \la \Pro(\crs, \qreg{P}, x, \wit)\\
 & (\crs', \qreg{V'}, \pi') \la \Sim(x)\\
\end{array}
\right]
\]

\item \emph{Certified-Everlasting Zero-Knowledge (CE-ZK):}
There exists a QPT simulator $\Sim$ such that for every malicious
verifier $\Ver^*$, statement $x \in \cL$ and witness $\wit \in
\RL(x)$, the following holds:

\[
\left[
 \trho \approx_s \rho'
 \ :
\begin{array}{rl}
 &\crs \la \Setup(1^\secp)\\
 & (\pi, \qP) \la \Pro(\crs, \qreg{P}, x, \wit)\\
 & (\cert^*, \rho^*) \la \Ver^*(\crs, \qreg{V}, x, \pi)\\
 & \textrm{If}\;\DelVrfy(\cert^*, \qP) = \top \; \textrm{then}\; \trho
 \seteq \rho^\star.\\
 & \textrm{Else}\; \trho \seteq \bot.\\
 & \rho' \la \Sim^{\Ver^*}(x)
\end{array}
\right]
\]

Here, $\rho^*$ is a state on $\qreg{V} \otimes
\qreg{V'}$ where $\qreg{V'}$ is a
new register, and $\Sim^{\Ver^*}$ denotes black-box access to
$\Ver^*$ on its input and setup registers.

%
%
%
%
%
%
%
%
%
\end{itemize}
\end{definition}


\section{CE-NIZK in the CRS Model}\label{sec:crs}

\subsection{Construction from Quantum-Secure NIZK}

We now construct a CE-NIZK $\Pi = (\Setup, \Pro, \Ver, \DelVrfy)$ as
follows:
\begin{framed}
\vspace{-4mm}
\begin{center}
\textbf{Certified-Everlasting NIZK in the CRS Model}
\end{center}
\noindent \underline{Building Blocks:}
\begin{enumerate}
\item NIZK $\sPi_\nizk \seteq (\Setup_\nizk, \Pro_\nizk, \Ver_\nizk)$
for $\NP$ in the CRS model with:
\begin{itemize}
\item Statistical Adaptive Soundness
\item Quantum-Secure Adaptive Zero-Knowledge
\item Non-Adaptive ZK with Statistical CRS Indistinguishability
\end{itemize}

\item BB84 States $\ket{y}^\theta \seteq
    H^{\theta_1}\ket{y_1} \otimes \ldots \otimes
H^{\theta_\secp}\ket{y_\secp}$, where $y, \theta \in \bit^\secp$.

\item Quantum-Secure PRF $\{\F_\prfk: \bit^{\poly(\secp)} \ra
\bit^{\poly(\secp)}\}_{\prfk \in \bit^\secp}$.
\end{enumerate}

\noindent $\underline{\Setup(1^\secp):}$
\begin{itemize}
\item Compute $\crs_\inn \la \Setup_\nizk(1^\secp)$ and $\crs_\out \la \Setup_\nizk(1^\secp)$.
\item Output $\crs \seteq \crs_\inn \| \crs_\out$.
\end{itemize}

\noindent $\underline{\Pro(\crs, x, \wit):}$
\begin{enumerate}
\item Let $\ell$ be an upper bound on the proof size of $\sPi_\nizk$.
For each $i \in [2\ell]$, sample $y^i, \theta^i \la \bit^\secp$. Then,
construct the following BB84 state on register $\qreg{R^i} \seteq
\qreg{R^i_1 \otimes \ldots \otimes R^i_\secp}$.

$$\ket{y^i}^{\theta^i}_{\qreg{R^i}} \seteq
H^{\theta^i_1}\ket{y^i_1}_{\qreg{R^i_1}} \otimes
\ldots \otimes H^{\theta^i_\secp}\ket{y^i_\secp}_{\qreg{R^i_\secp}}$$

\item Parse $\crs = \crs_\inn \| \crs_\out$. Then, compute $\pi_\inn \la \Pro_\nizk(\crs_\inn, x, \wit)$.

\item Sample $\key^0, \key^1 \la \bit^\ell$ and compute $\ct^0 \seteq
\pi_\inn \oplus \pad^0 \oplus \key^0$ and $\ct^1 \seteq 0^\ell \oplus
\pad^1 \oplus \key^1$, where:

$$\pad^1 \seteq \bigoplus\limits_{i:\theta^1_i=0}y^1_i \; \; \| \ldots \|
\bigoplus\limits_{i:\theta^\ell_i=0}y^\ell_i;\;\;
\pad^0 \seteq \bigoplus\limits_{i:\theta^{\ell+1}_i=0}y^{\ell+1}_i \; \; \| \ldots \|
\bigoplus\limits_{i:\theta^{2\ell}_i=0}y^{2\ell}_i$$

\item For each $\sz \in \bit^{2\ell\secp}$, let $\sz \seteq \sz^1 \|
\ldots \| \sz^{2\ell}$ where $|\sz^i| = \secp$ for each $i \in
[2\ell]$. Then, define the following $\NP$ statement $x^{\sz}$:

\underline{Statement $x^\sz$}:
There exist $\{\theta^i\}_{i\in[2\ell]}$
and $\key^0, \key^1 \in \bit^{\ell}$ s.t. $$\Ver_\nizk\left(\crs_\inn, x,
\ct^0 \oplus \key^0 \oplus \tpad^0\right) = 1 \bigvee
\Ver_\nizk\left(\crs_\inn, x, \ct^1 \oplus \key^1 \oplus \tpad^1\right) = 1$$ 
$$\tpad^1 \seteq \bigoplus\limits_{i:\theta^1_i=0}\sz^1_i \; \; \| \ldots \|
\bigoplus\limits_{i:\theta^\ell_i=0}\sz^\ell_i;\;\;
\tpad^0 \seteq \bigoplus\limits_{i:\theta^{\ell+1}_i=0}\sz^{\ell+1}_i \; \; \| \ldots \|
\bigoplus\limits_{i:\theta^{2\ell}_i=0}\sz^{2\ell}_i$$


\item For each $i \in \bit^{2\ell\secp}$, sample $s^{i,0}, s^{i,1}
\la \bit^\secp$. For each $\sz \in \bit^{2\ell\secp}$:
$$\sig^\sz \seteq s^{1, \sz[1]} \| \ldots \| s^{2\ell\secp,
\sz[2\ell\secp]}$$

\item Define the register $\qreg{R} \seteq \qreg{R^1} \otimes \ldots
\otimes \qreg{R^{2\ell}}$. Let the state on register $\qreg{R}$ be
$\sum\limits_{\sz}\alpha^{\sz}\ket{\sz}$. Then, compute the following
state $\ket{\psi}$ on registers $\qreg{R}$, $\qreg{P}$ and $\qreg{S}$:

$$\ket{\psi}_{\qreg{R} \otimes \qreg{P} \otimes \qreg{S}} \seteq \sum_{\sz} \alpha^\sz
\ket{\sz}_{\qreg{R}}
\ket{\pi_{\out}^{\sz}}_{\qreg{P}}\ket{\sig^\sz}_{\qreg{S}}$$

where $\pi_\out^\sz \la
\Pro_\nizk(\crs_\out, x^\sz, \wit_\out; r^\sz)$, 
$\wit_\out \seteq (\{\theta^i\}_{i\in[2\ell]},
\key^0, \key^1)$ and explicit randomness $r^\sz \seteq \F_\prfk(\sz)$ for a PRF key sampled as $\prfk \la \bit^\secp$.

\item Output the quantum proof $\sigma \seteq
\left(\ket{\psi}_{\qreg{R}\otimes \qreg{P}\otimes \qreg{S}}, \ct^0, \ct^1\right)$
and the verification state $\qP \seteq \left(\wit_\out, \crs,
\{y^i\}_{i\in[2\ell]}, \prfk, \{s^{i,0}, s^{i,1}\}_{i\in[2\ell\secp]},
\ct^0, \ct^1\right)$.
\end{enumerate}

\noindent $\underline{\Ver(\crs, x, \sigma):}$
\begin{enumerate}
\item Parse $\crs$ as $\crs = \crs_\inn \| \crs_\out$ and $\sigma = (\ket{\psi}_{\qreg{R} \otimes \qreg{P} \otimes \qreg{S}}, \ct^0, \ct^1)$.

\item Let $\ket{\psi}_{\qreg{R \otimes P} \otimes \qreg{S}} = \sum\limits_\sz
\alpha^\sz
\ket{\sz}_{\qreg{R}}\ket{\pi_\out^\sz}_{\qreg{P}}\ket{\sig^\sz}_{\qreg{S}}$.
Compute the following state on registers $\qreg{R}, \qreg{P},
\qreg{S}, \qreg{OUT}$:

$$\ket{\psi}'_{\qreg{R} \otimes \qreg{P} \otimes \qreg{S} \otimes
\qreg{OUT}} \seteq \sum\limits_\sz
\alpha^\sz
\ket{\sz}_{\qreg{R}}\ket{\pi^\sz_\out}_{\qreg{P}}\ket{\sig^\sz}_{\qreg{S}}\ket{\Ver_\nizk(\crs_\out,
x^\sz, \pi_\out^\sz)}_{\qreg{OUT}}$$
\item Measure the register $\qreg{OUT}$ in the computational basis to
obtain outcome $b \in \bit$ and a residual state
$\ket{\widetilde{\psi}}_{\qreg{R} \otimes \qreg{P} \otimes \qreg{S}}$.
\item Set $\qV \seteq
    \ket{\widetilde{\psi}}_{\qreg{R} \otimes \qreg{P} \otimes \qreg{S}}$. Finally, output $(b, \qV)$.
\end{enumerate}

\noindent $\underline{\DelVrfy(\qP, \qV):}$

\begin{enumerate}
\item Parse
$\qP = \left(\wit_\out, \crs,
\{y^i\}_{i\in[2\ell]}, \prfk, \{s^{i,0}, s^{i,1}\}_{i\in[2\ell\secp]},
\ct^0, \ct^1\right)$ and $\qV =
\ket{\tpsi}_{\qreg{R}\otimes\qreg{P}\otimes\qreg{S}}$.

\item Perform $\ket{\sz}_{\qreg{R}}\ket{\tsig^\sz}_{\qreg{S}}\ket{0}_{\qreg{T}} \mapsto
\ket{\sz}_{\qreg{R}}\ket{\tsig^\sz}_{\qreg{S}}\ket{0 \oplus \Test(\sz,
\tsig^\sz)}_{\qreg{T}}$ where $\qreg{T}$ is initialized to $0$ and
$\Test(\sz, \tsig^\sz)$ is a function that outputs 1 if $\sig^\sz =
\tsig^\sz$ and $0$ otherwise.

\item Perform $\ket{\sz}_{\qreg{R}}\ket{\sft}_{\qreg{P}}\ket{\sfs}_{\qreg{S}}
\mapsto \ket{\sz}_{\qreg{R}}\ket{\sft \oplus
\Pro_\nizk(\crs_\out, x^{\sz}, \wit_\out;
r^{\sz})}_{\qreg{P}}\ket{\sfs \oplus \sig^\sz}_{\qreg{S}}$
where $r^\sz = \F_\prfk(\sz)$ to get the state
$\ket{\tphi}_{\qreg{R} \otimes \qreg{P} \otimes \qreg{S}}$.

\item Measure the $\qreg{R}$ register of $\ket{\tphi}_{\qreg{R}
\otimes \qreg{P} \otimes \qreg{S}}$ in the Hadamard basis to get $\cert$.

\item Let $\theta \seteq \theta^1 \| \ldots \| \theta^{2\ell}$ and $y
\seteq y^1 \| \ldots \| y^{2\ell}$. If $\cert_j =
y_j$ for each $j \in [2\ell\secp]$ such that $\theta_j = 1$, then
output $\top$. Otherwise, output $\bot$.
\end{enumerate}
\end{framed}

\begin{theorem} \label{thm:crs}
Assuming $\sPi_\nizk$ is a NIZK proof for $\NP$ with statistical
adaptive soundness, quantum-secure (adaptive)
zero-knowledge and computational (non-adaptive)
zero-knowledge with statistical CRS indistinguishability (Definition
\ref{def:nizk}), and $\{\F_\prfk\}_\prfk$ is a quantum-secure PRF
family (Definition \ref{def:qprf})
then $\sPi \seteq (\Setup, \Pro,
\Ver, \allowbreak \DelVrfy)$ is a secure CE-NIZK scheme for
$\NP$ (Definition \ref{def:ce-nizk}).
\end{theorem}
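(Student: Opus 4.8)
The plan is to verify the five properties of Definition~\ref{def:ce-nizk} one by one; the first four are largely mechanical and essentially all the difficulty lies in certified-everlasting zero-knowledge.

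\emph{Completeness, deletion correctness, soundness, computational ZK.} The structural fact driving these is that every computational-basis branch $\ket{\sz}$ of $\ket{y}^\theta$ agrees with $y$ wherever $\theta$ is $0$, so $\tpad^0=\pad^0$ and $\tpad^1=\pad^1$ on every branch; hence $\ct^0\oplus\key^0\oplus\tpad^0=\pi_\inn$, accepted by $\Ver_\nizk(\crs_\inn,x,\cdot)$, so $x^\sz$ is true with witness $\wit_\out$ and $\pi_\out^\sz$ is accepted by $\Ver_\nizk(\crs_\out,x^\sz,\cdot)$. The $\qreg{OUT}$ register is therefore (close to) $\ket{1}$ on every branch, so $\Ver$ outputs $1$; being almost deterministic this measurement is gentle, the post-verification state is trace-close to $\ket{\psi}$, and $\DelVrfy$ passes $\Test$, coherently uncomputes the outer proofs and tags to recover a state close to $\ket{y}^\theta$ on $\qreg{R}$, and its Hadamard measurement returns a $\cert$ matching $y$ on the $\theta=1$ positions, giving $\top$. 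For statistical soundness, when $\tx\notin\cL$, statistical adaptive soundness of $\sPi_\nizk$ on $\crs_\inn$ makes every $x^\sz$ false (both OR-clauses fail for all witnesses), and then statistical \emph{adaptive} soundness on $\crs_\out$ — adaptive because $x^\sz$ depends on the adversary's $\ct^0,\ct^1$ — rules out any proof forcing a $1$ into $\qreg{OUT}$ on any branch. For computational ZK, the simulator runs $(\crs_\Sim,\pi_\Sim)\la\Sim_\nizk(x)$ (the non-adaptive simulator), sets $\crs_\inn\seteq\crs_\Sim$, samples $\crs_\out$ honestly, and runs $\Pro$ with $\pi_\Sim$ in place of $\pi_\inn$ (the only step that touches $\wit$); since $(\crs,\sigma)$ is a fixed QPT function of $(\crs_\inn,\pi_\inn)$, computational ZK of $\sPi_\nizk$ gives $(\crs,\sigma)\approx_c(\crs',\sigma')$.

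\emph{CE-ZK: the hybrid plan.} Fix a QPT $\Ver^*$. The simulator $\Sim_{\Ver^*}(x)$ will run the prover in a ``final-hybrid'' form derivable from $x$ alone — simulated $\crs_\inn=\crs_\Sim$, a simulated proof $\pi_\Sim$ carried inside $\ct^1$, and plaintext $0^\ell$ inside $\ct^0$ — then run $\Ver^*$, perform $\DelVrfy$, and output $\trho$. I would bridge to the real experiment using the two-slot OR structure already present in $x^\sz$: (i) move $\ct^1$'s plaintext from $0^\ell$ to $\pi_\Sim$ one bit at a time, each step invoking the BB84 certified-deletion theorem (Theorem~\ref{thm:cert-del}) on the $\secp$-qubit block encoding that bit of $\pad^1$, while $\pi_\inn$ stays in $\ct^0$ and satisfies the first OR-clause so the outer proofs remain honest proofs of true statements; (ii) switch $\crs_\inn$ from $\Setup_\nizk$ to $\crs_\Sim$ via \emph{statistical} CRS indistinguishability — this is exactly why that property, not plain computational ZK, is needed, since there is no proof left to simulate here — after which $\pi_\Sim$ in $\ct^1$ satisfies the second OR-clause; (iii) move $\ct^0$'s plaintext from $\pi_\inn$ to $0^\ell$ one bit at a time, again via Theorem~\ref{thm:cert-del}, now with the second OR-clause keeping the outer proofs honest. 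Each bit-swap perturbs the post-certification state by only $\negl$ in trace distance, so chaining the $\poly(\secp)$ steps yields $\trho\approx_s\rho'$.

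\emph{Invoking the deletion theorem.} For each bit-swap I must produce the distribution $\cZ$ of Theorem~\ref{thm:cert-del}, packaging everything the verifier sees besides the target block and target plaintext bit (the other BB84 blocks, the superposed outer proofs on $\qreg{P}$, the tags on $\qreg{S}$, both ciphertexts, and $\crs$), and show $\cZ$ is semantically secure with respect to that block's basis $\theta^m$. Every other ciphertext bit is one-time-padded by a fresh bit of $\key^0,\key^1$, and the statements $x^\sz$ existentially quantify the $\theta^i$ and so depend only on public data, so the only possible leak of $\theta^m$ is through the honest outer proofs, whose witness contains $\{\theta^i\}$; here I would use quantum-secure \emph{adaptive} zero-knowledge of $\sPi_\nizk$ to replace the outer proofs by simulated ones (which ignore $\theta^m$), establishing the semantic-security hypothesis so that the theorem delivers its \emph{statistical} conclusion. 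Two supporting points: the outer proofs are created and later uncomputed coherently with randomness $\F_\prfk(\sz)$, so I first replace $\F_\prfk$ by a random function using quantum-PRF security (a superposition-respecting reduction with polynomially many queries); and ordinary ZK would not suffice because the outer proofs are generated for a superposition of statements under one $\crs_\out$.

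\emph{The main obstacle.} The delicate step — the one I expect to dominate the proof — is that $\Ver^*$ returns an arbitrary state, so when $\DelVrfy$ coherently uncomputes the outer proofs the state may have superposition branches $\ket{\widetilde\sz}$ whose $\widetilde\sz$ disagrees with $y$ on some $\theta=0$ position; then $\tpad^b\neq\pad^b$ on that branch, $\wit_\out$ is not a valid witness for $x^{\widetilde\sz}$, and the uncomputation is no longer the inverse of an honest query on a valid statement--witness pair — which breaks both the precondition of the quantum-secure ZK definition and the reduction to Theorem~\ref{thm:cert-del}. I would handle this with the key-testability technique of \cite{C:KitNisPap25}: the Lamport-style tag $\sig^\sz=f(s^{1,\sz[1]})\|\cdots\|f(s^{2\ell\secp,\sz[2\ell\secp]})$ is checked by $\DelVrfy$ \emph{before} any uncomputation, and since only the honest-support preimages are ever revealed, unforgeability of $f$ forces every off-support $\ket{\widetilde\sz}$ to carry negligible amplitude, so one may project onto the good subspace at negligible cost and then apply the ZK switch and the deletion-theorem reduction legitimately. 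The remaining care is to thread the ``conditioned on successful certification'' clause through every hybrid so that the negligible errors accumulate over only polynomially many steps.
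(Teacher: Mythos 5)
Your overall strategy coincides with the paper's: the first four properties are argued the same way, and CE-ZK is proved by bit-by-bit swaps through the two-slot OR statements, reducing each swap to Theorem \ref{thm:cert-del}, with quantum-secure adaptive zero-knowledge (after a PRF-to-random-function switch) supplying the semantic-security hypothesis and the Lamport-tag/key-testability check handling off-support branches. The one genuine gap is the placement of the CRS switch. You first embed the bits of $\pi_\Sim$ into $\ct^1$ (your phase (i)) while the view still contains a real $\crs_\inn \la \Setup_\nizk(1^\secp)$, and only afterwards (phase (ii)) replace $\crs_\inn$ by $\crs_\Sim$, appealing to statistical CRS indistinguishability. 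But that property only bounds the statistical distance of the CRS \emph{marginals}, whereas at the start of your phase (ii) the experiment already contains $\pi_\Sim$, which is correlated with $\crs_\Sim$ (both come from the same run of $\Sim_\nizk(x)$). Switching from (real CRS, independent $\pi_\Sim$) to (simulated CRS, jointly sampled $\pi_\Sim$) is a change of the \emph{joint} distribution that the marginal guarantee does not control --- for hidden-bits-based NIZKs the simulated proof is visibly correlated with the simulated CRS --- and since every hybrid step in this argument must be statistical (the CE-ZK conclusion is statistical), you cannot fall back on computational zero-knowledge of the pair here. If instead you intended $\pi_\Sim$ to come from an independent simulator run, the step would be justified, but then after the switch the second OR-clause need not hold ($\pi_\Sim$ need not verify under an independently simulated CRS), and your phase (iii) loses the true-statement precondition needed to invoke quantum-secure zero-knowledge.

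The paper avoids this by performing the CRS switch first, as $\Hyb_0 \to \Hyb_1$: at that point every other component of the experiment ($\pi_\inn$, the pads, the outer proofs, $\Ver^*$'s execution, $\DelVrfy$) is computed as a fixed randomized function of whichever CRS is supplied, so statistical closeness of the CRS marginals lifts to statistical closeness of the full post-certification outputs; only then are the bits of the correlated $\pi_\Sim$ introduced into $\ct^1$ one at a time via the deletion theorem (legitimate because Theorem \ref{thm:cert-del} allows an arbitrary plaintext bit $b'$), and finally $\pi_\inn$ is erased from $\ct^0$. Reordering your hybrids accordingly repairs the argument; everything else in your outline matches the paper's proof.
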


\subsection{Security Proof}
We now proceed to prove that $\Pi$ satisfies the following properties
of CE-NIZK:

\paragraph{Completeness:} Observe that
$\Ver_\nizk(\crs_\inn,
x, \ct^0 \oplus \tpad^0 \oplus \key^0)$ outputs 1 for each
superposition term $\sz$. This is because
unmasking $\ct^0$ with $\key^0$ will result in $\pi_\inn
\oplus \pad^0$. Moreover, $\tpad^0$ corresponding to every
superposition term $\sz$ is equal to $\pad^0$. Next, observe that
$\Ver_\nizk(\crs_\inn, x, \pi_\inn)$ outputs $1$ with overwhelming
probability by the completeness of $\Pi_\nizk$. Consequently, with all
but negligible probability, all the
statements $x^\sz$ in superposition are valid. Finally, by completeness of $\Pi_\nizk$, we have
that $\Ver_\nizk(\crs_\out, x^\sz, \pi_\out^\sz)$ outputs 1 with
overwhelming probability for each term $\sz$, assuming the randomness
for computing $\pi^\sz_\out$ is obtained from a quantum-secure PRF
$\F_\prfk$. Consequently, computing the result of verification on
register $\qreg{OUT}$ and measuring the result produces outcome $1$
with overwhelming probability.

\paragraph{Deletion-Correctness:} Recall that
$\Ver$ is sent the state
$\ket{\psi}_{\qreg{R}\otimes\qreg{P}\otimes\qreg{S}} =
\sum_\sz \alpha^\sz
\ket{\sz}_{\qreg{R}}\allowbreak\ket{\pi_\out^\sz}_{\qreg{P}}\ket{\sig^\sz}_{\qreg{S}}$
and the post-verification state
$\ket{\tpsi}$ is close in trace distance to
$\ket{\psi}$ due to the above completeness argument and the gentle
measurement lemma. Next, observe that the signature checks using
$\Test$ pass in superposition for every term $\sz$ and hence the state
is undisturbed. Then,
$\DelVrfy$ uncomputes the proofs on registers $\qreg{P}, \qreg{S}$, resulting in the state
$\ket{\tphi}$ that is close in trace distance to the state
$\ket{y}^\theta$. Consequently, measuring the registers of
$\ket{\tphi}$ in the Hadamard basis results in a value $\cert$ that is
consistent with $y$ at all the Hadamard basis positions with
overwhelming probability. As a result,
$\DelVrfy$ outputs $1$ with overwhelming probability.

\paragraph{Computational Zero-Knowledge:} This follows immediately from the
fact that $\Pi_\nizk$ has a simulator $\Sim_\nizk$ such that
the values $(\crs_\Sim, \pi_\Sim) \la \Sim_\nizk(x)$ are computationally
indistinguishable from $\crs_\inn, \pi_\inn$. Consequently, we can
replace $\pi_\inn$ inside $\ct^0$ with $\pi_\Sim$ and $\crs_\inn$ (which
is part of $\crs$) with $\crs_\Sim$.

\paragraph{Statistical Soundness:} For every false statement $x'$,
with overwhelming probability over the choice of $\crs_\inn$, we know
that there doesn't exist any $\pi'$ such that $\Ver_\nizk(\crs_\inn,
x', \pi') = 1$ due to the statistical soundness of $\Pi_\nizk$.
Consequently, with overwhelming probability over $\crs_\inn$, all the
corresponding statements $x_\sz$ in superposition are false as well.
Now, these statements are decided after $\crs_\out$ is chosen.
Consequently, we rely on the adaptive soundness property to ensure
that with overwhelming probability over $\crs_\out$, there do not
exist pairs $(x^\sz, \pi_\out^\sz)$ that would be accepted by
$\Ver_\nizk(\crs_\out, x^\sz, \pi_\out^\sz)$. Consequently,
verification outputs $0$ with overwhelming probability, ensuring
statistical soundness. 

\paragraph{Certified-Everlasting Zero-Knowledge:} Consider the
following sequence of hybrids:

$\underline{\Hyb_0:}$
\begin{itemize}
\item $\Ver^*$ is initialized with $x$ and $\crs = \crs_\inn \| \crs_\out \la \Setup(1^\secp)$.
\item Then, $\Ver^*$ is provided with $\sigma =
(\ket{\psi}_{\qreg{R}\otimes\qreg{P}\otimes\qreg{S}}, \ct^0, \ct^1)$
computed as $(\sigma, \qP) \la \Pro(\crs, x, \wit)$, where $\ct^0 =
\pi_\inn \oplus \pad^0 \oplus \key^0$ and $\ct^1 = 0^\ell \oplus
\pad^1 \oplus \key^1$.
\item Then, $\Ver^*$ outputs a bi-partite state $\rhoAB$. If
$\DelVrfy(\qP, \rhoA) = \top$, output $\rhoB$ where $\rhoA$ is the
state on register $\qreg{A}$ and $\rhoB$ is the residual state on
register $\qreg{B}$. Output $\bot$ otherwise.
\end{itemize}

$\underline{\Hyb_1:}$
\begin{itemize}
\item $\Ver^*$ is initialized with $x$ and
\textcolor{red}{$\crs_\Sim \| \crs_\out$}, where
\textcolor{red}{$(\crs_\Sim, \pi_\Sim) \la \Sim_\nizk(x)$} and
$\crs_\out \la \Setup_\nizk(1^\secp)$, with $\Sim_\nizk$ being the
QPT simulator of $\Pi_\nizk$.
\item Then, $\Ver^*$ is provided with $\sigma =
    (\ket{\psi}_{\qreg{R}\otimes \qreg{P}\otimes \qreg{S}}, \ct^0, \ct^1)$
computed as $(\sigma, \qP) \la \Pro(\textcolor{red}{\crs_\Sim \|
\crs_\out}, \allowbreak x, \wit)$, where $\ct^0 =
\pi_\inn \oplus \pad^0 \oplus \key^0$ and $\ct^1 =
0^\ell \oplus \pad^1 \oplus \key^1$.
\item Then, $\Ver^*$ outputs a bi-partite state $\rhoAB$. If
$\DelVrfy(\qP, \rhoA) = \top$, output $\rhoB$ where $\rhoA$ is the
state on register $\qreg{A}$ and $\rhoB$ is the residual state on
register $\qreg{B}$. Output $\bot$ otherwise.
\end{itemize}

$\Hyb_0 \approx_s \Hyb_1$ follows from the statistical CRS
indistinguishability property corresponding to $\Sim_\nizk$, which
ensures that $(x, \crs_\inn) \approx_s (x, \crs_\Sim)$.

Consider now the hybrids, $\Hyb_2^1, \ldots, \Hyb_2^\ell$ where
$\Hyb_2^i$ corresponds to changing the $i$-th bit of $0^\ell$ inside
$\ct^1$ into the $i$-th bit of $\pi_\Sim$. Note that this also changes
the statements $x^\sz$ for the superposition terms $\sz$
corresponding to the $\qreg{R}$ register of
$\ket{\psi}_{\qreg{R}\otimes\qreg{P}\otimes\qreg{S}}$. Consequently,
the hybrids also involve a change to the proofs
$\pi^\sz_\out$ in superposition on register $\qreg{P}$ accordingly. Now, let $\Hyb_2^0 \seteq
\Hyb_1$ and observe that $\Hyb_2^\ell$ is the following hybrid:

$\underline{\Hyb_2^\ell:}$
\begin{itemize}
\item $\Ver^*$ is initialized with $x$ and
\textcolor{black}{$\crs_\Sim \| \crs_\out$}, where
\textcolor{black}{$(\crs_\Sim, \pi_\Sim) \la \Sim_\nizk(x)$} and
$\crs_\out \la \Setup_\nizk(1^\secp)$, with $\Sim_\nizk$ being the
QPT simulator of $\Pi_\nizk$.
\item Then, $\Ver^*$ is provided with $\sigma =
(\ket{\psi}_{\qreg{R}\otimes \qreg{P} \otimes \qreg{S}}, \ct^0, \ct^1)$
computed as $(\sigma, \qP) \la \Pro(\textcolor{black}{\crs_\Sim \|
\crs_\out}, \allowbreak x, \wit)$, except that $\ct^0 =
\pi_\inn \oplus \pad^0 \oplus \key^0$ and $\ct^1 =
\textcolor{red}{\pi_\Sim} \oplus \pad^1 \oplus \key^1$.
\item Then, $\Ver^*$ outputs a bi-partite state $\rhoAB$. If
$\DelVrfy(\qP, \rhoA) = \top$, output $\rhoB$ where $\rhoA$ is the
state on register $\qreg{A}$ and $\rhoB$ is the residual state on
register $\qreg{B}$. Output $\bot$ otherwise.
\end{itemize}

\begin{lemma}\label{lemma:main}
$\forall i \in [\ell]: \Hyb_2^{i-1} \approx_s \Hyb_2^{i}$
\end{lemma}
\begin{proof}
Let us now focus on the $i$-th BB84 state
$\ket{y^i}^{\theta^i}_{\qreg{R^i}}$, and
the distribution $\cZ^i$:

$\underline{\cZ^i(\theta^i, \textcolor{red}{b'}, \qreg{D}):}$

\begin{itemize}
    \item If $\textcolor{red}{\theta^i \neq 0^\lambda}$, then execute the following:
\begin{enumerate}
\item Compute $\ct^0, \ct^1$ in the same way as in $\Hyb_2^{i-1}$
except $b'$ is set as the $i$-th bit inside $\ct^1$. In particular, we
have:
$$\ct^0 \seteq \pi_\inn \oplus \pad^0 \oplus \key^0, \;\; \ct^1 \seteq
\pi_\Sim[1] \| \ldots \| \pi_\Sim[i-1] \| \textcolor{red}{b'} \|
0^{\ell-i} \oplus \pad^1 \oplus \key^1$$

\item Compute $\ket{\psi}_{\qreg{R}\otimes\qreg{P}\otimes\qreg{S}}$ 
in the same way as in $\Hyb_2^{i-1}$, except using $\theta^i$ from the
input (instead of sampling it) and register $\qreg{D}$ instead of
sampling the state $\ket{y^i}^{\theta^i}$.

\item Initialize $\Ver^*$ with input $x, \crs_\Sim \| \crs_\out$ and provide it with the
proof $\sigma \seteq
(\ket{\psi}_{\qreg{R}\otimes\qreg{P}\otimes\qreg{S}}, \allowbreak \ct^0,
\ct^1)$.

\item When $\Ver^*$ outputs a bi-partite state $\rhoAB$, parse it is
as a state
$\rho_{\qreg{R}\otimes\qreg{P}\otimes\qreg{S}\otimes\qreg{B}}$ on registers
$\qreg{R}, \qreg{P}, \qreg{S}$ and $\qreg{B}$.

\item Perform the map
$\ket{\sz}_{\qreg{R}}\ket{\tsig^\sz}_{\qreg{S}}\ket{0}_{\qreg{T}} \mapsto
\ket{\sz}_{\qreg{R}}\ket{\tsig^\sz}_{\qreg{S}}\ket{0 \oplus \Test(\sz,
\tsig^\sz)}_{\qreg{T}}$
where $\Test$ is as defined in the construction. Then, measure register $\qreg{T}$ and 
output $\bot$ if the result is not $1$.

\item Perform $\ket{\sz}_{\qreg{R}}\ket{\sft}_{\qreg{P}}\ket{\sfs}_{\qreg{S}}
\mapsto \ket{\sz}_{\qreg{R}}\ket{\sft \oplus
\Pro_\nizk(\crs_\out, x^{\sz}, \wit_\out;
r^{\sz})}_{\qreg{P}}\ket{\sfs \oplus \sig^\sz}_{\qreg{S}}$ on
registers $\qreg{R}, \qreg{P}$ and $\qreg{S}$
where $r^\sz = \F_\prfk(\sz)$ to get the state
$\rho_{\qreg{R}\otimes\qreg{P}\otimes\qreg{S}\otimes\qreg{B}}'$.
\item Output the state
$\rho_{\qreg{R}\otimes\qreg{P}\otimes\qreg{S}\otimes\qreg{B}}'$ along
with $\{\theta^j, y^j\}_{j \in [2\ell] \setminus \{i\}}$.

\end{enumerate}
\item If $\textcolor{red}{\theta^i = 0^\lambda}$, then execute the following:
\begin{enumerate}
\item Compute the state $\sum\limits_{\sz} \alpha^\sz
\ket{\sz}_{\qreg{R}}$ in the same way as in $\Hyb_2^{i-1}$, except
that register $\qreg{D}$ is used instead of the state
$\ket{y^i}^{\theta^i}$.

\item Compute $\ct^0, \ct^1$ as random strings, i.e.:
$$\textcolor{red}{\ct^0 \la \bit^\ell, \;\;
\ct^1 \la \bit^\ell}$$
\item Then, compute the state
$\ket{\psi}_{\qreg{R}\otimes\qreg{P}\otimes\qreg{S}}$
as: 
$$\ket{\psi}_{\qreg{R}\otimes\qreg{P}\otimes\qreg{S}} \seteq
\sum_{\sz}\alpha^\sz\ket{\sz}_{\qreg{R}}\ket{\pi^\sz_{\sS}}_{\qreg{P}}\ket{\sig^\sz}_{\qreg{S}}$$
where the NIZK proof $\pi_{\sS}^\sz$ is computed as \textcolor{red}{$\pi_{\sS}^\sz
\la \Stwo(\td, x^\sz; r^\sz)$} where $r^\sz \seteq \F_\prfk(\sz)$ for
$\prfk \la \bit^\secp$ and $(\crs_{\sS}, \td) \la \Sone(1^\secp)$ where
$\sS \seteq (\Sone, \Stwo)$ is the quantum-secure adaptive
zero-knowledge simulator of $\Pi_\nizk$.

\item Initialize $\Ver^*$ with input $x, \crs_\Sim \|
\textcolor{red}{\crs_\sS}$ and
provide it with the proof $\sigma \seteq
(\ket{\psi}_{\qreg{R}\otimes\qreg{P}\otimes\qreg{S}}, \allowbreak\ct^0, \ct^1)$.

\item When $\Ver^*$ outputs a bi-partite state $\rhoAB$, parse it as a
state $\rho_{\qreg{R}\otimes\qreg{P}\otimes\qreg{S}\otimes\qreg{B}}$.

\item Perform the map
$\ket{\sz}_{\qreg{R}}\ket{\tsig^\sz}_{\qreg{S}}\ket{0}_{\qreg{T}} \mapsto
\ket{\sz}_{\qreg{R}}\ket{\tsig^\sz}_{\qreg{S}}\ket{0 \oplus \Test(\sz,
\tsig^\sz)}_{\qreg{T}}$
where $\Test$ is as defined in the construction. Then, measure register $\qreg{T}$ and 
output $\bot$ if the result is not $1$.

\item Perform the map
    $\ket{\sz}_{\qreg{R}}\ket{\sft}_{\qreg{P}}\ket{\sfs}_{\qreg{S}}
\mapsto \ket{\sz}_{\qreg{R}}\ket{\sft \oplus
\Stwo(\td, x^\sz; r^\sz)}_{\qreg{P}}\ket{\sfs \oplus \sig^\sz}_{\qreg{S}}$ on
registers $\qreg{R}, \qreg{P}$ and $\qreg{S}$
where $r^\sz = \F_\prfk(\sz)$ to get the state
$\rho_{\qreg{R}\otimes\qreg{P}\otimes\qreg{S}\otimes\qreg{B}}'$.
\item Output the state
$\rho_{\qreg{R}\otimes\qreg{P}\otimes\qreg{S}\otimes\qreg{B}}'$ along
with $\{\theta^j, y^j\}_{j \in [2\ell] \setminus \{i\}}$.
\end{enumerate}
\end{itemize}

\begin{lemma} \label{lemma:i}
For all $\theta^i \in \bit^\secp, b' \in \bit$ and states
$\ket{\tau}_{\qreg{C}\otimes\qreg{D}}$ on registers $\qreg{C},
\qreg{D}$ where $\qreg{D}$ is of size $\lambda$, the following holds:

$$\big(\qreg{C}, \cZ^i(\theta^i, b', \qreg{D})\big) \approx_c
\big(\qreg{C}, \cZ^i(0^\secp, b', \qreg{D})\big)$$
\end{lemma}

\begin{proof}
Consider the following hybrids that act on inputs $(\theta^i, b',
\qreg{D})$:
\begin{itemize}
\item $\underline{\cZ^i_0:}$ Same as $\cZ^i$.

\item $\underline{\cZ^i_1:}$ Similar to $\cZ^i_0$, except that after the signature
check in Step 5., another check is performed before Step 6., where the
terms $\sz$ in superposition are post-selected to be valid statements
wrt the witness $\wit_\out = (\{\theta^i\}_{i\in[2\ell]}, \key^0,
\key^1)$. In other words, the map
$\ket{\sz}_{\qreg{R}}\ket{0}_{\qreg{W}} \mapsto
\ket{\sz}_{\qreg{R}}\ket{\chk(\sz)}_{\qreg{W}}$ is first performed, where
$\chk(\sz)$ is a function that outputs $1$ iff $\wit_\out \in
\Rout(x^\sz)$, where $\Rout$ is the $\NP$ relation specified by the
description of $x^\sz$. Then, the $\qreg{W}$ register is measured and
$\bot$ is output if the result is not $1$.

\item $\underline{\cZ^i_2:}$ Similar to $\cZ_1^i$, except that it
generates simulated proofs $\pi^\sz_\sS$ on register $\qreg{P}$ as in
the distribution $\cZ(0^\secp, b', \qreg{D})$, instead of the real
proofs $\pi^\sz_\out$, apart from replacing $\crs_\out$ with a
simulated reference string $\crs_\sS$.

\item $\underline{\cZ^i_3:}$ Similar to $\cZ^i_2$, except that $\ct^0,
\ct^1$ are sampled as truly random values, and the simulated proofs
$\pi_\sS^\sz$ are generated according to the new ciphertexts.

\item $\underline{\cZ^i_4:}$ Similar to $\cZ^i_3$, except that the
post-selection step introduced in $\cZ^i_1$ is no longer performed.
\end{itemize}

We begin by arguing the indistinguishability of $\cZ_0^i$ and
$\cZ_1^i$ as follows:

\begin{lemma}\label{lemma:ii}
For all $\theta^i \in \bit^\secp$, $b' \in \bit$ and states
$\ket{\tau}_{\qreg{C}\otimes\qreg{D}}$ with a $\secp$-size register
$\qreg{D}$, we have:
$$(\qreg{C}, \cZ_0^i(\theta^i, b', \qreg{D}))
\approx_s (\qreg{C}, \cZ_1^i(\theta^i, b', \qreg{D}))$$
\end{lemma}
\begin{proof}
The intuition here is that if the hybrids are not close, then one can use the one-way-to-hiding
(O2H) lemma \cite{C:AmbHamUnr19} in order to efficiently find a value
$\tsz$ along with a valid signature/MAC $\tsig^{\tsz}$ such that
$\wit_\out$ is not a valid witness wrt the statement $x^{\tsz}$. Since
the statements $\sz$ in superposition given to $\Ver^*$ are all valid wrt
$\wit_\out$, the last $\ell\secp$ bits of $\tsz$ must differ from $y^{\ell+1} \| \ldots \|
y^{2\ell}$ at atleast one computational basis position. It can then be
shown that an adversary cannot guess the corresponding ``block'' of
the signature $\sig^{\tsz}$, except with negligible probability. We
show this formally as follows:

Consider first the case when $\theta^i \neq 0^\secp$ and assume that
there exists an unbounded distinguisher $\D$ that has non-negligible
advantage in distinguishing between the distributions. Consider the
following functions $G, H$:

$\underline{G[\aux](\sz, \tsig^\sz):}$
\begin{itemize}
\item Output $(1, \{s^{i,0}, s^{i,1}\}_{i\in[2\ell\secp]})$ if $\Test(\sz, \tsig^\sz) = 1$ AND $\wit_\out \in \Rout(x^\sz)$, where $\Test, x^\sz, \wit_\out$, $\{s^{i,0},
s^{i,1}\}_{i\in[2\ell\secp]}$ and $\Rout$ are all as defined in
Section \ref{sec:crs}, and are parsed from $\aux$ that is part of the
description of $G$.
\item Otherwise, output $(0, \{s^{i,0}, s^{i,1}\}_{i\in[2\ell\secp]})$.
\end{itemize}

$\underline{H[\aux](\sz, \tsig^\sz):}$
\begin{itemize}
\item Output $(1, \{s^{i,0}, s^{i,1}\}_{i\in[2\ell\secp]})$ if
    $\Test(\sz, \tsig^\sz) = 1$ where $\Test, x^\sz, \wit_\out$,
    $\{s^{i,0}, \allowbreak
s^{i,1}\}_{i\in[2\ell\secp]}$ and $\Rout$ are all as defined in
Section \ref{sec:crs}, and are parsed from $\aux$ that is part of the
description of $H$.
\item Otherwise, output $(0, \{s^{i,0}, s^{i,1}\}_{i\in[2\ell\secp]})$.
\end{itemize}

We now construct an algorithm $\qA$ that makes use of
$\D$ in order to distinguish between the oracles $G$ and $H$.

$\underline{\qA^{\cO}(\rho_\aux):}$
\begin{itemize}
\item From the input $\rho_\aux$, obtain the state
$\rho_{\qreg{R}\otimes\qreg{P}\otimes\qreg{S}\otimes\qreg{B}}$, which
corresponds to the state computed after Step 4. of $\cZ^i$.

\item Evaluate $\cO$ (which is either $G$ or $H$) in superposition of registers
$\qreg{R}, \qreg{S}$ and post-select on the outcome being $1$. Also, obtain
$\{s^{i,0}, s^{i,1}\}_{i \in [2\ell\secp]}$ from $\cO$ and all
other values corresponding to $\cZ^i$ from the auxiliary input state $\rho_\aux$.
\item Perform the uncomputation step, i.e., Step 6. of $\cZ^i$ to get the state
$\rho'_{\qreg{R}\otimes\qreg{P}\otimes\qreg{S}\otimes\qreg{B}}$.
\item Finally, run $\D$ on input
$\rho'_{\qreg{R}\otimes\qreg{P}\otimes\qreg{S}\otimes\qreg{B}}$
and $\{\theta^j, y^j\}_{j\in[2\ell] \setminus \{i\}}$ and output the
bit output by $\D$.
\end{itemize}

Observe that $\qA^\cO$ provides $\D$ with either the output of $\cZ_0^i$ or
$\cZ^i_1$ depending on whether $\cO$ is $G$ or $H$.

Consider now the following algorithm $\qR$ which is given a state
$\rho_\aux$ which consists of the state
$\rho_{\qreg{R}\otimes\qreg{P}\otimes\qreg{S}\otimes\qreg{B}}$ that is
computed after Step 4. of $\cZ^i$ along with other values sampled in
$\cZ^i$. Importantly, $\qR$ is only given $s^{j, y[j]}$ for each $j
\in [(\ell+1)\secp, 2\ell\secp]$ such that $\theta_j = 0$. Then, $\qR$
runs the algorithm $\qA^H(\rho_\aux)$, except it doesn't respond to
the query and simply measures the oracle query made by $\qA$ to obtain
$(\tsz, \tsig^{\tsz})$. Observe that $\qR$ here simply corresponds to
the algorithm $\qB^H$ guaranteed by the O2H Lemma in the special case
when $q = 1$. Hence, with non-negligible probability, $\tsz$ is such
that $\wit_\out \notin \Rout(x^{\tsz})$ and yet $\tsig^{\tsz}$ is a
valid signature, i.e., $\tsig^{\tsz} = \sig^{\tsz}$. Clearly, this means there
exists some $j \in [(\ell+1)\secp, 2\ell\secp]$ where $\theta_j =
0$ and $\tsz[j] \neq y[j]$. Notice however, that $\qR$ was given no
information about $s^{j, 1 - y[j]}$. Consequently, it is easy to see
by a union bound analysis that $\tsig^{\tsz} = \sig^{\tsz}$ holds with
at most negligible probability, a contradiction. This means no such
distinguisher $\D$ exists, and the proof of the $\theta = 0^\secp$
case is essentially the same, finishing the proof.
\end{proof}

Next, we argue that $\cZ_1^i$ and $\cZ_2^i$ are computationally close.
Recall that by definition, the statements $x^\sz$ corresponding to superposition
terms $\sz$ are all true wrt
$\wit_\out$ before computation of
the proofs on register $\qreg{P}$. Moreover,
the same holds before uncomputation as well,
due to the post-selection step introduced in $\cZ^i_1$.  Hence, we can
invoke quantum-secure adaptive zero-knowledge (Definition
\ref{def:nizk}) of $\Pi_\nizk$ to guarantee that $\big(\qreg{C},
\cZ^i_1(\theta^i, b', \qreg{D})\big) \approx_c \big(\qreg{C},
\cZ_2^i(\theta^i, b', \qreg{D})\big)$, as the two only differ in that
one uses real proofs, while the other uses simulated proofs. In
particular, both the computation and uncomputation of the proofs in
the former (likewise latter) distribution can be performed using
quantum oracle access (as in Definition \ref{def:nizk}) to
$\Pro(\crs_\out, \cdot, \cdot)$ (likewise $\Stwo(\td, \cdot)$). Note
that by quantum-security of $\F_\prfk$, we can assume wlog that the oracle access
to $\Pro(\crs_\out, \cdot, \cdot), \Stwo(\td, \cdot)$ in the
quantum-secure adaptive zero-knowledge definition to be simulated via
randomness derived from the PRF.

Observe now that $\big(\qreg{C},
\cZ^i_2(\theta^i, b', \qreg{D})\big) \equiv \big(\qreg{C},
\cZ_3^i(\theta^i, b', \qreg{D})\big)$ because in the former, the
ciphertexts $\ct^0, \ct^1$ are masked by uniform and independent values $\key^0,
\key^1$, while they are directly sampled from the uniform distribution
in the latter. The equivalence of the distributions then follows from
the fact that the simulated proofs are directly computed using the
statements $x^\sz$ (which depend on $\ct^0, \ct^1$).

Finally, $\big(\qreg{C},
\cZ^i_3(\theta^i, b', \qreg{D})\big) \approx_s \big(\qreg{C},
\cZ_4^i(\theta^i, b', \qreg{D})\big)$ follows from essentially the
same argument as that of Lemma \ref{lemma:ii}. Notice now that
$\cZ_4^i(\theta^i, b', \qreg{D})$ is the same distribution as 
$\cZ^i(0^\secp, b', \qreg{D})$, and recall that $\cZ^i_0(\theta^i, b',
\qreg{D})$ is the same distribution as $\cZ^i(\theta^i, b',
\qreg{D})$. Therefore, we have $\big(\qreg{C}, \cZ^i(\theta^i,
b', \qreg{D})\big) \approx_c \big(\qreg{C}, \cZ^i(0^\secp, b',
\qreg{D})\big)$.
\end{proof}

Next, assume for contradiction that there exists a malicious QPT verifier
$\Ver^*$ and an unbounded distinguisher $\sD$ that distinguishes the
output of $\Hyb_2^{i-1}$ and $\Hyb_2^{i}$ with non-negligible
probability. We will now construct a QPT adversary $\qA$ (along with
an unbounded distinguisher) that
participates in experiment $\Exp_{\cZ^i, \qA}$ specified by Theorem
\ref{thm:cert-del} and breaks its certified-deletion security.

$\underline{\Exp_{\cZ^i, \qA}(b):}$
\begin{enumerate}
\item Sample $y^i, \theta^i \la \bit^\secp$ and
initialize $\qA$ with $1^\secp$ along with the following input (or
the input $\bot$ if $\cZ^i$ does output $\bot$):

$$\big(\rho'_{\qreg{R}\otimes\qreg{P}\otimes\qreg{S}\otimes\qreg{B}},
\{\theta^j, y^j\}_{j \in [2\ell] \setminus \{i\}}\big)\seteq
\cZ^i\bigg(\theta^i, b \oplus \bigoplus_{j:\theta^i_j = 0}y^i_j,
\ket{y^i}^{\theta^i}\bigg)$$

\item $\qA$ executes as follows:

\begin{itemize}
\item Measure the $\qreg{R}$ register of
$\rho'_{\qreg{R}\otimes\qreg{P}\otimes\qreg{S}\otimes\qreg{B}}$ in the Hadamard basis
to get outcome $\cert$ and leftover state
$\rho''_{\qreg{R}\otimes\qreg{P}\otimes\qreg{S}\otimes\qreg{B}}$.
\item Parse $\cert = \cert^1 \| \ldots \| \cert^{2\ell}$ where
$\cert^j \in \bit^\secp$ for each $j \in [2\ell]$.
\item If for each $j \in [2\ell] \setminus \{i\}$, it holds that
$\cert^j_k = y^j_k$ for each $k \in [\secp]$ such that $\theta^j_k = 1$, then
output $\cert^i$ along with the state $\rhoB''$ on register
$\qreg{B}$. Otherwise, set $\rho''_{\qreg{B}} \seteq \bot$ and output
$\cert^i$ along with $\rho''_{\qreg{B}}$.
\end{itemize}

\item If $\forall j \in [\secp]$ such that $\theta^i_j = 1$,
$\cert^i_j = y^i_j$, output $\rho''_B$ (and $\bot$
otherwise).
\end{enumerate}

Observe now that the output of $\Hyb_2^{i-1}$ is identically
distributed to the output of $\Exp_{\cZ^i, \qA}(0)$ (conditioned on
$\theta^i \neq 0^\secp$ which occurs with $\negl$ probability).
Likewise, $\Hyb_2^i$ is statistically close to $\Exp_{\cZ^i,
\qA}(1)$, assuming a bit flip occurs from $\Hyb_2^{i-1}$ to $\Hyb_2^i$
(otherwise, the hybrids are identically distributed). Hence, the
unbounded distinguisher $\sD$ can also distinguish between
$\Exp_{\cZ^i, \qA}(0)$ and $\Exp_{\cZ^i, \qA}(1)$ with non-negligible
advantage. However, Theorem \ref{thm:cert-del}
guarantees that $\Exp_{\cZ^i, \qA}(0) \approx_s \Exp_{\cZ^i, \qA}(1)$,
a contradiction. Therefore, $\Hyb_2^{i-1} \approx_s \Hyb_2^i$.
\end{proof}

As a consequence of the above lemma, we have that $\Hyb_1 \approx_s
\Hyb_2^\ell$.

Next, consider the hybrids $\Hyb_2^{\ell+1}, \ldots, \Hyb_2^{2\ell}$,
in which the proof $\pi_\inn$ inside $\ct^0$ will be replaced
bit-by-bit with $0s$. In other words, $\Hyb_2^{2\ell}$ is the
following:

$\underline{\Hyb_2^{2\ell}:}$
\begin{itemize}
\item $\Ver^*$ is initialized with
\textcolor{black}{$\crs_\Sim \| \crs_\out$}, where
\textcolor{black}{$(\crs_\Sim, \pi_\Sim) \la \Sim_\nizk(x)$} and
$\crs_\out \la \Setup_\nizk(1^\secp)$, with $\Sim_\nizk$ being the
QPT simulator of $\Pi_\nizk$.
\item Then, $\Ver^*$ is provided with $\sigma =
(\ket{\psi}_{\qreg{R}\otimes \qreg{P} \otimes \qreg{S}}, \ct^0, \ct^1)$
computed as $(\sigma, \qP) \la \Pro(\textcolor{black}{\crs_\Sim \|
\crs_\out}, \allowbreak x, \wit)$, except that $\ct^0 =
\textcolor{red}{0^\ell} \oplus \pad^0 \oplus \key^0$ and $\ct^1 =
\pi_\Sim \oplus \pad^1 \oplus \key^1$.
\item Then, $\Ver^*$ outputs a bi-partite state $\rhoAB$. If
$\DelVrfy(\qP, \rhoA) = \top$, output $\rhoB$ where $\rhoA$ is the
state on register $\qreg{A}$ and $\rhoB$ is the residual state on
register $\qreg{B}$. Output $\bot$ otherwise.
\end{itemize}

\begin{lemma}
$\forall i \in [\ell]: \Hyb_2^{\ell + i - 1} \approx_s \Hyb_2^{\ell+i}$
\end{lemma}
\begin{proof}
This follows from a similar argument as the proof of Lemma
\ref{lemma:main}. This is due to the fact that the statement $x^\sz$ remains
true whether a particular bit inside $\ct^0$ is flipped or not. This is because
$\ct^1$ already consists of $\pi_\Sim$ which is sufficient to satisfy
the second clause of the OR condition of statement $x^\sz$,
due to the computational zero-knowledge property of $\Pi_\nizk$.
Consequently, we can invoke the quantum-secure adaptive zero-knowledge
guarantee for the outer proofs as in the proof of Lemma
\ref{lemma:main}.
\end{proof}

Using this lemma, we have $\Hyb_2^\ell \approx_s
\Hyb_2^{2\ell}$, which gives us that $\Hyb_0 \approx_s
\Hyb_{2}^{2\ell}$. Notice that in $\Hyb_2^{2\ell}$, the witness $\wit$
is never used and that the view of $\Ver^*$
can be simulated entirely with the statement $x$. This finishes the
proof of CE-ZK, and hence the proof of $\Pi$ being a CE-NIZK in the
CRS model.


\section{CE-NIZK in the Shared EPR Model}\label{sec:crqs_pos}

\subsection{The Hidden-Bits Paradigm}

\begin{definition}[Hidden Bits Generator \cite{EC:QuaRotWic19}]
\label{def:hbg}
A hidden bits generator (HBG) is a triple of algorithms $(\Setup,
\GenBits, \Verify)$ with the following syntax:
\begin{itemize}
\item $\Setup(1^\secp, 1^k) \ra \crs:$ The setup algorithm takes as
input a security parameter and the number of hidden-bits $k$, and
outputs a common reference string $\crs$.
\item $\GenBits(\crs) \ra \big(\com, r, \{\pi_i\}_{i\in[k]}\big):$ The
hidden bits generation algorithm
outputs a commitment $\com$, hidden bits $r \in \bit^k$ and openings
$\{\pi_i\}_{i\in[k]}$.
\item $\Verify(\crs, \com, i, r_i, \pi_i) \ra \top/\bot:$ The
verification algorithm takes an index $i$, hidden bit $r_i$ and proof
$\pi_i$ along with a CRS $\crs$ and commitment $\com$. It outputs
$\top$ (accept) or $\bot$ (reject).
\end{itemize}

An HBG must satisfy the following properties:
\begin{itemize}
\item \emph{Correctness:} For all $i \in [k]$, the following holds:
\[
\Pr\left[
\Verify(\crs,\com,i,r_i,\pi_i) = \bot
 \ :
\begin{array}{rl}
 &\crs \gets \Setup(1^\secp, 1^k)\\
 & (\com,r,\pi_{[k]}) \la \GenBits(\crs)
\end{array}
\right]
\le \negl
\]

\item \emph{Succinct Commitment:} There exists a set of commitments
$\cC$ such that for all $\crs \la \Setup(1^\secp, 1^k)$ and every $\com$
output by $\GenBits(\crs)$, $\com \in \cC$. Moreover, there exists
a constant $\delta < 1$ such that $|\cC| \le
2^{k^{\delta}\poly(\secp)}$. Furthermore, $\forall \com \notin \cC$,
$\Verify(\crs, \com, \cdot, \cdot, \cdot)$ outputs $\bot$.

\item \emph{Statistical Binding:} There exists a (possibly inefficient)
deterministic algorithm $\Open(1^k, \allowbreak\crs, \com)$ such that 
for all $k = \poly(\secp), \crs$ and $\com$, it outputs $r$ such that
for every every unbounded $\tPro$, the following holds:

\[
\Pr\left[
\begin{array}{rl}
&r_i^* \neq r_i\\
&\land\\
&\Verify(\crs, \com, i, r_i^*, \pi_i) = \top
\end{array}
\ :
\begin{array}{rl}
 &\crs \gets \Setup(1^\secp, 1^k)\\
 & (\com, i, r_i^*, \pi_i) \la \tPro(\crs)\\
 & r \la \Open(1^k,\crs,\com)
\end{array}
\right]
\le \negl
\]

\item \emph{Computational Hiding:} For all $k = \poly(\secp)$ and $I
\subseteq [k]$, the following holds:

\[
\left[
\begin{array}{ll}
    (\crs,\com, &I,r_I,\pi_I,r_{\overline{I}})\\
 &\approx_c\\
    (\crs,\com,&I,r_I,\pi_I,r'_{\overline{I}})
\end{array}
 \ :
\begin{array}{rl}
 &\crs \gets \Setup(1^\secp, 1^k)\\
 & (\com,r,\pi_{[k]}) \la \GenBits(\crs)\\
 & r' \la \bit^k
\end{array}
\right]
\]
\end{itemize}
\end{definition}

\begin{theorem}[\cite{EC:WatWeeWu25}]\label{thm:lwe_hbg} Assuming the polynomial hardness
of LWE, there exists an HBG with statistical binding
and computational hiding in the CRS model.
\end{theorem}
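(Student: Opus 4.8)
The plan is to reconstruct an explicit lattice instantiation of the hidden-bits generator abstraction of \cite{EC:QuaRotWic19}, following Waters \cite{STOC:Waters24} and its polynomial-modulus sharpening in \cite{EC:WatWeeWu25}. At a high level, the $\crs$ will carry LWE matrices set up in a ``binding mode''; a commitment $\com$ will be a succinct lattice encoding of a short random seed; the hidden string $r \in \bit^k$ will be obtained by applying a deterministic, injective lattice-based expansion map to $(\crs, \com)$, one output bit per public ``direction'' baked into the $\crs$; and each opening $\pi_i$ will be a short certificate (a preimage / short witness vector) letting the verifier recompute $r_i$ from $(\crs, \com)$ and check it. Concretely, I would fix parameters $n, m, p, q$ with $m \gg n \log q$ and $q \gg p$, have $\Setup(1^\secp, 1^k)$ sample the $\crs$ so that the designated matrix is, with overwhelming probability, injective on the relevant domain after $p$-rounding, and have $\GenBits$ produce $\com$ of a fixed length (so that $|\cC| \le 2^{k^\delta \poly(\secp)}$ for small $\delta$) together with the $\pi_i$. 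Correctness and the succinct-commitment property --- including that $\Verify$ rejects any $\com \notin \cC$ --- then follow directly from the construction.

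Next I would establish statistical binding: in the binding mode the designated matrix is injective, so $(\crs, \com)$ information-theoretically pins down the seed and hence all of $r$. I would let the inefficient $\Open(1^k, \crs, \com)$ brute-force this inversion, and then argue that any accepting opening $\pi_i^*$ must be consistent with the unique $r_i$ extracted by $\Open$, except with $\negl$ probability over $\crs$. This is exactly where a union bound over the at most $2^{k^\delta \poly(\secp)}$ possible commitments is affordable, since the guarantee being union-bounded is statistical; it is also why the succinctness of $\com$ is not merely cosmetic.

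The step I expect to be the main obstacle is computational hiding. The argument switches the $\crs$ from the binding mode to a ``lossy/hiding'' mode (the designated matrix becomes low-rank-plus-small-noise), under which the directions indexed by $\overline{I}$ carry no information about the seed beyond what $r_I, \pi_I$ already reveal, so $r_{\overline{I}} \approx_s$ uniform given the adversary's view; indistinguishability of the two $\crs$ modes reduces to LWE. The delicate point --- and the reason \cite{STOC:Waters24} needed a sub-exponential modulus --- is that the reduction must carry this mode switch through even though there are exponentially many possible seeds/commitments and the distinguisher adaptively picks $I \subseteq [k]$; a naive reduction wants a union bound (or complexity leveraging) over the commitment space. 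The sharpening of \cite{EC:WatWeeWu25} is a more modular reduction that isolates the mode switch into polynomially many LWE instances by exploiting the algebraic structure of the encoding, so no union bound over $\cC$ is needed, which is precisely what brings the assumption down to polynomially hard LWE. The bulk of the effort is in verifying that this refined mode-indistinguishability argument composes cleanly with the binding-mode setup and yields hiding uniformly over \emph{all} subsets $I$, and that the rounding parameters can be chosen to make injectivity (for binding) and pseudorandomness under polynomial-modulus LWE (for hiding) hold simultaneously.
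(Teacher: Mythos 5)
There is nothing in the paper to compare your argument against: Theorem \ref{thm:lwe_hbg} is an \emph{imported} result, stated with a citation to \cite{EC:WatWeeWu25} and used as a black box (the surrounding text only remarks that \cite{STOC:Waters24} gave such an HBG from LWE with sub-exponential modulus and \cite{EC:WatWeeWu25} improved this to polynomial modulus). So the relevant question is whether your sketch would stand on its own as a proof, and in its current form it does not. The decisive content of the theorem --- computational hiding under \emph{polynomially} hard LWE, i.e.\ precisely the improvement of \cite{EC:WatWeeWu25} over \cite{STOC:Waters24} --- is the step you explicitly defer: you say the cited work gives ``a more modular reduction that isolates the mode switch into polynomially many LWE instances by exploiting the algebraic structure of the encoding,'' but you never describe that mechanism (the new preimage/trapdoor-sampling technique that lets the reduction simulate valid openings for the adaptively chosen set $I$ without noise flooding or leveraging). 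Invoking the cited paper for exactly the step being proved makes the proposal circular as a proof of this theorem, even though it is accurate as a summary of how the paper uses the result.

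Beyond that, several parts of your outline are too underspecified to check: you do not say how the hidden bits $r_i$ are actually derived from $(\crs,\com)$ (e.g.\ rounding of which lattice quantities), what an opening $\pi_i$ is concretely, or how $\Verify$ checks it, so neither correctness, nor the succinctness bound $|\cC|\le 2^{k^\delta\poly(\secp)}$, nor the statistical-binding union bound can be verified from what is written. Your identification of where the union bound is affordable (binding, statistical) versus where it is not (hiding, computational) is a reasonable reading of why \cite{STOC:Waters24} needed stronger parameters, but without the actual hiding reduction this remains a plan rather than a proof. If the intent was simply to justify citing \cite{EC:WatWeeWu25}, the paper already does that; if the intent was to prove the theorem, the core lattice argument still has to be supplied.
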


\begin{definition}[NIZK in the Hidden Bits Model \cite{FLS90}] A NIZK proof in the
hidden bits model (HBM) for a language $\cL \in \NP$ is a pair of
algorithms $(\Pro, \Ver)$ with the following syntax, where $k(\secp,n) =
\poly(\secp, n)$:

\begin{itemize}
\item $\Pro(r, x, \wit) \ra (I, \pi):$ The prover takes a hidden bit string $r \in
\bit^{k(\secp,n)}$, a statement $x$ with size $|x| = n$, and a witness
$\wit$ as input. It outputs a set of indices $I \subseteq [k]$ and a proof
$\pi$.

\item $\Ver(I, r_{I}, x, \pi) \ra \top/\bot:$ The verifier on input a set of indices
$I$, hidden bits $r_{I} = \{r_i\}_{i\in I}$, statement $x$ and proof
$\pi$, outputs $\top/\bot$.
\end{itemize}

Such a NIZK must satisfy the following properties:

\begin{itemize}
\item \emph{Completeness:} For every $x \in \cL$ of size $|x|=n$ and $\wit \in
\RL(x)$, we have:

\[
\Pr\left[
\Ver\big(I, r_I, x, \pi) \ra 0
 \ :
\begin{array}{rl}
 &r \la \bit^{k(\secp,n)}\\
 & (I, \pi) \la \Pro(r,x,\wit)
\end{array}
\right] \le \negl
\]

\item \emph{Soundness:} For all $n=\poly(\secp)$ and every unbounded
$\Pro^*$, the following holds:

\[
\Pr\left[
\Ver\big(I, r_I, x, \pi) \ra 1 \boldsymbol{\land} |x| = n
\boldsymbol{\land} x \notin \cL
 \ :
\begin{array}{rl}
 &r \la \bit^{k(\secp,n)}\\
 & (x, \pi, I) \la \Pro^*(r)
\end{array}
\right] \le \negl
\]

\item \emph{Zero-Knowledge:} There exists a simulator $\Sim$ such that
for every $x \in \cL$ and $\wit \in \RL(x)$, the following holds:

\[
\left[
(x, I, r_I, \pi) \approx_s (x, I', r_I', \pi')
 \ :
\begin{array}{rl}
 &r \la \bit^{k(\secp,n)}\\
 & (I, \pi) \la \Pro(r, x, \wit)\\
 & (I', r'_I, \pi') \la \Sim(x)
\end{array}
\right]
\]
\end{itemize}
\end{definition}

\begin{theorem}[\cite{FLS90}]\label{thm:fls} NIZKs exist in
the HBM for every language $\cL \in \NP$.
\end{theorem}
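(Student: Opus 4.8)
The plan is to prove the theorem for one fixed $\NP$-complete language and transfer it to all of $\NP$ by a public Karp reduction: given $\cL \le_p \cL'$ with $\cL'$ $\NP$-complete, a NIZK in the hidden-bits model for $\cL$ is obtained by having both parties first map the instance (and the prover the witness) through the reduction and then run the NIZK for $\cL'$; this uses the hidden bits unchanged and preserves completeness, soundness, and statistical zero-knowledge. I would take $\cL'$ to be directed graph Hamiltonicity.

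For Hamiltonicity I would follow the classical Feige--Lapidot--Shamir paradigm. Parse the hidden string into many independent blocks; read one block as the adjacency matrix of a random directed graph $H$ on $N=\poly(n)$ vertices with edge probability $p=\Theta(\log N/N)$, so that $H$ is pancyclic --- in particular contains a directed $n$-cycle --- except with negligible probability, yet is sparse. Given a Hamiltonian cycle $C$ of $G$ (the witness), the honest prover locates such an $n$-cycle $D$ in $H$, fixes an embedding $\phi$ of $V(G)$ into $V(H)$ carrying $C$ onto $D$, and outputs an index set $I$ and proof string $\pi$ that (i) designate the $n$ matrix positions of $D$ and open them to $1$, and (ii) open a carefully chosen further pattern of positions --- whose values in the honest case occur only with inverse-polynomial probability, which is why blocks must be amplified --- enough to bind the verifier to the claim that the designated cycle pulled back through $\phi$ is a Hamiltonian cycle of $G$, while leaving hidden enough of the string that this pullback is distributed as a uniformly random Hamiltonian cycle of $G$, i.e.\ depends on the witness only through the public fact that $G$ has one. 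The verifier reconstructs $\phi$ and the designated cycle from $(I,r_I,\pi)$, checks the opened bits are consistent with the claimed structure, checks the designated $n$ edges are present in $H$, and checks that their $\phi$-preimages form an $n$-cycle in $G$; it accepts a block if these pass, and accepts overall if at least one non-aborting block is present and every non-aborting block verifies. Statistical zero-knowledge follows by a simulator that samples the revealed embedding and cycle from their (witness-independent) marginal and fills in the opened bits accordingly, composed over blocks by a routine hybrid; soundness follows because for $G\notin\cL$ there is no Hamiltonian cycle to pull back to, so no opening of any block satisfies the verifier's final check; and running $\poly(\secp,n)$ blocks drives the honest prover's abort probability, hence the completeness error, down to $\negl$.

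The main obstacle, and the locus of the FLS cleverness, is step (ii): choosing exactly which positions are opened so that both requirements hold at once. Revealing too little lets a cheating prover designate an arbitrary ``cycle'' with no connection to $G$ that the verifier cannot refute; revealing too much --- e.g.\ the full embedding together with the exact location of $D$ --- lets the verifier recover $\phi^{-1}(D)=C$ and destroys zero-knowledge. Reconciling the two forces the embedding, the permutation implicit in it, and the ambient graph to be exposed only through quantities whose joint distribution is independent of $C$ (this is the reason a random sparse $H$ and a random embedding are used rather than a deterministic encoding of $\pi(G)$), and it is what pins $N$ to be polynomially larger than $n$. Once the opening pattern and parameters are fixed, completeness, statistical soundness, and statistical zero-knowledge reduce to the routine probabilistic and hybrid estimates sketched above.
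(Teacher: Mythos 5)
The paper does not prove this statement at all; it imports it from \cite{FLS90}, so the only basis for comparison is the FLS construction itself. Your sketch follows the right paradigm (Karp-reduce to Hamiltonicity, parse the hidden string into blocks interpreted as random sparse graphs, amplify over polynomially many blocks), but it has a genuine gap --- and an internal inconsistency --- at exactly the point you yourself flag as ``the locus of the FLS cleverness'': you never specify the opening pattern, and the partial description you do give is the one that fails. As written, the prover opens the $n$ positions of the designated cycle $D$ to $1$, and the verifier ``reconstructs $\phi$ and the designated cycle'' and ``checks that their $\phi$-preimages form an $n$-cycle in $G$.'' A verifier able to perform that check learns $\phi^{-1}(D)$, i.e.\ a Hamiltonian cycle of $G$, so the view cannot be reproduced by any efficient simulator --- and an efficient hidden-bits simulator is exactly what the hidden-bits-to-CRS compiler used in this paper requires. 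Saying the revealed cycle is ``distributed as a uniformly random Hamiltonian cycle of $G$'' does not rescue zero-knowledge: such a cycle is itself a witness. The actual FLS resolution is the \emph{complementary} reveal: the location of the cycle stays hidden, the prover opens precisely the positions corresponding to non-edges of $G$ under the embedding, and the verifier checks that every opened bit is $0$.

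This also invalidates your parameter choice. With edge probability $\Theta(\log N/N)$ the block graph $H$ contains many $1$-entries besides any one designated cycle, so under the correct (reveal-non-edges) check honest completeness fails, since stray $1$s land on opened positions. FLS (and Goldreich's presentation) instead work in a far sparser regime in which, with inverse-polynomial probability, a block is \emph{useful} --- it contains exactly an $n$-cycle as a permutation submatrix and no other $1$s --- the prover publicly opens \emph{all} bits of non-useful blocks, and soundness follows because among polynomially many blocks some are useful with overwhelming probability, and in a useful block any embedding of a non-Hamiltonian $G$ forces an opened $1$. Since the opening pattern, the usefulness condition, and the accompanying soundness and zero-knowledge analysis are precisely the content of the theorem, deferring them (``once the opening pattern and parameters are fixed\,\ldots\,routine estimates'') leaves the theorem essentially unproved; only the reduction to Hamiltonicity and the outer block-amplification structure of your sketch are correct as stated.
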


\begin{remark}[Amplifying Soundess \cite{EC:QuaRotWic19}]
For any polynomial $q(\secp, n)$, given a NIZK in the HBM (requiring $k'$ many hidden bits) with $\negl$ soundness error, we can build one with $2^{-q(\secp,
n)}\cdot \negl$ soundness error. This is obtained by a $q$-fold parallel
repetition of the base NIZK, requiring $q\cdot k'$ many hidden bits.
\end{remark}

\subsection{Construction from any Hidden Bits Generator}
\label{sec:crqs_const}

\begin{framed}
\vspace{-4mm}
\begin{center}
\textbf{Certified-Everlasting NIZK in the Shared EPR Model}
\end{center}
\noindent \underline{Building Blocks:}
\begin{enumerate}
\item Hidden Bits Generator $\HBG \seteq (\Setup_\bg, \GenBits_\bg,
\Verify_\bg)$.

\item NIZK in the Hidden Bits Model $\Pi_\hb \seteq (\Pro_\hb, \Ver_\hb)$.

\item Let $\ell, k$ be polynomials in $\secp$, where $\ell$ is assumed
to be sufficiently large (the exact size is determined in the proof of
statistical soundness, based on the number of hidden bits $k'$ required by
$\Pi_\hb$.
For each $(i,j)\in [\ell]\times[k]$, consider EPR pairs of the
form $\qreg{P^i_j}, \qreg{V_j^i}$. For each $i \in [\ell]$, set
$\qreg{P^i} \seteq \qreg{P_1^i} \otimes \ldots \otimes \qreg{P_k^i}$
and likewise for $\qreg{V^i}$. Finally, set $\qreg{P} \seteq
\qreg{P^1} \otimes \ldots \otimes \qreg{P^\ell}$ and likewise for
$\qreg{V}$.
\end{enumerate}

\noindent $\underline{\Setup(1^\secp):}$
\begin{itemize}
    \item Sample $\crs_\bg \la \Setup_\bg(1^\secp, 1^{k\ell})$ and 
$s \la \bit^\ell$.
\item Output $\crs \seteq (\crs_\bg, s)$.
\end{itemize}

\noindent $\underline{\Pro(\crs, \qreg{P}, x, \wit):}$
\begin{enumerate}
\item Parse $\crs = (\crs_\bg, s)$ and
$\qreg{P} = \qreg{P^1} \otimes \ldots \otimes \qreg{P^\ell}$.
\item Compute $(\com, \theta, \{\op^{(i,j)}\}_{(i,j)\in
[\ell]\times[k]}) \la \GenBits_\bg(\crs_\bg)$, where
$\theta = \{\theta^i\}_{i\in[\ell]}$ and $\theta^i \in \bit^k$
for each $i\in[\ell]$, while $\op^{(i,j)}$ denotes the opening for bit
$\theta^i_j$.
\item For each $i \in [\ell]$, execute the following:
\begin{enumerate}
\item Measure $\qreg{P^i}$ in the $\theta^i$ basis to get outcome $y^i$.
\item Compute $t_i \seteq \bigoplus\limits_{j: \theta^i[j] = 0}y_j^i$.
\end{enumerate}
\item Compute $r_i \seteq t_i \oplus s_i$ for each $i \in [\ell]$.

\item Execute $(I \subseteq [\ell], \pi_\hb) \la \Pro_\hb(r, x, \wit)$.
\item Output $\pi \seteq (I, \pi_\hb, \com, \theta^I = \{\theta^i\}_{i\in I},
    \{\op^{(i,j)}\}_{(i,j)\in I \times [k]})$ and $\qP \seteq (y \seteq y^1 \| \ldots \| y^\ell,
\theta, I, \qreg{P})$.
\end{enumerate}

\noindent $\underline{\Ver(\crs, \qreg{V}, x, \pi):}$
\begin{enumerate}
\item Parse $\pi = (I, \pi_\hb, \com, \theta^I = \{\theta^i\}_{i\in I},
\{\op^{(i,j)}\}_{i\in I, j \in [k]})$, $\crs = (\crs_\bg, s)$ and $\qreg{V}
= \qreg{V^1} \otimes \ldots \otimes \qreg{V^\ell}$.

\item For each $i \in I$, do the following:
\begin{itemize}
\item For each $j \in [k]$, if $\Verify_\bg(\crs_\bg, \com, (i, j),
\theta^i_j, \op^{(i,j)}) \neq
\top$, output $\bot$.
\item Measure $\qreg{V^i}$ in the $\theta^i$ basis to get $y^i$.
\item Compute $t_i \seteq \bigoplus\limits_{j: \theta^i[j] = 0}y_j^i$.
\end{itemize}

\item Compute $r_I \seteq t_I \oplus s_I$.

\item Output the bit $b \seteq \Ver_\hb(I, r_I, x, \pi_{\hb})$ along with
    $\qV \seteq (I, \qreg{V})$.
\end{enumerate}

\noindent $\underline{\Del(\qV):}$
\begin{enumerate}
\item Parse $\qV = (I, \qreg{V})$ and $\qreg{V} = \qreg{V^1}
\otimes \ldots \otimes \qreg{V^\ell}$.
\item For each $i \in [\ell] \setminus I$, measure $\qreg{V^i}$ in the
Hadamard basis to get $\cert^i$.
\item Output $\cert \seteq \{\cert^i\}_{i\in{[\ell] \setminus I}}$.
\end{enumerate}

\noindent $\underline{\DelVrfy(\cert, \qP):}$
\begin{enumerate}
\item Parse $\cert = \{\cert^i\}_{i\in[\ell] \setminus I}$,
$\qP = (y^1 \| \ldots \| y^\ell, \theta, I, \qreg{P})$.

\item For each $i \in [\ell] \setminus I$ and $j \in [k]$ such that
$\theta^i_j = 1$, if $\cert^i_j = y^i_j$, output $\top$. Otherwise,
output $\bot$.
\end{enumerate}
\end{framed}

\begin{theorem}
Let $\HBG$ be any statistical-binding hidden bits generator and
$\Pi_\hb$ be a NIZK for $\NP$ in the hidden bits model. Then, $\Pi =
(\Setup, \Pro, \Ver, \Del, \DelVrfy)$ is a CE-NIZK for $\NP$ in the shared EPR
model.
\end{theorem}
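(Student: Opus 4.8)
The plan is to verify the five properties of Definition~\ref{def:ce_nizk_epr} in turn, borrowing the high-level structure of the hidden-bits compiler of \cite{EC:QuaRotWic19} for the ``classical'' properties and reducing to the BB84 certified-deletion theorem (Theorem~\ref{thm:cert-del}) for the everlasting guarantee. \emph{Completeness} is immediate: for $i\in I$ the prover and verifier measure halves of the same EPR pairs in the same basis $\theta^i$, hence obtain the same $y^i$, the same $t_i$, and the same $r_i$; the HBG openings verify by correctness; and $\Ver_\hb(I,r_I,x,\pi_\hb)$ accepts by completeness of $\Pi_\hb$, using that $r$ is uniformly distributed because $s$ is. \emph{Deletion correctness} holds with certainty: $\Ver$ only touches $\qreg{V^i}$ for $i\in I$, so for $i\notin I$ the register $\qreg{V^i}$ is undisturbed, and measuring it in the Hadamard basis reproduces, on each position $j$ with $\theta^i[j]=1$, the outcome the prover already obtained on $\qreg{P^i_j}$; hence $\DelVrfy$ accepts.

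\emph{Statistical soundness.} Following the overview, I first pass to an equivalent experiment in which a challenger measures all of $\qreg{V}$ in the computational basis \emph{before} $\Pro^*$ runs, handing $\Pro^*$ the resulting uniform classical string $\hat y \in \bit^{k\ell}$ in place of its EPR halves. This is faithful: the only use $\Ver$ makes of $\qreg{V}$ is a computational-basis measurement of the positions $j$ with $\theta^i[j]=0$ of the opened blocks (Hadamard outcomes are discarded, and $\Del$/$\DelVrfy$ are irrelevant to soundness), and all such measurements commute with $\Pro^*$, which acts only on $\qreg{P}$. By statistical binding of the HBG together with a union bound over the $k\ell$ indices, with overwhelming probability over $\crs_\bg$ the opened basis blocks are forced to $\theta_I = \Open(\crs_\bg,\com)_I$, so the hidden bits fed to $\Ver_\hb$ are $r_i = \bigl(\bigoplus_{j:\theta^i[j]=0}\hat y^i_j\bigr)\oplus s_i$ for $i\in I$. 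Since $\com$ lies in a set $\cC$ of size $\le 2^{(k\ell)^\delta\poly(\secp)}$ by succinctness, and since for each fixed $\com\in\cC$ and fixed $\hat y$ the induced string $r$ is a fixed string XORed with the uniform mask $s$ (hence uniform), a union bound over $\cC$ against the soundness error of $\Pi_\hb$ --- amplified, via the soundness-amplification remark, so that $|\cC|$ times that error is $\negl$, which is possible as $\delta<1$ --- yields statistical soundness.

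\emph{Computational zero-knowledge} follows by essentially the argument of \cite{EC:QuaRotWic19}: the simulator generates the EPR pairs itself, samples $\crs_\bg$ and $(\com,\theta,\op)\la\GenBits_\bg(\crs_\bg)$, measures $\qreg{P}$ in basis $\theta$ to obtain $t$, runs $\Sim_\hb(x)\to(I,r_I,\pi_\hb)$, sets $s_i := t_i\oplus r_i$ for $i\in I$ and $s_i$ uniform otherwise, and outputs $(\crs_\bg,s)$, the (now collapsed) register $\qreg{V}$, and $(I,\pi_\hb,\com,\theta_I,\op_I)$. Indistinguishability reduces to computational hiding of the HBG --- making the unopened basis blocks $\theta_{\bar I}$, hence the registers $\qreg{V^i}$ for $i\notin I$ and the masks $s_{\bar I}$, distributed as in the real execution --- together with statistical zero-knowledge of $\Pi_\hb$; the string $s$ serves exactly its role in \cite{EC:QuaRotWic19}, letting the simulator program the revealed hidden bits.

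\emph{Certified-everlasting zero-knowledge} is the crux, and I expect it to be the main obstacle. The simulator is as above, but additionally runs $\Ver^*$ on the simulated transcript and runs $\DelVrfy$ using the $\qP$ it holds. To show $\trho\approx_s\rho_\Sim$, the plan is to reduce, block by block over the unopened blocks $i\notin I$, to Theorem~\ref{thm:cert-del}. For such a block, the view of $\Ver^*$ contains the BB84 state $\qreg{V^i}=\ket{y^i}^{\theta^i}$ together with the mask $s_i = r_i\oplus\bigoplus_{j:\theta^i[j]=0}y^i_j$ --- precisely the syntactic shape of the input in $\Exp_{\cZ,\qA}$ with plaintext bit $r_i$ --- and $\DelVrfy$'s check on block $i$ is exactly that experiment's certificate check. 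One defines $\cZ^i$ to produce the remainder of $\Ver^*$'s view (running the prover's algorithm with $\Pro_\hb$ and the HBG on the other blocks, then $\Ver^*$, then the block-$i$ part of $\DelVrfy$); the hypothesis of Theorem~\ref{thm:cert-del} requires $\cZ^i$ to be computationally semantically secure with respect to $\theta^i$, which is supplied by computational hiding of the HBG, since the only dependence of the rest of the view on $\theta^i$ is through $\com$, whose block-$i$ opening is never checked because $i\notin I$. Applying Theorem~\ref{thm:cert-del} then replaces $r_i$ by a uniform independent bit in a \emph{statistically} indistinguishable way, conditioned on the block-$i$ certificate passing; doing this for every unopened block decorrelates $r_{\bar I}$ from the rest of the view, after which the honest prover's call to $\Pro_\hb(r,x,\wit)$ can be exchanged for $\Sim_\hb(x)$ by statistical zero-knowledge of $\Pi_\hb$, leaving a distribution that depends only on $x$. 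The subtleties to handle with care are: (i) the opened set $I$ is produced by $\Pro_\hb$ and hence correlated with $r$, so the block-by-block hybrids must be arranged so that conditioning on $I$ stays consistent throughout; and (ii) verifying that $\cZ^i$ can faithfully reconstruct the rest of the view from only $\theta^i$, the mask $s_i$, and the register $\qreg{V^i}$ --- in particular, that since block $i$ is unopened it is harmless for the HBG commitment to be inconsistent with $\theta^i$, which is exactly what lets the reduction (and the use of HBG hiding inside its semantic-security hypothesis) go through. Overall the argument parallels the CE-ZK proof of the CRS construction, but is cleaner, as there are no ``outer'' NIZK proofs in superposition to simulate.
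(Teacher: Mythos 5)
Your proposal is correct and follows essentially the same route as the paper's proof: the commuted, all-computational-basis verifier plus statistical binding, succinctness, and amplified hidden-bits soundness for statistical soundness; the resample-$r$-first trick with HBG hiding and hidden-bits zero-knowledge for computational ZK; and a block-by-block reduction of the unopened blocks to Theorem~\ref{thm:cert-del} (with HBG hiding supplying the semantic-security hypothesis on $\theta^i$) followed by $\Sim_\hb$ for CE-ZK. Your replacement of $r_i$ by a uniform bit is the same step as the paper's replacement of $s_i$ by a uniform bit (they are one-time-pad equivalent), so the two arguments coincide up to presentation.
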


Hence, from Theorem \ref{thm:lwe_hbg} and Theorem \ref{thm:fls}, we have the following:

\begin{corollary}
Assuming the polynomial hardness of LWE, there exists a CE-NIZK for
$\NP$ in the shared EPR model.
\end{corollary}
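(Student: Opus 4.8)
The plan is to verify the five properties of a shared-EPR CE-NIZK in turn, with essentially all of the work concentrated in certified-everlasting zero-knowledge. \emph{Completeness} and \emph{deletion correctness} are immediate from the EPR correlations: when the prover measures $\qreg{P^i}$ in basis $\theta^i$ it obtains $y^i$ and collapses $\qreg{V^i}$ to $\ket{y^i}^{\theta^i}$, so re-measuring $\qreg{V^i}$ in basis $\theta^i$ (as $\Ver$ does for $i\in I$) returns the same $y^i$; hence the recomputed $t_i$ and $r_i=t_i\oplus s_i$ agree on both sides, the HBG openings verify by correctness of $\HBG$, and $\Ver_\hb$ accepts by completeness of $\Pi_\hb$ run on $r$ (which is uniform since $s$ is). For deletion correctness, $\Del$ Hadamard-measures each untouched $\qreg{V^i}$ with $i\notin I$, still in state $\ket{y^i}^{\theta^i}$, which returns $y^i_j$ on exactly the positions $j$ with $\theta^i_j=1$ --- precisely those checked by $\DelVrfy$ --- so it accepts with probability $1$.

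For \emph{statistical soundness} I would replace the honest verifier's $\theta^i$-basis measurements of $\qreg{V}$ by a single computational-basis measurement of the whole register $\qreg{V}$ performed before the malicious prover is even initialized. This is without loss of generality: computational-basis dephasing of $\qreg{V}$ commutes with the prover's operation (which touches only $\qreg{P}$ and fresh ancillas), and the verifier's acceptance depends on $\qreg{V}$ only through the values $t_i=\bigoplus_{j:\theta^i_j=0}y^i_j$, i.e. only through the computational-basis content of $\qreg{V}$. After this change there is a fixed uniform string $\hat y$, and $t_i$ is a deterministic function of $\hat y$ and the opened basis blocks; statistical binding of $\HBG$ (union-bounded over the $k\ell$ coordinates) then forces the opened $\theta_I$ to equal $\Open(1^{k\ell},\crs_\bg,\com)_I$, except with negligible probability, or else the verifier rejects. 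Finally, for any fixed $\com$ in the succinct set $\cC$ the derived string $r=t\oplus s$ is uniform (since $s$ is), so by soundness of $\Pi_\hb$ it makes $\Ver_\hb$ accept a false $\tx$ only with the soundness error of $\Pi_\hb$; a union bound over $\com\in\cC$ together with the soundness amplification noted after Theorem \ref{thm:fls} (a polynomial number of parallel repetitions suffices because $\delta<1$) closes the argument.

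\emph{Computational zero-knowledge} follows as in \cite{EC:QuaRotWic19}: the simulator runs $\GenBits_\bg$ for $(\com,\theta,\op)$, prepares the EPR pairs and measures the $\qreg{P}$-halves in basis $\theta$ for $y$, runs $\Sim_\hb(x)$ for $(I,r'_I,\pi_\hb)$, and rigs the CRS by setting $s_i=r'_i\oplus t_i$ for $i\in I$ and $s_i$ uniform otherwise; indistinguishability uses computational hiding of $\HBG$ to make the unopened basis blocks $\theta^i$ --- hence the unopened states $\ket{y^i}^{\theta^i}$ and CRS bits $s_i$ --- pseudorandom to a QPT distinguisher, after which statistical zero-knowledge of $\Pi_\hb$ swaps the honest $(I,r_I,\pi_\hb)$ for $\Sim_\hb$'s output. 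The \emph{certified-everlasting} simulator $\Sim^{\Ver^*}(x)$ is this same simulator, but it additionally runs $\Ver^*$ on the simulated $(\crs,\qreg{V},x,\pi)$, applies the honest $\DelVrfy$ check with its own $\qP$, and outputs $\Ver^*$'s residual state (or $\bot$). The bridge from the real experiment goes in two stages: first resample the unopened hidden bits $r_{\overline I}$ freshly, after $I$ and $\pi_\hb$ have been fixed from the original $r$, and then invoke statistical zero-knowledge of $\Pi_\hb$ to replace $(I,r_I,\pi_\hb)$ by $\Sim_\hb(x)$ (clean because the fresh $r_{\overline I}$ is now independent of everything). The first stage is the substance: processing unopened blocks one at a time, for $i\notin I$ the CRS bit $s_i=t_i\oplus r_i$ has exactly the shape $b\oplus\bigoplus_{j:\theta^i_j=0}y^i_j$ with $b=r_i$, and the block-$i$ part of $\DelVrfy$ demands return of the Hadamard content of block $i$, so conditioned on successful certification Theorem \ref{thm:cert-del} makes the view statistically independent of $r_i$ and the resampling invisible; blocks in $I$ contribute nothing.

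The main obstacle, exactly as in the analogous argument for the CRS construction, is instantiating this block-by-block reduction correctly. One must produce a distribution $\cZ^i$ running the whole protocol with block $i$'s EPR-half, basis, and CRS bit fixed by the theorem's arguments, whose block-$i$ deletion check coincides with $\DelVrfy$'s, and which is semantically secure with respect to $\theta^i$ --- and it is the last requirement where computational hiding of $\HBG$ is used, since an unopened $\theta^i$ is hidden by $\com$ from the QPT verifier $\Ver^*$, so $\Ver^*$'s state (the output of $\cZ^i$) computationally hides $\theta^i$ even though the theorem's conclusion is statistical. Making this airtight is delicate because the opened set $I$ and the proof $\pi_\hb$ both depend on the very hidden bit being swapped, so one must order the hybrids and case-split the definition of $\cZ^i$ on whether block $i$ ends up opened --- mirroring the $\theta^i=0^\secp$ case-split of the CRS proof --- and this bookkeeping, together with the routine amplification accounting in the soundness argument, is where the real work lies. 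Once it is done the final hybrid uses only $x$, giving $\trho\approx_s\rho_\Sim$; combined with the preceding properties this proves the theorem.
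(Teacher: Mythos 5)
Your proposal is correct and follows essentially the same route as the paper: completeness and deletion correctness from the EPR correlations, soundness via a verifier that pre-measures all of $\qreg{V}$ in the computational basis plus statistical binding and a union bound over the succinct commitment set with amplified hidden-bits soundness, and CE-ZK by decoupling the unopened blocks through Theorem \ref{thm:cert-del} (with computational hiding of the HBG supplying the semantic-security premise) before invoking statistical zero-knowledge of $\Pi_\hb$, instantiated from LWE via Theorem \ref{thm:lwe_hbg}. The only cosmetic difference is that you phrase the decoupling as resampling $r_{\overline{I}}$ and worry about a case-split on whether block $i$ is opened, whereas the paper resamples $s_{\overline{I}}$ after a reordering hybrid that samples $r$ first, which fixes $I$ independently of the block contents and avoids that case-split.
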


\begin{proof}
We show that $\Pi$ satisfies the properties of CE-NIZK as follows:
\paragraph{Completeness:} The hidden-bits NIZK proof $\pi_\hb$ is
computed as $(I, \pi_\hb) \la \Pro_\hb(r, x, \wit)$ where $r = t \oplus s$.
Clearly, the distribution of $r$ is uniform. By correctness of
the HBG $\HBG$, all of the $\Verify_\bg(\crs_\bg, \com, (i,j),
\theta^i_j, \op^{(i, j)})$ checks will pass. Moreover, the verifier $\Ver$
computes the correct $r_I$ values as $s_I$ is available as part of
$\crs$ and the $t_I$ values are common due to the EPR pairs being
measured in the same basis on both ends for the indices corresponding
to $I$.
Finally, $\Ver_\hb(I, r_I, x, \pi_\hb)$ outputs $1$ by the
completeness of the hidden-bits NIZK $\Pi_\hb$. Consequently, $\Ver$
outputs $1$.

\paragraph{Deletion Correctness:} This follows directly from the fact
that for each $i \in [\ell] \setminus I$ and each $j \in [k]$ such
that $\theta^i_j = 1$, both parties measure the in the Hadamard basis to
get $\cert^i_j = y^i_j$.

\paragraph{Computational Zero-Knowledge:} For every $x \in \cL$ and
$\wit \in \RL(x)$, consider the following hybrids:

$\underline{\Hyb^\zk_0}:$ This is the real experiment, which proceeds as
follows:

\begin{enumerate}
\item $\crs_\bg \la \Setup_\bg(1^\secp, 1^{k\ell})$. $s \la \bit^\ell$.
\item $(\com, \theta, \{\op^{(i,j)}\}_{(i,j)\in [\ell] \times [k]}) \la \GenBits_\bg(\crs_\bg)$.
\item For each $i\in[\ell]$, compute $t_i$ as in $\Pi$ from
register $\qreg{P}$ using $\theta$ as the measurement basis. Let $t = t_1 \| \ldots \| t_\ell$.
\item Compute $r \seteq t \oplus s$.
\item Execute $(I, \pi_\hb) \la \Pro_\hb(r, x, \wit)$.
\item Output $\crs \seteq (\crs_\bg, s)$, $\pi \seteq
(I, \pi_\hb, \com, \{\theta^i\}_{i\in I}, \{\op^{(i,j)}\}_{(i,j)\in I
\times [k]})$ and the verifier register $\qreg{V}$.
\end{enumerate}

$\underline{\Hyb^\zk_1}:$ Here, $r$ is sampled before $s$ as follows:

\begin{enumerate}
\item $\crs_\bg \la \Setup_\bg(1^\secp, 1^{k\ell})$.
$\textcolor{red}{r \la \bit^\ell}$.

\item $(\com, \theta, \{\op^{(i,j)}\}_{(i,j)\in [\ell] \times [k]}) \la \GenBits_\bg(\crs_\bg)$.

\item For each $i\in[\ell]$, compute $t_i$ as in $\Pi$ from
register $\qreg{P}$ using $\theta$ as the measurement basis. Let $t = t_1 \| \ldots \| t_\ell$.

\item Compute $\textcolor{red}{s \seteq t \oplus r}$.
\item Execute $(I, \pi_\hb) \la \Pro_\hb(r, x, \wit)$.

\item Output $\crs \seteq (\crs_\bg, s)$, $\pi \seteq
(I, \pi_\hb, \com, \{\theta^i\}_{i\in I}, \{\op^{(i,j)}\}_{(i,j)\in I
\times [k]})$ and the verifier register $\qreg{V}$.
\end{enumerate}

Clearly, $\Hyb^\zk_0 \equiv \Hyb^\zk_1$ as the outputs are identically
distributed.

$\underline{\Hyb^\zk_2}:$ Here, the EPRs for the unopened indices
$i \in [\ell] \setminus I$ are measured in a random basis.

\begin{enumerate}
\item $\crs_\bg \la \Setup_\bg(1^\secp, 1^{k\ell})$.
$\textcolor{black}{r \la \bit^\ell}$.

\item $(\com, \theta, \{\op^{(i,j)}\}_{(i,j)\in [\ell] \times [k]}) \la \GenBits_\bg(\crs_\bg)$.

\item For each $i \in [\ell] \setminus I$, sample
\textcolor{red}{$\widetilde{\theta}^i \la \bit^k$}.
\item For each $i\in[\ell]$, compute $t_i$ as in $\Pi$ from register
$\qreg{P}$, using 
$\theta^i$ if $i \in I$ and \textcolor{red}{using $\widetilde{\theta}^i$ if $i \notin
I$}. Let $t = t_1 \| \ldots \| t_\ell$.
\item Compute $s \seteq t \oplus r$.
\item Execute $(I, \pi_\hb) \la \Pro_\hb(r, x, \wit)$.

\item Output $\crs \seteq (\crs_\bg, s)$, $\pi \seteq
(I, \pi_\hb, \com, \{\theta^i\}_{i\in I}, \{\op^{(i,j)}\}_{(i,j)\in I
\times [k]})$ and the verifier register $\qreg{V}$.
\end{enumerate}

Observe that $\Hyb^\zk_1 \approx_c \Hyb^\zk_2$ due to the fact that the values
$\{\theta^i\}_{i\in [\ell] \setminus I}$ are pseudo-random, by the
computational hiding property of $\HBG$.

$\underline{\Hyb^\zk_3}:$ Here, $s$ is sampled uniformly for the unopened
indices $i \in [\ell] \setminus I$.

\begin{enumerate}
\item $\crs_\bg \la \Setup_\bg(1^\secp, 1^{k\ell})$.
$\textcolor{black}{r \la \bit^\ell}$.

\item $(\com, \theta, \{\op^{(i,j)}\}_{(i,j)\in [\ell] \times [k]}) \la \GenBits_\bg(\crs_\bg)$.

\item For each $i \in [\ell] \setminus I$, compute
\textcolor{black}{$\widetilde{\theta}^i \la \bit^k$}.

\item For each $i\in[\ell]$, compute $t_i$ as in $\Pi$ from register
$\qreg{P}$, using 
$\theta^i$ if $i \in I$ and \textcolor{red}{using $\widetilde{\theta}^i$ if $i \notin
I$}. Let $t = t_1 \| \ldots \| t_\ell$.

\item Compute $s_I \seteq t_I \oplus r_I$.
For $i \in [\ell] \setminus I$, sample \textcolor{red}{$s_i \la \bit$}.
\item Execute $(I, \pi_\hb) \la \Pro_\hb(r, x, \wit)$.

\item Output $\crs \seteq (\crs_\bg, s)$, $\pi \seteq
(I, \pi_\hb, \com, \{\theta^i\}_{i\in I}, \{\op^{(i,j)}\}_{(i,j)\in I
\times [k]})$ and the verifier register $\qreg{V}$.
\end{enumerate}

Notice that the values $\{\ttheta^i\}_{i\in [\ell] \setminus I}$ are
uniform and independent of the transcript. Hence, the values
$t_i$ for $i \in [\ell ]\setminus I$ are uniform and independent
of the output, by the no-signaling principle. Consequently, we have
that $\Hyb^\zk_2 \equiv \Hyb^\zk_3$.

$\underline{\Hyb^\zk_4}:$ Finally, the values $I, r_I, \pi_\hb$ are
generated using the hidden bits simulator $\Sim_\hb(x)$.

\begin{enumerate}
\item $\crs_\bg \la \Setup_\bg(1^\secp, 1^{k\ell})$.
$\textcolor{red}{(I, r_I, \pi_\hb) \la \Sim_\hb(x)}$.

\item $(\com, \theta, \{\op^{(i,j)}\}_{(i,j)\in [\ell] \times [k]}) \la \GenBits_\bg(\crs_\bg)$.

\item For each $i \in [\ell] \setminus I$, compute
\textcolor{black}{$\widetilde{\theta}^i \la \bit^k$}.

\item For each $i\in[\ell]$, compute $t_i$ as in $\Pi$ from register
$\qreg{P}$, using 
$\theta^i$ if $i \in I$ and \textcolor{red}{using $\widetilde{\theta}^i$ if $i \notin
I$}. Let $t = t_1 \| \ldots \| t_\ell$.

\item Compute $s_I \seteq t_I \oplus r_I$.
For $i \in [\ell] \setminus I$, sample $s_i \la \bit$.

\item Output $\crs \seteq (\crs_\bg, s)$, $\pi \seteq
(I, \pi_\hb, \com, \{\theta^i\}_{i\in I}, \{\op^{(i,j)}\}_{(i,j)\in I
\times [k]})$ and the verifier register $\qreg{V}$.
\end{enumerate}

Observe that $\Hyb^\zk_3 \approx_s \Hyb^\zk_4$ by the statistical
zero-knowledge property of the hidden-bits NIZK $\Pi_\hb$. This gives
us that $\Hyb^\zk_0 \approx_c \Hyb^\zk_4$.
Furthermore, $\Hyb^\zk_4$ can be simulated entirely with the statement
$x$, without knowledge of the witness $\wit$. Hence, computational
zero-knowledge follows.

\paragraph{Certified-Everlasting Zero-Knowledge:} For every $x \in
\cL$ and $\wit \in \RL(x)$, consider the following hybrids:

$\underline{\Hyb^\cezk_0}:$ This is the real experiment, which proceeds as
follows:

\begin{enumerate}
\item $\crs_\bg \la \Setup_\bg(1^\secp, 1^{k\ell})$. $s \la \bit^\ell$.

\item $(\com, \theta, \{\op^{(i,j)}\}_{(i,j)\in [\ell] \times [k]}) \la \GenBits_\bg(\crs_\bg)$.

\item For each $i\in[\ell]$, compute $t_i$ as in $\Pi$ from
register $\qreg{P}$ using $\theta$ as the measurement basis. Let $t = t_1 \| \ldots \| t_\ell$.

\item Compute $r \seteq t \oplus s$.
\item Execute $(I, \pi_\hb) \la \Pro_\hb(r, x, \wit)$.

\item Compute $\crs \seteq (\crs_\bg, s)$, $\pi \seteq (I, \pi_\hb,
\com, \theta^I = \{\theta^i\}_{i\in I}, \{\op^{(i,j)}\}_{(i,j)\in I \times [k]})$.

\item Execute $(\cert^*, \rho^*) \la \Ver^*(\crs, \qreg{V}, x, \pi)$.

\item If $\DelVrfy(\cert^*, \qP) = \top$, where $\qP \seteq (y^1 \|
\ldots \| y^\ell, \theta, I, \qreg{P})$ then set $\rho \seteq
\rho^*$. Else, set $\rho \seteq \bot$.

\item Output $\rho$.
\end{enumerate}

$\underline{\Hyb^\cezk_1}:$ This is similar to $\Hyb_0^\cezk$, except
that the bits $\{r_i\}_{i\in [\ell] \setminus I}$ are sampled
uniformly at random.

\begin{enumerate}
\item $\crs_\bg \la \Setup_\bg(1^\secp, 1^{k\ell})$. $s \la \bit^\ell$.

\item $(\com, \theta, \{\op^{(i,j)}\}_{(i,j)\in [\ell] \times [k]}) \la \GenBits_\bg(\crs_\bg)$.

\item For each $i\in[\ell]$, compute $t_i$ as in $\Pi$ from
register $\qreg{P}$ using $\theta$ as the measurement basis. Let $t = t_1 \| \ldots \| t_\ell$.

\item Compute $r_i \seteq t_i \oplus s_i$ for each $i \in I$, and
\textcolor{red}{sample $r_i \la \bit$ for each $i \in [\ell]
\setminus I$}.
\item Execute $(I, \pi_\hb) \la \Pro_\hb(r, x, \wit)$.

\item Compute $\crs \seteq (\crs_\bg, s)$, $\pi \seteq (I, \pi_\hb,
\com, \theta^I = \{\theta^i\}_{i\in I}, \{\op^{(i,j)}\}_{(i,j)\in I \times [k]})$.

\item Execute $(\cert^*, \rho^*) \la \Ver^*(\crs, \qreg{V}, x, \pi)$.

\item If $\DelVrfy(\cert^*, \qP) = \top$, where $\qP \seteq (y^1 \|
\ldots \| y^\ell, \theta, I, \qreg{P})$ then set $\rho \seteq
\rho^*$. Else, set $\rho \seteq \bot$.

\item Output $\rho$.
\end{enumerate}

We will now argue that $\Hyb_1^\cezk \approx_s \Hyb_0^\cezk$ by the
certified-deletion property of BB84 states (Section \ref{sec:bb84})
due to \cite{C:BarKhu23}. However, we find that the structure of
Theorem \ref{thm:cert-del} is incompatible with our setting. Hence, we
look into the proof of the main theorem of \cite{C:BarKhu23} and prove
the above claim as follows.

For each $j \in [\ell - |I|]$, consider the hybrid
$\Hyb_0^{\cezk, j}$ where the first $j$ bits of $r$ 
that correspond to indices in $[\ell] \setminus I$ are replaced with truly random
bits. Let $\Hyb_0^{\cezk, 0} \seteq \Hyb_0^\cezk$ and observe that
$\Hyb_0^{\cezk, \ell - I} \equiv \Hyb_1^\cezk$.

Consider now some arbitrary $j^* \in [\ell - |I|]$. We argue that
$\Hyb_0^{\cezk, j^* -1} \approx_s \Hyb_0^{\cezk, j^*}$, using the
following series of experiments:

$\underline{\Exp_1(b):}$

\begin{itemize}
\item Compute $(\crs, \qreg{V}, x, \pi)$ in the same way as in
$\Hyb^{\cezk, j^*-1}_0$, except using $b$ instead of $r[i^*]$.

\item If $t[i^*] = b \oplus s[i^*]$, then proceed as in 
$\Hyb^{\cezk, j^*-1}_0$, i.e., by running $\Ver^*$ and determining the output
$\rho$. Otherwise, output $\bot$.
\end{itemize}

$\underline{\Exp_2(b):}$

\begin{itemize}
\item Compute $s, r$ as in $\Hyb_0^{\cezk, j^*-1}$, except that $b$ is
used in place of $r[i^*]$. Then, use these values to compute $(\crs,
x, \pi)$ and prepare the EPR pairs $\qreg{P}, \qreg{V}$.

\item Run $\Ver^*(\crs, \qreg{V}, x, \pi)$ to get $(\cert^*, \rho^*)$.

\item Measure the register $\qreg{P^{i^*}}$ as in $\Hyb_0^{\cezk,
j^*-1}$ to compute $t[i^*]$.

\item If $t[i^*] = b \oplus s[i^*]$, then proceed as in 
$\Hyb^{\cezk, j^*-1}_0$, i.e., by running $\Ver^*$ and determining the output
$\rho$. Otherwise, output $\bot$.
\end{itemize}

Observe that if the trace distance between $\Hyb_0^{\cezk, j^*-1}$ and
$\Hyb_0^{\cezk, j^*}$ is non-negligible, then $\Adv(\Exp_1) = \TD(\Exp_1(0), \Exp_1(1)) =
\nnegl$. Moreover, notice that $\Adv(\Exp_2) = \Adv(\Exp_1)$, as the
measurements on different registers commute.

Next, consider the following experiment:\\

$\underline{\Exp_2'(b):}$ This is the same as $\Exp_2(b)$, except that
$t[i^*]$ is determined by measuring $\qreg{P^{i^*}}$ in a basis
$\ttheta^{i^*}$ sampled as $\ttheta^{i^*} \la \bit^k$, instead of the
basis $\theta^{i^*}$ specified by the $\HBG$ commitment $\com$.\\

It is now easy to see from Claim 3.3 of \cite{C:BarKhu23}
(specifically, the ePrint version \cite{EPRINT:BarKhu22}) that
$\Adv(\Exp_2') = \negl$. Moreover, it can be seen from Claim 3.4 of
\cite{C:BarKhu23} (specifically, the ePrint version
\cite{EPRINT:BarKhu22}) that $\Adv(\Exp_2) = \negl$, as otherwise, there
exists an efficient reduction that distinguishes between $\ttheta^{i^*}$ and
$\theta^{i^*}$, thereby breaking the computational
hiding property of the hidden bits generator $\HBG$. Consequently, we
arrive at a contradiction, proving that $\Hyb_0^{\cezk, j^*-1} \approx_s
\Hyb_0^{\cezk, j^*}$. Then, we have from a hybrid argument that
$\Hyb_0^\cezk \approx_s \Hyb_1^\cezk$.
\\


$\underline{\Hyb^\cezk_2}:$ This is similar to $\Hyb_1^\cezk$, except
that $r$ is sampled before $s$.

\begin{enumerate}
\item $\crs_\bg \la \Setup_\bg(1^\secp, 1^{k\ell})$.
    $\textcolor{red}{r \la \bit^\ell}$.

\item $(\com, \theta, \{\op^{(i,j)}\}_{(i,j)\in [\ell] \times [k]}) \la \GenBits_\bg(\crs_\bg)$.

\item For each $i\in[\ell]$, compute $t_i$ as in $\Pi$ from
register $\qreg{P}$ using $\theta$ as the measurement basis. Let $t = t_1 \| \ldots \| t_\ell$.

\item \textcolor{red}{Compute $s_i \seteq t_i \oplus r_i$ for each $i \in I$, and
sample $s_i \la \bit$ for each $i \in [\ell]
\setminus I$}.
\item Execute $(I, \pi_\hb) \la \Pro_\hb(r, x, \wit)$.

\item Compute $\crs \seteq (\crs_\bg, s)$, $\pi \seteq (I, \pi_\hb,
\com, \theta^I = \{\theta^i\}_{i\in I}, \{\op^{(i,j)}\}_{(i,j)\in I \times [k]})$.

\item Execute $(\cert^*, \rho^*) \la \Ver^*(\crs, \qreg{V}, x, \pi)$.

\item If $\DelVrfy(\cert^*, \qP) = \top$, where $\qP \seteq (y^1 \|
\ldots \| y^\ell, \theta, I, \qreg{P})$ then set $\rho \seteq
\rho^*$. Else, set $\rho \seteq \bot$.

\item Output $\rho$.
\end{enumerate}

Clearly, we have $\Hyb_1^\cezk \equiv \Hyb_2^\cezk$, as the values are
all identically distributed.

$\underline{\Hyb^\cezk_3}:$ Finally, the values $I, r_I, \pi_\hb$ are
generated using the hidden bits simulator $\Sim_\hb(x)$.

\begin{enumerate}
\item $\crs_\bg \la \Setup_\bg(1^\secp, 1^{k\ell})$.
$\textcolor{red}{(I, r_I, \pi_\hb) \la \Sim_\hb(x)}$.

\item $(\com, \theta, \{\op^{(i,j)}\}_{(i,j)\in [\ell] \times [k]}) \la \GenBits_\bg(\crs_\bg)$.

\item For each $i\in[\ell]$, compute $t_i$ as in $\Pi$ from
register $\qreg{P}$ using $\theta$ as the measurement basis. Let $t = t_1 \| \ldots \| t_\ell$.

\item \textcolor{black}{Compute $s_i \seteq t_i \oplus r_i$ for each $i \in I$, and
sample $s_i \la \bit$ for each $i \in [\ell]
\setminus I$}.

\item Compute $\crs \seteq (\crs_\bg, s)$, $\pi \seteq (I, \pi_\hb,
\com, \theta^I = \{\theta^i\}_{i\in I}, \{\op^{(i,j)}\}_{(i,j)\in I \times [k]})$.

\item Execute $(\cert^*, \rho^*) \la \Ver^*(\crs, \qreg{V}, x, \pi)$.

\item If $\DelVrfy(\cert^*, \qP) = \top$, where $\qP \seteq (y^1 \|
\ldots \| y^\ell, \theta, I, \qreg{P})$ then set $\rho \seteq
\rho^*$. Else, set $\rho \seteq \bot$.

\item Output $\rho$.
\end{enumerate}

Observe that $\Hyb_3^\cezk \approx_s \Hyb_2^\cezk$ holds because of
the statistical zero-knowledge property of $\Pi_\hb$. Moreover, the
output of $\Hyb_3^\cezk$ can be generated using only $x$ and without
the witness $\wit$. Hence, certified-everlasting zero-knowledge holds.

\paragraph{Statistical Soundness:} Let $\Pro^*$ be an unbounded
malicious prover that breaks soundness with some $\mu(\secp)$
probability:

\[
\Pr\left[
\begin{array}{ll}
\Ver(\crs, \qreg{V}, x, \pi^*) = \top\\
\land\\
x \notin \cL
\end{array}
 \ :
\begin{array}{rl}
 &\crs \gets \Setup(1^\secp)\\
 & (x, \pi^*)\la \Pro^*(\crs, \qreg{P})
\end{array}
\right]
\ge \mu(\secp)
\]
where $\pi^* = (I, \pi_\hb, \com^*, \{\theta^i\}_{i\in I},
 \{\op^{(i,j)}\}_{(i,j)\in I\times[k]})$.

Consider now a verifier $\widetilde{\Ver}_0$ that behaves similarly to
$\Ver$, except that it measures all the registers $\qreg{V^1} \ldots
\qreg{V^\ell}$ in the computational basis. It is easy to see that if $\Pro^*$
convinces $\Ver$ with probability $\mu(\secp)$, then it also convinces
$\widetilde{\Ver}_0$ with probability $\mu(\secp)$. This is because
only the computational basis positions are used for verification by
$\Ver$.

Next, consider a verifier $\tVer$ that behaves similar to $\tVer_0$
except that the registers $\qreg{V^1}, \ldots, \qreg{V^\ell}$ are
measured in the computational basis before $\Pro^*$ is initialized.
Since operations on different registers commute, $\Pro^*$ also
convinces $\tVer$ with probability $\mu(\secp)$:

\[
\Pr\left[
\begin{array}{ll}
\tVer(\crs, \qreg{V}, x, \pi^*) = \top\\
\land\\
x \notin \cL
\end{array}
 \ :
\begin{array}{rl}
 &\crs \gets \Setup(1^\secp)\\
 & (x, \pi^*)\la \Pro^*(\crs, \qreg{P})
\end{array}
\right]
\ge \mu(\secp)
\]
where $\pi^* = (I, \pi_\hb, \com^*, \{\theta^i\}_{i\in I},
 \{\op^{(i,j)}\}_{(i,j)\in I\times[k]})$.

Now, let $y^1, \ldots, y^\ell$ be the values measured by $\tVer$
corresponding to registers $\qreg{V^1}, \ldots, \qreg{V^\ell}$.
Consequently, once the basis choices $\theta^1, \ldots, \theta^\ell$
are specified, the values $t_1, \ldots, t_\ell$ are fixed as they are
computed as the parity of the computational basis values. Observe now
that by statistical binding of $\HBG$, except with negligible
probability, we have that $r_I$ computed by $\tVer$ satisfies
$r_I = t_I \oplus s_I$ where $t_i =
\bigoplus\limits_{j:\ttheta^i_j = 0}y_j^i$ for each $i \in I$ and
$\ttheta^1 \| \ldots \| \ttheta^\ell \seteq \Open(1^{k\ell}, \crs_\bg, \com^*)$. Moreover,
succinctness of $\HBG$ guarantees that $\com^* \in \cC$. Hence, we have
the following, where $\op^{I, [k]}$ denotes $\{\op^{(i,j)}\}_{(i,j)\in
I \times [k]}$:

\begin{align*}
v(\secp)\seteq \Pr\left[
\begin{array}{ll}
\Ver_\hb(I, r_I, x, \pi_\hb) = \top\\
\land \;\; x \notin \cL\\
\land \;\; \com^* \in \cC
\end{array}
 \ :
\begin{array}{rl}
&\crs = (\crs_\bg, s) \gets \Setup(1^\secp)\\
& (x, \pi^*)\la \Pro^*(\crs, \qreg{P})\\
& \ttheta^1 \| \ldots \| \ttheta^\ell \seteq \Open(1^{k\ell},
\crs_\bg, \com^*)\\
& \forall i \in [\ell] : t_i \seteq \bigoplus_{j:\widetilde{\theta}^i_j = 0} y_j^i \\
& r = t \oplus s
\end{array}
\right]
\\
\ge \mu(\secp) - \negl
\end{align*}
where $\pi^* = (I, \pi_\hb, \com^*, \{\theta^i\}_{i\in I},
\{\op^{(i,j)}\}_{(i,j)\in I\times[k]})$.

Next, we define a similar probability for any fixed $\com \in \cC$:

\begin{align*}
v_\com(\secp)&\seteq \Pr\left[
\begin{array}{ll}
\Ver_\hb(I, r_I, x, \pi_\hb) = \top\\
\land \;\; x \notin \cL\\
\land \;\; \com^* = \com
\end{array}
 \ :
\begin{array}{rl}
&\crs = (\crs_\bg, s) \gets \Setup(1^\secp)\\
& (x, \pi^*)\la \Pro^*(\crs, \qreg{P})\\
& \{\ttheta^i\}_{i\in[\ell]} \seteq \Open(1^{k\ell},
\crs_\bg, \com^*)\\
& \forall i \in [\ell] : t_i \seteq \bigoplus_{j:\widetilde{\theta}^i_j = 0} y_j^i \\
& r = t \oplus s
\end{array}
\right]\\
&\le \Pr\left[
\begin{array}{ll}
\Ver_\hb(I, r_I, x, \pi_\hb) = \top\\
\land\\
 x \notin \cL
\end{array}
 \ :
\begin{array}{rl}
&\crs = (\crs_\bg, s) \gets \Setup(1^\secp)\\
& \ttheta^1 \| \ldots \| \ttheta^\ell \seteq \Open(1^{k\ell},
\crs_\bg, \com)\\
& \forall i \in [\ell] : t_i \seteq \bigoplus_{j:\widetilde{\theta}^i_j = 0} y_j^i \\
& r = t \oplus s\\
& (x, I, \pi_\hb) \la \widehat{\Pro}_{\crs_\bg, \com}(r)
\end{array}
\right]
\\
& \le 2^{-q(\secp, n)}\cdot \negl
\end{align*}
where $\pi^* = (I, \pi_\hb, \com^*, \{\theta^i\}_{i\in I},
\{\op^{(i,j)}\}_{(i,j)\in I\times[k]})$.

The first inequality follows because $r$ is already fixed by $\com$
and the second follows from the soundness of the hidden bits NIZK by
viewing $\widehat{\Pro}_{\crs_\bg, \com}(r)$ as a malicious hidden
bits model prover and from the fact that $r$ is uniformly random
(because $s$ is uniform).

Next, we set $q(\secp, n)^{1-\delta} = k \cdot k' \cdot
p(\secp)$ so that:

$$(k\ell)^\delta \cdot p(\secp) = (k\cdot k' \cdot q(\secp, n))^\delta
\cdot p(\secp) \le q^\delta(\secp, n) \cdot q^{1-\delta}(\secp, n)$$

This ensures that $|\cC|\cdot v_\com(\secp) \le \negl$. Therefore,
a union bound gives us the following:

$$\mu(\secp) - \negl \le v(\secp) \le \sum_{\com \in \cC}v_\com(\secp)
\le \negl$$

This proves that $\mu(\secp) \le \negl$, ensuring soundness.
\end{proof}

\paragraph{Acknowledgement.}

We would like to thank Fang Song for helpful discussions. This work
is supported by the US National Science Foundation (NSF) via Fang
Song's Career Award (CCF-2054758).

\bibliographystyle{alpha}
\bibliography{bib/abbrev3,references}

\end{document}